\newtheorem{fact}{Fact}
\newtheorem{corollary}{Corollary}
\newtheorem{theorem}{Theorem}
\newtheorem{lemma}{Lemma}
\newtheorem{definition}{Definition}
\newtheorem{proposition}{Proposition}
\newcommand{\remove}[1]{}
\newcommand{\tw}{\mbox{\bf tw}}
\renewcommand{\int}{{\bf int}}
\renewcommand{\L}[1]{\Lambda#1}
\renewcommand{\leq}{\leqslant}
\renewcommand{\geq}{\geqslant}
\renewcommand{\epsilon}{\varepsilon}
\newtheorem{rgl}  {Rule}
\newcommand{\mc}{\mathcal}
\renewcommand{\G}{\ensuremath{\mc{G}}\xspace}
\newcommand{\F}{\ensuremath{\mc{F}_t}\xspace}
\newcommand{\B}{\ensuremath{\mc{B}_t}\xspace}
\newcommand{\Epi}[2]{\ensuremath{\mc{#1E}_{#2}}\xspace}
\newcommand{\E}{\Epi{}{}}
\newcommand{\Enospace}{\Epi{}{}}
\renewcommand{\C}{\ensuremath{\mc{C}^{\E}}\xspace}
\renewcommand{\f}{\ensuremath{f^{\E}}\xspace}
\renewcommand{\L}{\ensuremath{L^{\E}}\xspace}
\newcommand{\Ebar}{{\Epi{\bar}{g}}\xspace}
\newcommand{\fbar}{\ensuremath{\bar f^{\E}_{g}}\xspace}
\newcommand{\eq}[1]{\ensuremath{\sim_{#1,t}}\xspace}
\newcommand{\eqG}[1]{\ensuremath{\sim_{#1,\G,t}}\xspace}
\newcommand{\eqs}[1]{\ensuremath{\eq{#1}^*}\xspace}
\newcommand{\eqsG}[1]{\ensuremath{\eqG{#1}^*}\xspace}
\newcommand{\neqG}[1]{\ensuremath{\nsim_{#1,\G,t}}\xspace}
\newcommand{\D}[1]{\ensuremath{\Delta_{#1,t}}\xspace}
\newcommand{\YES}{\textsc{Yes}\xspace}
\newcommand{\NO}{\textsc{No}\xspace}
\newcommand{\FPack}{\textsc{$\cal F$-Packing}\xspace}
\newcommand{\rFPack}{\textsc{$\ell$-$\cal F$-Packing}\xspace}
\renewcommand{\tw}{{\mathbf{tw}}}
\begin{document}

\title{Explicit linear kernels for packing problems\thanks{{Emails:  Valentin Garnero: {\sf valentin.garnero@inria.fr},
Christophe Paul: {\sf paul@lirmm.fr},
Ignasi Sau: {\sf sau@lirmm.fr},
Dimitrios M. Thilikos: {\sf sedthilk@thilikos.info}}.}}

\author{Valentin Garnero\thanks{AlGCo project-team, CNRS, LIRMM, Universit\'e de Montpellier, Montpellier, France.}\and Christophe Paul$^{\dagger}$\and Ignasi Sau$^{\dagger}$\and Dimitrios  M. Thilikos$^{\dagger}$\thanks{Department of Mathematics, National and Kapodistrian University of Athens, Athens, Greece.}}

\date{\empty}
 \maketitle

\begin{abstract}
\noindent During the last years, several algorithmic meta-theorems have appeared (Bodlaender \emph{et al}. [FOCS 2009], Fomin \emph{et al}. [SODA 2010], Kim \emph{et al}. [ICALP 2013]) guaranteeing the {\sl existence} of linear kernels on sparse graphs for problems satisfying some generic conditions. The drawback of such general results is that it is usually not clear how to derive from them {\sl constructive} kernels with reasonably low {\sl explicit} constants. To fill this gap, we recently presented [STACS 2014] a framework to obtain explicit linear kernels for some families of problems whose solutions can be certified by a subset of {\sl vertices}. In this article we enhance our framework to deal with {\sl packing} problems, that is, problems whose solutions can be certified by collections of {\sl subgraphs} of the input graph satisfying certain properties. ${\mathcal F}$-\textsc{Packing} is a typical example: for a family ${\mathcal F}$ of connected graphs  that we assume to contain at least one planar graph, the task is to decide whether a graph $G$ contains $k$ vertex-disjoint subgraphs such that each of them contains a graph in ${\mathcal F}$ as a minor. We provide explicit linear kernels on sparse graphs for the following two orthogonal generalizations of ${\mathcal F}$-\textsc{Packing}: for an integer $\ell \geq 1$, one aims at finding either minor-models that are pairwise at distance at least $\ell$ in $G$ (\textsc{$\ell$-$\mc{F}$-Packing}), or such that each vertex in $G$ belongs to at most $\ell$ minors-models (\textsc{$\mc{F}$-Packing with $\ell$-Membership}). Finally, we also provide linear kernels for the versions of these problems where one wants to pack {\sl subgraphs} instead of minors.

\medskip
\end{abstract}
\noindent\textbf{Keywords:}
Parameterized complexity; linear kernels; packing problems; dynamic
programming; protrusion replacement; graph minors.

\setcounter{footnote}{0}
\section{Introduction}
\label{sec:intro}

%\textbf{Motivation.} Parameterized complexity deals with problems whose instances $I$ come equipped  with an additional integer parameter $k$, and the objective is to obtain algorithms whose running time is of the form $f(k) \cdot \poly(|I|)$, where $f$ is some computable function (see~\cite{FlGr06,DF99} for an introduction to the field). We will be only concerned with problems defined on graphs.

\textbf{Motivation.} A fundamental notion in parameterized complexity (see~\cite{CyganFKLMPPS15} for a recent textbook) is that of \emph{kernelization}, which asks for the existence of polynomial-time preprocessing algorithms producing equivalent instances whose size depends exclusively  on the parameter $k$. Finding kernels of size polynomial or linear in $k$ (called \emph{linear kernels}) is one of the major goals of this area. A pioneering work in this direction was the linear kernel of  Alber \emph{et al}.~\cite{AFN04} for \textsc{Dominating Set} on planar graphs, generalized by Guo and Niedermeier~\cite{GuNi07} to a family of problems on planar graphs. Several algorithmic meta-theorems on kernelization have appeared in the last years, starting with the result of Bodlaender \emph{et al}.~\cite{BFL+09} on graphs of bounded genus. It was followed-up by similar results on larger sparse graph classes, such as graphs excluding a minor~\cite{FLST10} or a topological minor~\cite{KLP+12}.

%See also~\cite{GHO+13,GSZ13} for some recent meta-kernelization results considering structural %parameters.

The above results guarantee the {\sl existence} of linear kernels on sparse graph classes for problems satisfying some generic conditions, but it is hard to derive from them {\sl constructive} kernels with {\sl explicit} constants. We recently made in~\cite{KviaDP} a significant step toward a fully constructive meta-kernelization theory on sparse graphs with explicit constants. In a nutshell, the main idea is to substitute the algorithmic power of CMSO logic that was used in~\cite{BFL+09,FLST10,KLP+12} with that of dynamic programming (DP for short) on graphs of bounded decomposability (i.e., bounded treewidth). We refer the reader to the introduction of~\cite{KviaDP} for more details. Our approach provides a DP framework able to construct linear kernels for families of problems on sparse graphs whose solutions can be certified by a subset of {\sl vertices} of the input graph,  such as $r$-\textsc{Dominating Set} or \textsc{Planar-$\mathcal{F}$-Deletion}.

%The main reason behind this non-constructibility is that the proofs rely on a property of problems called \emph{Finite Integer Index} (FII) that, roughly speaking, allows to replace large ``protrusions'' (i.e., large  subgraphs  with small boundary to the rest of the graph) with ``equivalent'' subgraphs of constant size. This substitution procedure is known as {\em protrusion replacer}, and while
%its {\sl existence} has been proved, so far, there is no generic  way to {\sl construct} it. Using the technology developed in~\cite{BFL+09}, there are cases where protrusion replacements
%can become constructive given the expressibility
%of the problem in Counting Monadic Second Order (CMSO) logic. This approach is essentially based on extensions of
%Courcelle's theorem~\cite{Cou90} that, even when they offer constructibility, it is hard to extract from them any {\sl explicit
%constant} that upper-bounds the size of the derived kernel.\vspace{.25cm}
%
%%\medskip

%\noindent \textbf{Results and techniques.}

\vspace{.2cm}
\noindent\textbf{Our contribution.} In this article we make one more step in the direction of a fully constructive meta-kernelization theory on sparse graphs, by enhancing the existing framework~\cite{KviaDP} in order to deal with {\sl packing} problems. These are problems whose solutions can be certified by collections of {\sl subgraphs} of the input graph satisfying certain properties. We call these problems \emph{packing-certifiable}, as opposed to \emph{vertex-certifiable} ones. For instance, deciding whether a graph $G$ contains at least $k$ vertex-disjoint cycles is a typical packing-certifiable problem. This problem, called \textsc{Cycle Packing}, is {{\sf FPT}} as it is minor-closed, but it is unlikely to admit polynomial kernels  on general graphs~\cite{BodlaenderTY11}.

As an illustrative example, for a family of connected graphs ${\mathcal F}$ containing at least one planar graph, we provide a linear kernel on sparse graphs for the \textsc{${\mathcal F}$-Packing} problem\footnote{We would like to clarify here that in our original conference submission of~\cite{KviaDP} we claimed, among other results,  a linear kernel for \textsc{${\mathcal F}$-Packing} on sparse graphs. Unfortunately, while preparing the camera-ready version, we realized that there was a bug in one of the proofs and we had to remove that result from the paper. It turned out that for fixing that bug, several new ideas and a generalization of the original framework seemed to be necessary; this was  the starting point of the results presented in the current article.}: decide whether a graph $G$ contains at least $k$ vertex-disjoint subgraphs such that each of them contains a graph in ${\mathcal F}$ as a minor, parameterized by $k$. We provide linear kernels as well for the following two {\sl orthogonal generalizations} of ${\mathcal F}$-\textsc{Packing}: for an integer $\ell \geq 1$, one aims at finding either minor-models that are pairwise at distance at least $\ell$ in $G$ (\textsc{$\ell$-$\mc{F}$-Packing}), or such that each vertex in $G$ belongs to at most $\ell$ minors-models (\textsc{$\mc{F}$-Packing with $\ell$-Membership}). While only the {\sl existence} of linear kernels for \textsc{${\mathcal F}$-Packing} was known~\cite{BFL+09}, to the best of our knowledge no kernels were known for \textsc{$\ell$-$\mc{F}$-Packing} and \textsc{$\mc{F}$-Packing with $\ell$-Membership}, except for \textsc{$\ell$-$\mc{F}$-Packing} when $\mathcal{F}$ consists only of a triangle and the maximum degree is also considered as a parameter~\cite{AtminasKR14}. We would like to note that the kernels for \textsc{${\mathcal F}$-Packing} and for \textsc{$\mc{F}$-Packing with $\ell$-Membership} apply to minor-free graphs, while those for \textsc{$\ell$-$\mc{F}$-Packing} for $\ell \geq 2$ apply to the smaller class of apex-minor-free graphs.%, as the problem is contraction-closed but not minor-closed.
%\ig{cite Atminas \emph{et al}.~\cite{AtminasKR14} about packing scattered cycles}

We also provide linear kernels for the versions of the above problems where one wants to pack {\sl subgraphs} instead of minors (as one could expect, the kernels for subgraphs are considerably simpler than those for minors). We call the respective problems \textsc{$\ell$-${\cal F}$-Subgraph-Packing} and \textsc{$\mc{F}$-Subgraph-Packing with $\ell$-Membership}. While the first problem can be seen as a broad generalization of $\ell$-\textsc{Scattered Set} (see for instance~\cite{BFL+09,KviaDP}), the second one was recently defined by
%Very recently Romero and L{\'{o}}pez{-}Ortiz introduced in~\cite{RomeroL14-WALCOM} the \textsc{$\mathcal{F}$-Subgraph-Packing with $\ell$-Overlap} problem, motivated by the problem of discovering overlapping communities. The goal is to find at least $k$ subgraphs (the communities) in a graph $G$ (the network) where each subgraph is isomorphic to a member of a family of graphs $\mathcal{F}$ (the community models) and each pair of subgraphs overlaps in at most $\ell$ vertices (the shared members).
%An {{\sf FPT}}-algorithm for \textsc{$\mathcal{F}$-Packing with $t$-Overlap} was later given by Romero and  L{\'{o}}pez{-}Ortiz in~\cite{RomeroL14-CSR}.
Fernau~\emph{et al}.~\cite{FernauLR15}, motivated by the problem of discovering overlapping communities (see also~\cite{RomeroL14-WALCOM,RomeroL14-CSR} for related problems about detecting overlapping communities): the parameter $\ell$ bounds the number of communities that a member of a network can belong to. More precisely, the goal is to find in a graph $G$ at least $k$ subgraphs isomorphic to a member of $\mathcal{F}$ such that every vertex in $V(G)$ belongs to at most $\ell$  subgraphs. This
type of overlap was also studied by Fellows \emph{et al}.~\cite{FellowsGKNU11} in the context of graph editing. Fernau~\emph{et al}.~\cite{FernauLR15} proved, in particular, that the \textsc{$\mathcal{F}$-Subgraph-Packing with $\ell$-Membership} problem is {\sc{NP}}-hard for all values of $\ell \geq 1$ when ${\mathcal F}= \{F\}$ and $F$ is an arbitrary connected graph with at least three vertices, but polynomial-time solvable for smaller graphs. Note that \textsc{$\mathcal{F}$-Subgraph-Packing with $\ell$-Membership} generalizes the \textsc{$\mathcal{F}$-Subgraph-Packing} problem, which consists in finding in a graph $G$ at least $k$ vertex-disjoint subgraphs isomorphic to a member of $\mathcal{F}$.  The smallest kernel for the \textsc{$\mathcal{F}$-Subgraph-Packing} problem~\cite{Moser09} has size $O(k^{r-1})$, where $\mathcal{F} = \{F\}$ and $F$ is an arbitrary graph on $r$ vertices. A list of references of kernels for particular cases of the family $\mathcal{F}$ can be found in~\cite{FernauLR15}.
%See also~\cite{GuLe15} for recent results on the inapproximability of \textsc{$\mathcal{F}$-Packing}.
Concerning the kernelization of \textsc{$\mathcal{F}$-Subgraph-Packing with $\ell$-Membership}, Fernau~\emph{et al}.~\cite{FernauLR15}
provided a  kernel on general graphs with $O((r+1)^r k^r)$ vertices, where $r$ is the maximum number of vertices of a graph in $\mathcal{F}$. In this article we improve this result on graphs excluding a fixed graph as a minor, by providing a linear  kernel for \textsc{$\mathcal{F}$-Subgraph-Packing with $\ell$-Membership} when $\mathcal{F}$ is any family of (not necessarily planar) connected graphs.

\vspace{.2cm}
\noindent\textbf{Our techniques: vertex-certifiable vs. packing-certifiable problems}. It appears that packing-certifiable problems are intrinsically more involved than vertex-certifiable ones. This fact is well-known when speaking about {{\sf FPT}}-algorithms on graphs of bounded tree\-width~\cite{CyganNPPRW11,LokshtanovMS11}, but we need to be more precise with what we mean by being ``more involved'' in our setting of obtaining kernels via DP on a tree decomposition of the input graph. Loosely speaking, the framework that we presented in~\cite{KviaDP} and that we need to redefine and extend here, can be summarized as follows. First of all, we
propose a general definition of a problem \emph{encoding} for the tables of DP when solving
parameterized problems on graphs of bounded treewidth. Under this setting, we
provide three general conditions guaranteeing that such an encoding can yield a so-called \emph{protrusion replacer}, which in short is a procedure that replaces large ``protrusions'' (i.e., subgraphs with small treewidth and small boundary) with ``equivalent'' subgraphs of constant size.  Let us be more concrete on these three conditions that such an encoding $\E$ needs to satisfy in order to obtain an explicit linear kernel for a parameterized problem $\Pi$. %\ig{cut sentences}

The first natural condition is that on a graph $G$ without boundary, the optimal size of the objects satisfying the constraints imposed by $\E$ coincides with the optimal size of solutions of $\Pi$ in $G$; in that case we say that $\E$ is a \emph{$\Pi$-encoder}. On the other hand, we need that when performing DP using the encoding $\E$, we can use tables such that the maximum difference among all the values that need to be stored is bounded by a function $g$ of the treewidth; in that case we say that $\E$ is \emph{$g$-confined}. Finally, the third condition requires that $\E$ is ``suitable'' for performing DP, in the sense that the tables at a given node of a tree decomposition can be computed using only the information stored in the tables of its children (as it is the case of practically all natural DP algorithms); in that case we say that $\E$ is \emph{DP-friendly}. These two latter properties exhibit some fundamental differences when dealing with vertex-certifiable or packing-certifiable problems.

 Indeed, as discussed in more detail in Section~\ref{sec:generic}, with an encoding $\E$ we associate a function $f^\E$ that corresponds, roughly speaking, to the maximum size of a partial solution that satisfies the constraints
defined by $\E$. In order for an encoder to be $g$-confined for some function $g(t)$ of the treewidth $t$, for some vertex-certifiable problems such as $r$-\textsc{Scattered Set} (see~\cite{KviaDP}) we need to ``force'' the confinement artificially,
in the sense that we directly discard the entries in the tables whose associated values differ by more than $g(t)$ from the
maximum (or minimum) ones. Fortunately, we can prove that an encoder with this modified function is still DP-friendly.
However, this is not the case for packing-certifiable problems such as \textsc{$\mathcal{F}$-Packing}. Intuitively, the
difference lies on the fact that in a packing-certifiable problem, a solution of size $k$ can contain arbitrarily many
vertices (for instance, if one wants to find $k$ disjoint cycles in an $n$-vertex graph with girth $\Omega(\log n)$)
and so it can as well contain arbitrarily many vertices from any subgraph corresponding to a rooted subtree of a tree
decomposition of the input graph $G$. This possibility prevents us from being able to prove that an encoder is DP-friendly
while still being $g$-confined for some function $g$, as in order to fill in the entries of the tables at a given node, one may need to retrieve information
from the tables of other nodes different from its children.  To circumvent this problem, we introduce another criterion to
discard the entries in the tables of an encoder: we recursively discard the entries of the tables whose associated partial
solutions {\sl induce} partial solutions at some lower node of the rooted tree decomposition that need to be discarded. That
is, if an entry of the table needs to be discarded at some node of a tree decomposition, we {\sl propagate} this information
to all the other nodes.

\vspace{.20cm}
\noindent \textbf{Organization of the paper.}
%For the reader not familiar with the background used in previous work on this topic~\cite{BFL+09,FLST10,KLP+12,KviaDP},
Some basic preliminaries can be found in Section~\ref{sec:prelim}, including graph minors, parameterized problems, (rooted) tree decompositions, boundaried graphs, the canonical equivalence relation $\equiv_{\Pi,t}$ for a problem $\Pi$ and an integer $t$, FII, protrusions, and protrusion decompositions. The reader not familiar with the background used in previous work on this topic may see~\cite{BFL+09,FLST10,KLP+12,KviaDP}. In Section~\ref{sec:generic} we introduce the basic definitions of our framework and present an explicit protrusion replacer for packing-certifiable problems. Since many definitions and proofs in this section are quite similar to the ones we presented in~\cite{KviaDP}, for better readability we moved the proofs of the results marked with `$[\star]$' to Appendix~\ref{ap:framework}.
Before moving to the details of each particular problem, in Section~\ref{sec:applications} we summarize the main ingredients that we use in our applications. The next sections are devoted to showing how to apply our methodology to various families of problems. More precisely, we start in Section~\ref{sec: FPack} with the linear kernel for \textsc{Connected-Planar-$\mc{F}$-Packing}.
This problem is illustrative, as it contains most of the technical ingredients of our approach, and will be generalized later in the two orthogonal directions mentioned above. Namely, in Section~\ref{sec: rFPack} we deal with the variant in which the minor-models are pairwise at distance at least $\ell$, and in Section~\ref{sec: FMemb} with the version in which each vertex can belong to at most $\ell$ minor-models. In Section~\ref{sec: FSub} we adapt the machinery developed for packing minors to packing subgraphs, considering both variants of the problem. For the sake of completeness, each of the considered problems will be redefined in the corresponding section. Finally, Section~\ref{sec:conclusions} concludes the article.

%In this section by sketch  the main ingredients that we use in our applications for obtaining the linear kernels, before going into the details for each problem in the next sections.
%
%, where for the sake of readability, each of the considered problems will be redefined.

%\vspace{-.15cm}
\section{Preliminaries}
\label{sec:prelim}
%\vspace{-.1cm}

%\noindent\textbf{Graphs}.

In  our article graphs are undirected, simple, and without loops. We use standard graph-theoretic notation; see for instance~\cite{Die05}.
%Given a graph $G$, we let $V(G)$ denote its vertex set and $E(G)$ denote its edge set.
We denote by $d_G(v,w)$ the distance in $G$ between two vertices $v$ and $w$ and by $d_G(W_1,W_2) = \min\{ d_G(w_1,w_2) : w_1\in W_1, w_2\in W_2 \}$ the distance between two sets of vertices $W_1$ and $W_2$ of $G$. Given $S \subseteq V(G)$, we denote by $N(S)$ the set of vertices in $V(G) \setminus S$ having at least one neighbor in $S$.

%Graphs are described by strings over an alphabet $\Gamma$; in this way we do not need to define a graph as a couple $(V(G),E(G))$, and we are allowed to incorporate some additional information, such as weights or colors.

%\ig{define parameterized problems}

%\vspace{-.10cm}
\begin{definition} A parameterized graph problem ${\Pi}$ is called \emph{packing-certifiable}
if there exists a language $L^{\Pi}$ (called {\em certifying language for $\Pi$}) defined on pairs $(G,{\cal S})$, where $G$ is a graph and ${\cal S}$ is a collection of subgraphs of $G$, such that $(G,k)$ is a \YES-instance of $\Pi$ if and only if there exists a collection ${\cal S}$ of subgraphs of $G$ with $|{\cal S}| \geq k$ such that $(G,{\cal S}) \in L^{\Pi}$.
\end{definition}

In the above definition, for the sake of generality we do not require the subgraphs in the collection ${\cal S}$ to be pairwise distinct. Also, note that the subclass of packing-certifiable problems where each subgraph in ${\cal S}$ is restricted to consist of a single vertex corresponds to the class of vertex-certifiable problems defined in~\cite{KviaDP}.

For a class of graphs \G, we denote by $\Pi_\G$  the problem $\Pi$  where the instances are restricted to contain graphs belonging to \G. With a packing-certifiable problem we can associate in a natural way an optimization function as follows.

%pThese problems can be considered as parameterized versions of optimization problems, hence we can define the optimization function that maps an instance to the smallest (resp. greatest) measure of an accepted certificate. (Note that these functions define naturally the corresponding optimization problems.)

%\vspace{-.1cm}
\begin{definition}\label{defi:optimization-function}
Given a packing-certifiable parameterized problem $\Pi$,  the \emph{maximization function} $f^{\Pi}: \Gamma^* \rightarrow \mathbb{N}\cup\{-\infty\}$ is defined as
%\vspace{-.15cm}
\begin{align}
f^{\Pi}(G)=\
\left\{\begin{array}{lll}
  & \max\{|{\cal S}| : (G,{\cal S}) \in L^{\Pi}\} & \mbox{, if there exists such an ${\cal S}$ and}\\
  & - \infty                     & \mbox{, otherwise}.
\end{array}\right.
\end{align}%\vspace{-.55cm}
\end{definition}

%Typically, for most concrete problems, one of the parameterized problems defined in Equations \eqref{eq: param pb 1} or \eqref{eq: param pb 2} will be easy (that is, in $P$) and the other one will be hard (that is, \textsc{NP}-hard). \ig{A parameterized problem cannot be NP-hard. What do you mean?} In \autoref{sec: encoder} we describe the general framework for problems of the form $\text{p-}\Pi_{\leq}$, the other version being typically straightforward. We will systematically omit the subscripts $\leq$ or $\geq$, as no confusion may arise.
% \rmq{and with a slit abuse of notation we identify $\Pi_\leq$ with $\Pi$}

%\vspace{.15cm}
%\noindent
%\textbf{Boundaried graphs}. Graphs with a distinguished set of labeled vertices are useful for decomposing and gluing graphs.

%Informally, a boundaried graph is just a graph with a boundary defined by a set of labeled vertices. It is an useful tool to decompose a graph, as the boundary of a boundaried graph can be used to glue it to another boundaried graph. The formal definition follows.
\begin{definition}  \label{defi:boundaried}
A \emph{boundaried graph} is a graph $G$ with a set $B \subseteq V (G)$ of distinguished vertices and an injective labeling $\lambda_G: B \to \mathbb{N}$. The set $B$ is called the \emph{boundary} of $G$ and it is denoted by $\partial(G)$. The set of labels is denoted by $\Lambda(G) = \{\lambda_G(v) : v \in \partial(G) \}$. We say that a boundaried graph is a \emph{$t$-boundaried graph} if $\Lambda(G) \subseteq \{1, \ldots ,t\}$.
\end{definition}

We denote by \B the set of all $t$-boundaried graphs.

%\vspace{-.1cm}
\begin{definition} \label{defi:gluing}
Let $G_1$ and $G_2$ be two boundaried graphs. We denote by $G_1 \oplus G_2$ the graph obtained from $G$ by taking the disjoint union of $G_1$ and $G_2$ and identifying  vertices with the same label in the boundaries of $G_1$ and $G_2$. In $G_1 \oplus G_2$ there is an edge between two labeled vertices if there is an edge between them in $G_1$ or in $G_2$.
\end{definition}

%In the above definition, after identifying vertices with the same label, we may consider
%the resulting graph as a boundaried graph or not, depending on whether we need the
%labels for further gluing operations. \ig{maybe we do not need this paragraph: check}

%We are interested in reducing the size of instances of a parameterized problem. To this aim, we want to replace some boundaried graph with some equivalent and smaller one.

Given $G = G_1 \oplus G_2$ and $G_2'$, we say that $G' = G_1 \oplus G_2'$ is the graph obtained from $G$ by \emph{replacing} $G_2$ with $G_2'$. The following notion was introduced  by Bodlaender \emph{el al}.~\cite{BFL+09}.

%\vspace{.15cm}\noindent
%\textbf{Canonical equivalence}.

%A typical kernelization algorithm is based on the application of a series of reduction rules.
%To prove that these rules are {\sl valid} one  must show that the initial and the reduced instances are equivalent. In our case, the reduction rules will consist of protrusion replacements and the validity of a replacement can be proved thanks to the notion of canonical equivalence, introduced by Bodlaender \emph{el al}.~\cite{BFL+09}.

\begin{definition}  \label{defi:cano}
Let $\Pi$ be a parameterized problem and let $t \in \mathbb{N}$.
Given $G_1,G_2 \in \B$, we say that $G_1 \equiv_{\Pi} G_2$ if $\Lambda(G_1) = \Lambda(G_2)$ and there exists a transposition constant $\D\Pi(G_1,G_2) \in \mathbb{Z}$ such that for every $H \in \B$ and every $k \in \mathbb{Z}$, it holds that $(G_1 \oplus H, k) \in \Pi$ if and only if $(G_2 \oplus H, k+\D\Pi(G_1,G_2)) \in \Pi$.
\end{definition}

%\vspace{-.10cm}
%\noindent
%\textbf{Treewidth}. The treewidth of a graph is a structural parameter capturing how much it resembles topologically to a tree. Its algorithmic importance is certified by the fact that many problems can be solved efficiently via DP on graphs of bounded treewidth~\cite{Cou90}.

%\vspace{-.10cm}
\begin{definition}
A \emph{tree decomposition} of a graph $G$ is a couple $(T,\mc{X} = \{ B_x : x \in V(T) \})$, where $T$ is a tree and such that $\bigcup_{x \in V(T)} B_x = V(G)$, for every edge $\{u,v\} \in E (G)$ there exists $x \in V(T)$ such that $u,v \in B_x$, and
 for every vertex $u \in V(G)$ the set of nodes $\{ x\in V(T) : u \in B_x \}$ induce a subtree of $T$.
The vertices of $T$ are referred to as \emph{nodes} and the sets $B_x$ are called bags.

A \emph{rooted tree decomposition} $(T,\mc{X},r)$ is a tree decomposition with a distinguished node $r$ selected as the \emph{root}. A \emph{nice tree decomposition} $(T,\mc{X},r)$ (see~\cite{Klo94}) is a rooted tree decomposition where $T$ is binary and for each node $x$ with two children $y,z$ it holds $B_x =B_y =B_z$ and for each node $x$ with one child $y$ it holds $B_x =B_y \cup \{u\}$ or $B_x =B_y \setminus \{ u\}$ for some $u \in V(G)$. The \emph{width} of a tree decomposition is the size of a largest bag minus one. The \emph{treewidth} of a graph, denoted by $\tw(G)$, is the smallest width of a tree decomposition of $G$. A \emph{treewidth-modulator} of a graph $G$ is a set $X \subseteq V(G)$ such that $\tw(G-X) \leq t$, for some fixed constant $t$.
\end{definition}

Given a bag $B$ (resp. a node $x$) of a rooted tree decomposition $T$, we denote by $G_B$ (resp. $G_x$), the subgraph induced by the vertices appearing in the subtree of $T$ rooted at the node corresponding to $B$ (resp. the node $x$). We denote by \F the set of all $t$-boundaried graphs that have a rooted tree decomposition of width $t-1$ with all boundary vertices contained in the root-bag. Obviously $\F \subseteq \B$. (Note that graphs can be viewed as 0-boundaried graphs, hence we use a same alphabet $\Gamma$ for describing graphs and boundaried graphs.) %\ig{speak about the alphabet $\Gamma$ before}

% \ig{we have to make sure that we can define \F like this}

%\vspace{.15cm}\noindent
%\textbf{Protrusions}. Loosely speaking, a protrusion of a graph is a subgraph with small treewidth and small boundary. In many applications, it is interesting to find a decomposition of a graph into protrusions in order to be able to apply DP on each protrusion.

\begin{definition} \label{defi:prot}
Let $t,\alpha$ be positive integers. A \emph{$t$-protrusion} $Y$ of a graph $G$ is an induced subgraph of $G$ with $|\partial(Y)| \leq t$ and $ \tw(Y) \leq t-1$, where $\partial(Y)$
%\ig{The notation $\partial(\cdot)$ has already been defined in Definition~\ref{defi:boundaried}}
is the set of vertices of $Y$ having neighbors in $V(G) \setminus V(Y)$.
%boundary of size at most $t$ and treewidth at most $t-1$.
An \emph{$(\alpha,t)$-protrusion decomposition} of a graph $G$ is a partition
    ${\cal P}=Y_{0}\uplus Y_{1}\uplus \cdots \uplus Y_{\ell}$ of $V(G)$ such
    that for every $1\leqslant i\leqslant \ell$, $N(Y_{i})\subseteq Y_{0}$,
    $\max\{\ell, |Y_{0}|\}\leqslant \alpha$, and
     for every $1\leqslant i\leqslant \ell$, $Y_i\cup N(Y_i)$  is a $t$-protrusion of $G$. When $(G,k)$ is the input of a parameterized problem with parameter $k$, we say that an
$(\alpha,t)$-protrusion decomposition of $G$ is \emph{linear} whenever $\alpha =O(k)$.
\end{definition}

%
%\ig{this paragraph has to be possibly shortened} Let us remark that a $t$-protrusion can be viewed as a $t$-boundaried graph with treewidth at most $t-1$. The main difference is that a protrusion is a subgraph whereas a boundaried graph is a whole graph by itself. This means that the vertices on the boundary of a protrusion are also vertices of a an entire graph and, loosely speaking, the gluing is natural. Conversely, vertices on the boundary of a boundaried graph have to be labeled in order to specify how to process the gluing. It follows that by adding a labeling on the boundary of a protrusion, we obtain a boundaried graph.

%Let us remark that a $t$-protrusion can be viewed as a $t$-boundaried graph with treewidth at most $t-1$; just adding some labeling. The main difference is that protrusions are subgraphs, that is, there is a natural way to glue them with the rest of its graph; conversely boundaried graphs are manipulated independently of a context, and this is why we have to specify how to glue them (thanks to a labeling). \todo[color=green!40]{not clear}

We say that a rooted tree decomposition of a protrusion $G$ (resp. a boundaried graph $G$) is \emph{boundaried} if the boundary $\partial(G)$ is contained in the root bag. In the following we always consider boundaried nice tree decompositions of width $t-1$,  which can be computed in polynomial time for fixed $t$~\cite{Klo94, Bod96}.

  %If the protrusion (resp. the boundaried graph) have treewidth $t-1$ then a boundaried nice tree decomposition of width at most $2t-1$  can be computed in polynomial time~\cite{Klo94, Bod96}.

%Note that given a protrusion or a boudaried graph with bounded treewidth, we may assume that its boundary is contained in a specific bag (called the \emph{root}) of a tree decomposition of width at most $2t-1$.\todo{ ! ! ! } In the following, when we consider a tree decomposition of a protrusion or a boundaried graph, we always assume that it is rooted and nice. Such a nice rooted tree decomposition can by computed in polynomial time \cite{Bod96}.

%\vspace{-.20cm}
\section{A framework to replace protrusions for packing problems}
\label{sec:generic}
%\vspace{-.10cm}

In this section we restate and in many cases modify the definitions given in~\cite{KviaDP} in order to deal with packing-certifiable problems; we will point out the differences. As announced in the introduction, missing proofs can be found in Appendix~\ref{ap:framework}.

%Since most of the definitions and proofs are quite similar, we will defer much of the material to Appendix~\ref{ap:framework}.

%\renewcommand{\m}{\ensuremath{m^{\E}}\xspace}
\renewcommand{\f}{\ensuremath{f^{\E}}\xspace}
\renewcommand{\L}{\ensuremath{L^{\E}}\xspace}

%\subsection{Encoders}
%\label{ssec:encoder}

%\paragraph{}

\smallskip\smallskip
\noindent \textbf{Encoders.} In the following we extend the definition of an encoder given in \cite[Definition 3.2]{KviaDP} so that it is able to deal with packing-certifiable problems. The main difference is that now the function $\f$ is incorporated in the definition of an encoder, since as discussed in the introduction we need to consider an additional scenario where the entries of the table are discarded (technically, this is modeled by setting those entries to ``$-\infty$'') and for this we will have to deal with the partial solutions particular to each problem. In the applications of the next sections, we will call such functions  that propagate the entries to be discarded \emph{relevant}. We also need to add a condition about the {\sl computability} of the function \f, so that encoders can indeed be used for performing dynamic programming.
%\ig{is it a good name?} \ig{we also have to mention here that we add a condition about the COMPUTABILITY of this function \f}. % , and we will denote them by $\fbar$ to distinguish from other functions that we will need.

 %describes in a more general way the entries of the DP-tables and how to fill them. \ig{POINT OUT THE DIFFERENCES!! IN PARTICULAR THAT WE INCORPORATE THE FUNCTION $f$ in the encoder}
%
%\ig{Say that the following definition is particularly adapted to packing-certifiable problems}

%\ig{The following sentence does not make sense, both grammatically and scientifically. In particular, where do we ``replace'' what? In which specific definitions?} We describe here the framework for problems of the form $\text{p-}\Pi_\leq$, the version for problems of the form $\text{p-}\Pi_\geq$ straightforward replacing $\leq$ and $+ \infty$ by $\geq$ and $- \infty$.

\begin{definition} \label{defi:encod}
An \emph{encoder} is a triple $\E = (\C,\L,\f)$ where
\begin{itemize}
\item [\C] is a function in $2^\mathbb{N} \to 2^{\Upsilon^*}$ that maps a finite subset of integers $ I \subseteq \mathbb{N}$ to a set $\C(I)$ of strings over some alphabet $\Upsilon$. Each string $R \in \C(I)$ is called an \emph{encoding}. The \emph{size} of the encoder is the function $s_{\E} : \mathbb{N} \to \mathbb{N}$ defined as $s_{\E}(t)  :=  \max \{|\C(I)| : I \subseteq \{1,\ldots,t\}\} $, where $|\C(I)|$ denotes the number of encodings in  $\C(I)$;

\item [\L] is a computable language which accepts triples $(G,{\cal S},R)\in \Gamma^* \times \Sigma^* \times \Upsilon^*$, where $G$ is a boundaried graph, ${\cal S}$ is a collection of subgraphs of $G$ and $R \in \C(\Lambda(G))$ is an encoding.
If $(G,{\cal S},R)\in \L$, we say that ${\cal S}$ \emph{satisfies} the encoding $R$ in $G$; and
%\ig{why do we need "in $G$"?}

\item [\f] is a  computable function in $\Gamma^* \times \Upsilon^* \to \mathbb{N}\cup\{ -\infty\}$ that maps a boundaried graph $G$ and an encoding $R\in \C(\Lambda(G))$ to an integer or to $-\infty$.

     %and that can be computed in time $h(|\Lambda(G)|) \cdot n^{O(1)}$, where $h$ is an arbitrary computable function depending only on the size of $\Lambda(G)$; and $n$ is the number of vertices of $G$.
    %(which will correspond to the measure of an optimal partial solution satisfying $R$ in $G$). \ig{what is the`` measure''? what is an ``optimal partial solution''?}
\end{itemize}

%The \emph{size} of the encoder is the function $s_{\E}(t) : \mathbb{N} \to \mathbb{N}$ defined as $s_{\E}(t)  :=  \max \{|\C(I)| : I \subseteq \{1,\ldots,t\}\} $, where $|\C(I)|$ denotes the number of encodings in  $\C(I)$.
 %where $|\mc{C}(I)|$ denotes the number of  $\mathcal{C}$-encodings in $\mc{C}(I)$.

 %From the measure \m we can define the \emph{optimization function} $\f : \Gamma^* \times \Upsilon^* \rightarrow \mathbb{N}$ as:
%\begin{equation}
%\f (G,R)=\
%\left\{\begin{array}{lll}
%  & \min\{\m(S) : (G,S) \in \L\} & \mbox{if there exist such a $S$}, \\
%  & + \infty                     & \mbox{otherwise}.
%\end{array}\right.
%\end{equation}
\end{definition}

%As it will become clear with the applications described in the next sections, the
%set $I$ represents the labels from a bag, $\C(I)$ represents the possible configurations of the vertices in the bag, and $L_{\mathcal{C}}$ contains triples that correspond to solutions to these configurations.

%Let us now provide some intuition about the above definitions.

As it will become clear with the applications described in the next sections, an encoder is a formalization of the tables used by an algorithm that solves a packing-certifiable problem $\Pi$ by doing DP over a tree decomposition of the input graph. The encodings in $\C(I)$ correspond to the entries of the DP-tables of graphs with boundary labeled by the set of integers $I$. The language \L identifies certificates which are partial solutions satisfying the boundary conditions imposed by an encoding.

%% MAYBE TO PUT BACK %%%%
 %Each fixed encoding $R$ defines an associated problem, which we denote by $\Pi_R$, and which  can be seen as an ``enriched'' version of the initial problem $\Pi$, where some additional constraints have been  imposed by $R$.
% If we fix a boundaried graph $G$, the function $\f(G,\ \cdot\ ) : R \mapsto \f(G,R) $ fills in the DP-tables of $G$ and if we fix $R$, the function $\f(\ \cdot\ ,R): G \mapsto \f(G,R)$ can be viewed as the optimization function of the problem $\Pi_R$ associated with $R$. The size of the encoder $s_{\E}(t)$ can be viewed as an upper bound on the number of encodings definable for a $t$-boundaried graph, i.e., the maximum number of entries to be filled in the  DP-tables.

The following definition differs from~\cite[Definition 3.3]{KviaDP} as now the function \f is incorporated in the definition of an encoder \E.

%%\paragraph{}
%As the objective will be to use an encoder for solving a given problem $\Pi$, we adapt the notion of $\Pi$-encoder. \ig{it seems that the references are to the STACS version!!}

\begin{definition} \label{defi:pi-encod}
Let $\Pi$ be a packing-certifiable problem. An encoder \E is a \emph{$\Pi$-encoder} if $\C(\emptyset)$ is a singleton, denoted by $\{R_\emptyset\}$, such that for any $0$-boundaried graph $G$, $\f(G,R_\emptyset) = f^\Pi(G)$.
\end{definition}
%
%, the following hold:
%\begin{itemize}
%
%\item \ig{WE DON'T NEED IT!!} for any certificate ${\cal S}$, $(G,{\cal S},R_\emptyset) \in \L$ if and only if $(G,{\cal S}) \in L^\Pi $; and
%\item  $\f(G,R_\emptyset) = f^\Pi(G)$.
%\end{itemize}

%When we consider a graph without boundary, the first condition above guarantees that the language \L is able to certify a solution of $\Pi$, while the second one guarantees that both optimal values coincide. In other words,  we want that $\Pi_{R_\emptyset} = \Pi$.
%Note also that a certificate for $\Pi$ can be viewed as a certificate for $\L$ when $R_\emptyset$ is fixed; this is why we use a unique alphabet $\Sigma$ for both types of certificates (for the problem and for the encoder).

The following definition allows to control the number of possible distinct values assigned to encodings and plays a similar role to FII or \emph{monotonicity} in previous work~\cite{BFL+09,KLP+12,FLST10}.

%As in~\cite[Definition 4]{KviaDP}, we define the notion of \emph{confinement}, which ensures that there are \emph{few} distinct values stored in the DP-tables, implying a small number of classes for the equivalence \eq\E (see \autoref{defi:equiv} below). This property plays a similar role to FII or \emph{monotonicity} in previous work~\cite{BFL+09,KLP+12,FLST10}.
%For the sake of completeness, let us recall this definition.

\begin{definition} \label{def:confined}
An encoder $\E$ is \emph{$g$-confined} if there exists a function $g : \mathbb{N} \to \mathbb{N}$ such that for any $t$-boundaried graph $G$ with $\Lambda(G) = I$ it holds that either $\{R \in \C(I) : \f(G,R) \neq - \infty \}= \emptyset\ $ or
$\ \max_{R} \{\f(G,R)\neq - \infty \}\  -\   \min_{R} \{\f(G,R) \neq - \infty\}  \  \leq \ g(t)$.

\end{definition}

For an encoder \E and a function $g$, in the next sections we will denote the {\sl relevant} functions discussed before by  $\fbar$ to distinguish them from other functions that we will need.

%\ig{In order to define the relevant encoder we need that a (partial) solution induce an encoding when it intersect a separator. For the simplicity of our framework, we do not provide a general and formal definition of \emph{induce} here, and we prefer to make particular definition in the applications.}

%\subsection{Equivalence relations and representatives}
%\label{ssec:equiv}

%%\paragraph{}
%For an encoder \Enospace, an equivalence relation \eqs\E on graphs in \B has been defined \cite[Definition 5,6]{KviaDP} \ig{which is the correct definition?}.
%In fact, in our applications we will use the restriction \eq\E of this relation to graphs in \F, but for technical reasons (namely, for the proof of \autoref{lem:refine eq}), we need to define it on general $t$-boundaried graphs.
%Moreover, these relations have to be refined for preserving classes of graphs.
%The confinement will ensure that the number of classes is small (see \autoref{lem:nb class}). We recall here the definitions of \eqs\E and \eq\E.

\smallskip\smallskip
\noindent \textbf{Equivalence relations and representatives.} We now define some equivalence relations on $t$-boundaried graphs.

\begin{definition} \label{defi:equiv}
Let \E be an encoder, let $G_1,G_2 \in \B$, and let $\G$ be a class of graphs.
\begin{enumerate}
\item $G_1 \eqs\E G_2$ if
$\Lambda(G_1)=\Lambda(G_2)=: I$ and
there exists an integer $\D\E(G_1,G_2)$ (depending on $G_1, G_2$) such that for any encoding $R   \in \C(I)$ we have $\f(G_1,R) = \f(G_2,R) - \D\E (G_1,G_2)$.
%\todo{d'apres la modif faite pr SIAM}.

\item $G_1 \eq\G G_2$ if either $G_1 \notin \G$ and $G_2 \notin \G$, or
$G_1,G_2 \in \G$ and,
for any $H \in \B$, $H \oplus G_1 \in \G$ if and only if $H \oplus G_2 \in \G$.

\item $G_1 \eqsG\E G_2$ if
$G_1 \eqs\E G_2$ and
$G_1 \eq\G G_2$.

\item If we restrict the graphs $G_1 , G_2$ to be in \F, then the corresponding equivalence relations, which are a restriction of \eqs\E and \eqsG\E, are denoted by \eq\E and \eqG\E, respectively.
\end{enumerate}
\end{definition}

If for all encodings $R$, $\f(G_1,R) = \f(G_2,R) = -\infty$, then we set $\D\E (G_1,G_2):=0$ (note that any fixed integer would satisfy the first condition in Definition~\ref{defi:equiv}). Following the notation of Bodlaender \emph{et al}.~\cite{BFL+09}, the function $\D\E$ is called the \emph{transposition function} for the equivalence relation \eqs\E. Note that we can use the restriction of $\D\E$ to couples of graphs in \F to define the equivalence relation \eq\E.

%\paragraph{}

In the following we only consider classes of graphs whose membership can be expressed in Monadic Second Order (MSO) logic. Therefore, we know that the number of equivalence classes of \eq\G is finite~\cite{Buc60}, say at most $r_{\G,t}$, and we can state the following lemma.
 %that corresponds to~\cite[Lemma 7]{KviaDP}. %\ig{check citation!!}

\begin{lemma}$[\star]$ \label{lem:nb class}
Let \G be a class of graphs whose membership is expressible in MSO logic.
For any encoder \Enospace, any function $g: \mathbb{N} \rightarrow \mathbb{N}$ and any integer $t \in \mathbb{N}$,
if \E is $g$-confined
then the equivalence relation \eqsG\E has at most  $ r(\E,g,t,\G):=(g(t)+2)^{s_{\E}(t)} \cdot 2^t \cdot r_{\G,t} $ equivalence classes. In particular, the equivalence relation \eqG\E has at most $ r(\E,g,t,\G)$ equivalence classes as well.
\end{lemma}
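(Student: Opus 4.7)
The plan is to bound the number of equivalence classes of $\eqs\E$ on $\B$ first, and then use the fact that $\eqsG\E$ is the common refinement of $\eqs\E$ and $\eq\G$. The final sentence of the lemma (restriction to $\F$) will follow for free, because $\F \subseteq \B$ and $\eqG\E$ is a restriction of $\eqsG\E$, so any bound on the number of classes of the latter also bounds the number of classes of the former.

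By Definition~\ref{defi:equiv}(1), two boundaried graphs $G_1,G_2\in\B$ with the same label set $\Lambda(G_1)=\Lambda(G_2)=I$ are $\eqs\E$-equivalent if and only if their value vectors $v_{G_i} := \bigl(\f(G_i,R)\bigr)_{R\in\C(I)}$ differ by a global additive constant, adopting the convention $-\infty + c = -\infty$ for every integer $c$ (with $\D\E(G_1,G_2):=0$ when both vectors are identically $-\infty$, as prescribed right after Definition~\ref{defi:equiv}). Hence an $\eqs\E$-class is completely determined by the pair consisting of $I$ and the shift-class of $v_G$. There are at most $2^t$ admissible choices for $I\subseteq\{1,\ldots,t\}$, which explains the factor $2^t$ in the statement. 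To count the shift-classes, I would invoke $g$-confinement (Definition~\ref{def:confined}): the finite coordinates of $v_G$ all lie in a common interval of length at most $g(t)$. Pick as canonical representative of the shift-class the vector obtained by subtracting from every finite coordinate the minimum finite coordinate (and leaving the all-$-\infty$ vector untouched). Every canonical vector then has coordinates in $\{-\infty,0,1,\ldots,g(t)\}$, a set of cardinality $g(t)+2$. Since $|\C(I)|\le s_\E(t)$ by the definition of $s_\E$, there are at most $(g(t)+2)^{s_\E(t)}$ canonical vectors, giving at most $2^t\cdot(g(t)+2)^{s_\E(t)}$ classes for $\eqs\E$ on $\B$.

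Finally, because membership in $\G$ is expressible in MSO logic, the classical theorem of Büchi~\cite{Buc60} already invoked in the statement guarantees a finite upper bound $r_{\G,t}$ on the number of $\eq\G$-classes of $\B$. By Definition~\ref{defi:equiv}(3), $\eqsG\E$ is the common refinement of $\eqs\E$ and $\eq\G$, so its number of classes is at most the product of the two, yielding the advertised bound $r(\E,g,t,\G)=(g(t)+2)^{s_\E(t)}\cdot 2^t\cdot r_{\G,t}$. Restricting all three relations to $\F\times\F$ transfers the bound verbatim to $\eqG\E$.

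The only real delicacy is the shift-normalization: the all-$-\infty$ case must be singled out to keep $\D\E$ well defined, and $g$-confinement must be applied within a fixed label set $I$ rather than across different $I$'s. Both issues are settled by the surrounding definitions, so no further obstacle should arise.
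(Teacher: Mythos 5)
Your proof is correct and follows essentially the same route as the paper's: partition by the label set $I$ (giving the factor $2^t$), use $g$-confinement to bound the number of shift-equivalence classes of the value vector $\bigl(\f(G,R)\bigr)_{R\in\C(I)}$ by $(g(t)+2)^{s_\E(t)}$, and take the product with the $r_{\G,t}$ classes of $\eq\G$. You are in fact a bit more careful than the published version, which asserts the bound directly from ``$\f(G,\cdot)$ takes at most $g(t)+2$ values''; your explicit normalization by the minimum finite coordinate (and the separate treatment of the all-$-\infty$ vector) makes the shift-invariance argument airtight rather than implicit.
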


\begin{definition} \label{defi:DP-friend}
An equivalence relation \eqsG\E is \emph{DP-friendly} if, for any graph $G \in \B$ with $\partial(G) = A$ and any two boundaried graphs $H$ and $G_B$ with $G= H \oplus G_B$ such that $G_B$ has boundary $B \subseteq V(G)$ with $|B| \leq t$ and $A\cap V(G_B) \subseteq B$, the following holds.
%and any separator $B \subseteq V(G)$ with $|B| \leq t$, the following holds.
%Let $G_B$ be any subgraph of $G$ separated by $B$ from the rest of $G$  and such that $A\cap V(G_B) \subseteq B$. Considering $G_B$ as a $t$-boundaried graph (with boundary $B$),
Let $G'\in \B$ with $\partial(G') = A$ be the graph obtained from $G$ by replacing the subgraph $G_B$ with some $G_B' \in \B$ such that $G_B \eqsG\E G_B'$. Then $G \eqsG\E G'$ and $\D\E(G,G') = \D\E(G_B, G_B')$.
\end{definition}

%We would like to note that the refinement of \eqs\E and \eq\E by \eq\G is a tool that we use for being able to apply our framework to problems restricted to a specific graph class, but it does not have consequences on whether an equivalence relation corresponding to an encoder is DP-friendly or not. This is formalized by

The following useful fact  states that for proving that \eqsG\E is DP-friendly, it suffices to prove that $G \eqs\E G'$ instead of $G \eqsG\E G'$.

%The following fact is useful for applying our framework, as it allows to not taking into account the class of graphs \G when dealing .

\begin{fact}$[\star]$ \label{fait:equiv}
%Let \E be an encoder, let \G be a class of graphs,
Let $G\in \B$ with a separator $B$, let $G_B \eqG\E G_B'$, and let $G' \in \B$ as in Definition~\ref{defi:DP-friend}. If $G \eqs\E G'$, then $G \eqsG\E G'$.
\end{fact}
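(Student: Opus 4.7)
The encoder component $G \eqs{\E} G'$ is given by hypothesis, so the task reduces to establishing $G \eq{\G} G'$. The key tool will be the associativity of the gluing operation with respect to consistent labelings: for any $H'' \in \B$,
\[
H'' \oplus G \;=\; (H'' \oplus H) \oplus G_B \qquad \text{and} \qquad H'' \oplus G' \;=\; (H'' \oplus H) \oplus G_B',
\]
where $H'' \oplus H$ is viewed as a single boundaried graph whose boundary matches $B$ (the boundary of $G_B$ and of $G_B'$). This recasts any question about $\G$-membership of $H'' \oplus G$ into the analogous question for $G_B$ with ambient graph $H''' := H'' \oplus H \in \B$, and symmetrically for the primed versions.

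From $G_B \eqG{\E} G_B'$ I extract in particular $G_B \eq{\G} G_B'$, and then split on whether $G_B, G_B' \in \G$. If both are in $\G$, instantiating the universal property of $\eq{\G}$ at $H''' = H$ first gives $G \in \G \iff G' \in \G$; when $G, G' \notin \G$ the conclusion $G \eq{\G} G'$ is immediate from the first clause of Definition~\ref{defi:equiv}(2), and when $G, G' \in \G$, instantiating the universal property at $H''' = H'' \oplus H$ for an arbitrary $H'' \in \B$ yields $H'' \oplus G \in \G \iff H'' \oplus G' \in \G$, which is precisely the second clause of the definition, giving $G \eq{\G} G'$.

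The delicate case is $G_B, G_B' \notin \G$, where $G_B \eq{\G} G_B'$ holds vacuously and yields no direct information. I would handle this by invoking the hereditary nature of the sparse graph classes the framework targets (minor-closed, such as $H$-minor-free or apex-minor-free): since the edges of $G_B$ embed (up to boundary identifications) into $G = H \oplus G_B$, any forbidden minor witnessing $G_B \notin \G$ lifts to a forbidden minor of $G$, so $G \notin \G$; symmetrically $G' \notin \G$, whence $G \eq{\G} G'$ holds trivially from the first clause. Combining $G \eq{\G} G'$ with the hypothesis $G \eqs{\E} G'$ then yields $G \eqsG{\E} G'$, as required.

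The principal obstacle I expect is justifying the associativity rewrite cleanly in the presence of labels shared across $H$, $H''$, $G_B$, and $G_B'$ (one must check that the identifications performed by $\oplus$ do not clash); the non-hereditary corner case in the last step is the other point requiring care, and it is absorbed in this paper by the restriction to sparse, hereditary graph classes.
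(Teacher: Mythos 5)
Your proposal follows the same backbone as the paper's appendix proof: reduce to showing $G \eq\G G'$, rewrite $H'' \oplus G$ and $H'' \oplus G'$ via associativity of $\oplus$ so that $G_B$ (resp.\ $G_B'$) appears glued to the boundaried graph $H'' \oplus H$, and then invoke $G_B \eq\G G_B'$. The difference is that you make explicit the case split on whether $G_B, G_B' \in \G$, whereas the paper's proof simply asserts that ``since $G_B \eq\G G_B'$, it follows that'' the biconditional $G \oplus H'' \in \G \Leftrightarrow G' \oplus H'' \in \G$ holds. That assertion is only immediate from the universal clause in Definition~\ref{defi:equiv}(2) when $G_B, G_B' \in \G$; in the remaining case the paper's proof says nothing, and one does need an extra ingredient -- precisely the heredity/minor-closedness argument you supply -- to see that $G_B \notin \G$ forces $G \notin \G$ (and likewise with primes), so that the first clause of the definition applies. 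In the paper this is unproblematic because $\G$ is always taken to be $H$-(topological-)minor-free and hence minor-closed, but that hypothesis is used silently. Your version makes the dependence on heredity visible, which is a worthwhile clarification; your final remark about verifying label-consistency in the associativity rewrite is also legitimate (the paper leaves the re-declaration of the boundary of $H'' \oplus H$ as $B$ implicit) though it is a routine bookkeeping check given how $\oplus$ identifies labeled vertices.
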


%\begin{proof}
%Let $G = G^- \oplus G_B$  and let $G' = G^- \oplus G_B'$.
%Assume that $G \eqs\E G'$. In order to deduce that $G \eqsG\E G'$, it suffices to prove that $G \eq\G G'$. Let $H\in \B$. We need to show that $G \oplus H \in \G$ if and only if $G' \oplus H \in \G$.
%We have that $G \oplus H = (G_B \oplus G^-) \oplus H = G_B \oplus (G^- \oplus H)$, and similarly for $G'$. Since $G_B \eq\G G_B'$, it follows that
%$G \oplus H = G_B \oplus (G^- \oplus H) \in \G$ if and only if $G_B \oplus (G^- \oplus H) = G \oplus H \in \G$.
%\end{proof}

In order to perform a protrusion replacement that does not modify the behavior of the graph with respect to a problem $\Pi$, we need the relation \eqs\E to be a refinement of the canonical equivalence relation $\equiv_{\Pi,t}$.

 %(this implies, in particular, that if \eqs\E has finite index for each integer $t$, then $\Pi$ has FII ).

\begin{lemma}$[\star]$\label{lem:refine eq}
Let $\Pi$ be a packing-certifiable parameterized problem defined on a graph class ${\cal G}$, let \E be an encoder, let $g: \mathbb{N} \rightarrow \mathbb{N}$, and let $G_1, G_2 \in \B$. If \E is a $g$-confined $\Pi$-encoder and \eqsG\E is DP-friendly, then the fact that $G_1 \eqsG\E G_2$ implies the following:
\begin{itemize}
\item[$\bullet$] $G_1 \equiv_\Pi G_2$; and
\item[$\bullet$] $\Delta_{\Pi,t}(G_1,G_2) = \D\E(G_1,G_2)$.
\end{itemize}
In particular, this holds when $G_1, G_2 \in \F$  and $G_1 \eqG\E G_2$.
\end{lemma}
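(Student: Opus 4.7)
The plan is to fix an arbitrary $H \in \B$ and establish the identity $f^\Pi(G_1 \oplus H) = f^\Pi(G_2 \oplus H) - \D\E(G_1, G_2)$, from which both bullets of the lemma follow by unfolding Definition~\ref{defi:cano}. The ``in particular'' clause is then immediate, since $\eqG\E$ is by definition the restriction of $\eqsG\E$ to $\F$, so the main argument applies verbatim.

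First I would apply the DP-friendliness of $\eqsG\E$ to the decomposition $G_1 \oplus H = H \oplus G_1$, viewing $G_1 \oplus H$ and $G_2 \oplus H$ as $0$-boundaried graphs (so $A := \emptyset$). The hypotheses of Definition~\ref{defi:DP-friend} are trivially satisfied: $B := \partial(G_1)$ has size at most $t$ because $G_1 \in \B$, and $A \cap V(G_1) = \emptyset \subseteq B$. Since $G_1 \eqsG\E G_2$, DP-friendliness yields $G_1 \oplus H \eqsG\E G_2 \oplus H$ with transposition constant $\D\E(G_1 \oplus H, G_2 \oplus H) = \D\E(G_1, G_2)$.

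Next I would exploit the two sub-relations that make up $\eqsG\E$. The $\eq\G$ part, instantiated with the empty $0$-boundaried graph, yields $G_1 \oplus H \in \G \iff G_2 \oplus H \in \G$; hence the two composed graphs agree on whether $\Pi$ is vacuously \NO\ on them, which handles the case where either falls outside $\G$. When both lie in $\G$, I would invoke the $\eqs\E$ part with the unique encoding $R_\emptyset \in \C(\emptyset)$: Definition~\ref{defi:equiv} gives $\f(G_1 \oplus H, R_\emptyset) = \f(G_2 \oplus H, R_\emptyset) - \D\E(G_1, G_2)$. Because $\E$ is a $\Pi$-encoder, Definition~\ref{defi:pi-encod} lets me rewrite both sides of this identity in terms of $f^\Pi$, yielding the claimed identity. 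Unwinding Definition~\ref{defi:optimization-function}, the relation $(G, k) \in \Pi$ is equivalent to $f^\Pi(G) \geq k$, which immediately gives $(G_1 \oplus H, k) \in \Pi \iff (G_2 \oplus H, k + \D\E(G_1, G_2)) \in \Pi$ for every $k \in \mathbb{Z}$, i.e., $G_1 \equiv_\Pi G_2$ with $\D{\Pi}(G_1, G_2) = \D\E(G_1, G_2)$.

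The main obstacle is essentially bookkeeping: one has to verify that DP-friendliness is legitimately invocable with $A = \emptyset$, and to separate cleanly the case where $G_1 \oplus H$ (equivalently $G_2 \oplus H$) lands outside $\G$ from the case where it stays inside, so that the two sub-relations packaged inside $\eqsG\E$ are each deployed where they are needed. The $g$-confinement hypothesis present in the statement does not enter this particular proof — it is a standing assumption of the framework required elsewhere (most notably in Lemma~\ref{lem:nb class}) to guarantee a bounded number of equivalence classes and hence a usable protrusion replacer.
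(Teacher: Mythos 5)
Your proposal is correct and follows essentially the same route as the paper: apply DP-friendliness to view $G_i \oplus H$ as a $0$-boundaried graph with separator $\partial(G_i)$, obtain $G_1 \oplus H \eqsG\E G_2 \oplus H$ with transposition constant $\D\E(G_1,G_2)$, and then use the $\Pi$-encoder property to translate the resulting identity on $\f(\cdot,R_\emptyset)$ into one on $f^\Pi$. The only cosmetic differences are that the paper argues one implication directly and invokes symmetry rather than first stating the $f^\Pi$ identity, and your proof spells out the role of the $\eq\G$ component and the idleness of $g$-confinement, both of which the paper leaves implicit.
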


\begin{definition} \label{defi:progres}
Given an encoder \E and an equivalence class $\mathfrak{C} \subseteq \F$ of \eqG\E, a graph $G \in \mathfrak{C}$ is a \emph{progressive representative} of $\mathfrak{C}$ if for any $G'\in \mathfrak{C}$, it holds that $\D\E(G,G') \leq 0$.
\end{definition}

%\ig{???} A appropriated representative has also to be small, thanks to Lemma~\ref{lem:nb class} we can provide an upper bound on progressive representative~\cite[Lemma 11]{KviaDP}.

\begin{lemma}$[\star]$\label{lem:progres size}
Let \G be a class of graphs whose membership is expressible in MSO logic. For any encoder \E, any function $g: \mathbb{N} \rightarrow \mathbb{N}$, and any $t \in \mathbb{N}$, if \E is $g$-confined and \eqsG\E is DP-friendly, then any equivalence class of \eqG\E has a progressive representative of size at most $b(\E,g,t,\G) := 2^{r(\E,g,t,\G)+1} \cdot t$, where $r(\E,g,t,\G)$ is the function defined in Lemma~\ref{lem:nb class}.
\end{lemma}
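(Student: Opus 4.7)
The plan is to establish the size bound by a pigeonhole-and-replacement argument on a minimum-size progressive representative, mirroring the approach in the vertex-certifiable setting of~\cite{KviaDP}. Set $r := r(\E,g,t,\G)$. Existence of a progressive representative of $\mathfrak{C}$ follows by combining Lemma~\ref{lem:nb class} with the reduction step described below: every level of $\mathfrak{C}$ (indexed by $\D\E(G_0,\cdot)$ for a fixed reference $G_0 \in \mathfrak{C}$) contains a graph of bounded size, and since there are only finitely many isomorphism types of $t$-boundaried graphs of bounded size, the set of realized shifts is a bounded subset of $\mathbb{Z}$, whose maximum is attained by a progressive representative.

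To bound the size, I would choose a progressive representative $G^*$ of $\mathfrak{C}$ together with a boundaried nice tree decomposition $(T^*,\mc{X}^*)$ of $G^*$ of width $t-1$ (which exists since $G^* \in \F$) minimizing $|V(T^*)|$. Assume for contradiction that $|V(G^*)| > 2^{r+1}\cdot t$. Since each bag has size at most $t$, this forces $|V(T^*)| \geq |V(G^*)|/t > 2^{r+1}$; as $T^*$ is a rooted binary tree of this size, its depth is at least $r+1$, so it has a root-to-leaf path $P$ with at least $r+2$ nodes. For each $x \in V(T^*)$, let $G^*_x \in \F$ denote the $t$-boundaried subgraph induced by the vertices of $G^*$ that appear in bags of the subtree of $T^*$ rooted at $x$, with boundary $B_x$. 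By Lemma~\ref{lem:nb class} (and since $\eqsG\E$ coincides with $\eqG\E$ on $\F$), at most $r$ distinct $\eqsG\E$-classes occur among $\{G^*_x : x \in V(T^*)\}$, so pigeonholing on $P$ yields two nodes $x, y$ with $y$ a proper descendant of $x$ and $G^*_x \eqsG\E G^*_y$. Writing $G^* = H \oplus G^*_x$, DP-friendliness of $\eqsG\E$ gives $G^{**} := H \oplus G^*_y$ with $G^{**} \eqsG\E G^*$ (so $G^{**} \in \mathfrak{C}$) and $\D\E(G^*, G^{**}) = \D\E(G^*_x, G^*_y)$.

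A boundaried nice tree decomposition $T^{**}$ of $G^{**}$ of width $t-1$ is then obtained by grafting the subtree of $T^*$ rooted at $y$ in place of the subtree rooted at $x$, identifying $B_y$ with $B_x$ via the common labeling $\Lambda(G^*_x) = \Lambda(G^*_y)$; this gives $|V(T^{**})| = |V(T^*)| - |V(T^*_x)| + |V(T^*_y)| < |V(T^*)|$, since $T^*_y$ is a proper subtree of $T^*_x$. The main obstacle to concluding the contradiction will be ensuring that $G^{**}$ remains progressive: progressivity of $G^*$ forces $\D\E(G^*_x, G^*_y) \leq 0$, but a strict inequality would place $G^{**}$ at a lower level of $\mathfrak{C}$. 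This is resolved by refining the pigeonhole on $P$ to also track the shift (using the finiteness of levels established in the first paragraph), so that one can select $(x,y)$ with $\D\E(G^*_x,G^*_y) = 0$; then $G^{**}$ stays at the same level as $G^*$, is hence progressive, and contradicts the minimality of $|V(T^*)|$. This yields $|V(G^*)| \leq 2^{r+1}\cdot t = b(\E,g,t,\G)$, as claimed.
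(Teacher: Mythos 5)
Your skeleton — take a minimum progressive representative, pigeonhole on equivalence classes along a root-to-leaf path of a nice tree decomposition, and replace a higher subtree by an equivalent lower one via DP-friendliness — matches the paper's, and the choice to minimize $|V(T^*)|$ rather than $|V(G^*)|$ is a clean way to avoid the edge case where $G^*_y$ equals $G^*_x$ even though $y$ is a proper descendant of $x$. However, the argument has two genuine gaps. First, the existence argument in your opening paragraph is circular: you infer that "every level of $\mathfrak{C}$ contains a graph of bounded size'' from "the reduction step described below,'' which is the very size bound you are trying to prove. The paper instead proves existence directly: inside a class $\mathfrak{C}$ the functions $\f(G,\cdot)$ differ only by the constant $\D\E$, so $\f(\mathfrak{C},R)$ is a subset of $\mathbb{N}\cup\{-\infty\}$ and hence has a minimum, making the preorder $G_1\preccurlyeq G_2 \iff \D\E(G_1,G_2)\leq 0$ well-founded; any minimum is progressive.

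Second, and more seriously, the "main obstacle'' you correctly identify — that pigeonholing on the at most $r$ classes of $\eqG\E$ only guarantees $\D\E(G^*_x,G^*_y)\leq 0$, not $=0$ — is not resolved by your proposed fix. You suggest "refining the pigeonhole to also track the shift,'' appealing to a finiteness of levels that you only claimed (circularly) for $\mathfrak{C}$ itself; but the shifts you would need to track live inside the classes $\mathfrak{C}'$ of the \emph{subgraphs} $G^*_x$, and those shift values are not a priori bounded above (only below, since $\f$ takes values in $\mathbb{N}\cup\{-\infty\}$). With an unbounded number of $(\text{class},\text{shift})$ buckets, the pigeonhole on a path of $r+2$ nodes does not go through, and the final bound $2^{r+1}\cdot t$ is lost. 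The paper closes this gap with a single observation you are missing: progressivity of $G^*$ propagates to every rooted subgraph. Concretely, for every node $x$, $G^*_x$ must itself be a progressive representative of its own $\eqG\E$-class, for otherwise some $G_x'$ with $\D\E(G_x',G^*_x)<0$ would, by DP-friendliness, give $G'\eqG\E G^*$ with $\D\E(G',G^*)<0$, contradicting progressivity of $G^*$. Once every $G^*_x$ is progressive, any two equivalent $G^*_x\eqG\E G^*_y$ on a path automatically satisfy $\D\E(G^*_y,G^*_x)=0$, the substitution strictly shrinks $|V(T^*)|$, and the contradiction follows with no extra bucketing. You should replace the "refined pigeonhole'' step by this propagation claim.
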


\smallskip%\smallskip
\noindent \textbf{An explicit protrusion replacement.} %Let us summarize the previous discussion. On the one hand, for a problem $\Pi$ restricted to \G, we have shown the condition under which \eqG\E refine $\sim_{\Pi,t}$ (\autoref{lem:refine eq}), therefore the replacement of a protrusion with another one, equivalent by \eqG\E, is valid.
%On another hand we have proved that each equivalence class of \eqG\E admit a representative with small size and which do not increase the parameter.
%
%We now explain how to construct such a representative (\autoref{lem:comput repres}) and then we formulate our reduction rule (\autoref{theo:main}), that is, given an instance of a problem $\Pi$, how to compute a equivalent and smaller instance.
%
%\paragraph{}
The next lemma specifies conditions under which, given an upper bound on the size of the representatives, a generic DP algorithm can provide in linear time an explicit protrusion replacer.
%This has already been proved in our previous article \cite[Lemma 12]{KviaDP}.

\begin{lemma}$[\star]$ \label{lem:comput repres}
Let \G be a class of graphs, let \E be an encoder, let $g: \mathbb{N} \rightarrow \mathbb{N}$,  and let $t \in \mathbb{N}$ such that \E is $g$-confined  and \eqsG\E is DP-friendly.
Assume we are given an upper bound $b \geq t$ on the size of a smallest progressive representative of any class of \eqG\E. Given a $t$-protrusion $Y$ inside some graph, we can compute a $t$-protrusion $Y'$ of size at most $b$ such that $Y \eqG\E Y'$ and $\D\E(Y',Y) \leq 0$.
Furthermore, such a protrusion can be computed in time $ O(|Y|)$, where the hidden constant depends only on $\E,g,b,\G$, and $t$.% but not on $V(Y)$.
\end{lemma}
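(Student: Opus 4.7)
The plan is to split the computation into a precomputation phase, whose cost depends only on $\E, g, b, \G, t$ and is absorbed into the hidden constant, followed by a linear-time dynamic programming (DP) over a nice tree decomposition of $Y$. First I would enumerate all $t$-boundaried graphs of size at most $b$; by Lemma~\ref{lem:nb class} and the hypothesis on $b$, this finite list hits every equivalence class of $\eqG\E$. Using the computability of $\f$ and the MSO-definability of $\G$, I can decide $\eqG\E$-equivalence between any two graphs $Z, Z'$ in the list by comparing the vectors $(\f(Z, R))_{R \in \C(\Lambda(Z))}$ and $(\f(Z', R))_{R \in \C(\Lambda(Z))}$ up to an additive constant and then checking $\eq\G$. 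This partitions the list into equivalence classes, from each of which I keep a single progressive representative (any element attaining the pointwise maximum of the $\f$-vector, which exists by $g$-confinement).

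Next, I would tabulate, for every pair of classes and every elementary operation of a nice tree decomposition (introduce-vertex, introduce-edge, forget-vertex, join), the class of the resulting boundaried graph obtained by applying that operation to the representatives. DP-friendliness (Definition~\ref{defi:DP-friend}) is precisely what guarantees that such a transition is well-defined: the class of the output depends only on the classes of the inputs, not on the specific representatives chosen to instantiate them. With this static information in hand, I process the actual protrusion $Y$ by first computing a boundaried nice tree decomposition of $Y$ of width at most $t-1$ in time $O(|Y|)$ (possible because $t$ is a fixed constant, via~\cite{Bod96, Klo94}), and then running a bottom-up DP whose state at a node $x$ is the $\eqG\E$-class of $Y_x$ viewed as a boundaried graph with boundary $B_x$. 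Each step is a constant-time table lookup, so the DP terminates in $O(|Y|)$ time. At the root the boundary equals $\partial(Y)$, so I recover the class of $Y$ itself and output the precomputed progressive representative $Y'$ of that class.

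Correctness is then immediate: $Y' \eqG\E Y$ by construction, and since $Y'$ is progressive, Definition~\ref{defi:progres} yields $\D\E(Y', Y) \leq 0$. The representative $Y'$ is a $t$-protrusion because it belongs to $\F$, so in particular $\tw(Y') \leq t-1$ and $|\partial(Y')| \leq t$. The main subtlety, and the reason both hypotheses are crucial, is to justify that the precomputed DP transition tables have constant size and are truly well-defined. Here $g$-confinement, through Lemma~\ref{lem:nb class}, bounds the number of equivalence classes so that the enumeration and the transition tables are finite; and DP-friendliness ensures that the transitions depend only on the children's classes, so that a table lookup is meaningful regardless of which representative instantiates each class. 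Without either property the DP would either blow up in table size or fail to be deterministic at the level of classes.
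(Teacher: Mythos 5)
Your proposal is correct in substance but takes a genuinely different route from the paper's. The paper does \emph{iterative in-place replacement}: it walks the tree decomposition bottom-up until it encounters the first node $x$ with $|V(Y_x)|=b+1$, computes the $\eqG\E$-class of $Y_x$ by directly evaluating the computable function $\f$, looks up a progressive representative of size at most $b$ in the precomputed repository, substitutes it, and repeats; termination follows because each substitution strictly decreases $|V(Y)|$, and the hypothesis $b\geq t$ guarantees that a subtree of size exactly $b+1$ is encountered. You instead do a \emph{single-pass DP over equivalence classes}: you precompute a transition table (class of $H\oplus Z$ as a function of $H$ and the class of $Z$, well-defined by DP-friendliness as you correctly argue), label every node of the tree decomposition with the class of $Y_x$, and at the root output the stored progressive representative of the class of $Y$. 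Both are $O(|Y|)$ and rest on the same pillars (finite repository of small representatives via Lemmas~\ref{lem:nb class}--\ref{lem:progres size}, computability of $\f$, DP-friendliness, linear-time tree-decomposition computation). Your version is arguably the more ``textbook'' DP formulation; the paper's avoids tabulating join transitions at the cost of re-evaluating $\f$ along the way.

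Two points you should tighten. First, the lemma is used in Theorem~\ref{theo:main} not only to produce $Y'$ but also to supply the transposition constant $\D\E(Y',Y)$ so that the parameter can be updated; the paper obtains it by summing the constants of the successive replacements. Your DP outputs only the class of $Y$, which does not by itself determine $\D\E(Y',Y)$ (two graphs in the same class have $\f$-vectors differing by an arbitrary integer shift). The fix is easy and fits your scheme: augment the DP state at node $x$ with the integer $\D\E(Z_{[x]},Y_x)$, where $Z_{[x]}$ is the stored representative of the class of $Y_x$, and update it through each transition using additivity of $\D\E$ together with the precomputed constants $\D\E(Z_{[x]},H\oplus Z_{[y]})$; but you should say so. Second, your parenthetical characterization of a progressive representative as ``any element attaining the pointwise maximum of the $\f$-vector, which exists by $g$-confinement'' is shaky: $g$-confinement bounds the spread of $\f(G,\cdot)$ for a \emph{fixed} $G$, not the range of $\f(\cdot,R)$ over the whole (infinite) class, so a pointwise extremum over the class is not guaranteed to exist by that argument alone (the paper's own proof of Lemma~\ref{lem:progres size} appeals to well-foundedness of $\mathbb{N}\cup\{-\infty\}$ in the opposite direction). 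In your proof this is harmless, because the hypothesis of the lemma already hands you the bound $b$ on the size of a smallest progressive representative, so you may simply enumerate all graphs in $\F$ of size at most $b$ and pick a progressive one in each class by pairwise comparison of $\D\E$; but the parenthetical as stated is not a correct justification of existence.
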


Let us now piece everything together to state the main result of~\cite{KviaDP} that we need to reprove here for packing-certifiable problems.
%, which claims that given an instance of a problem, if one can provide an encoder for this problem then we can compute an equivalent instance by replacing a protrusion.
For issues of constructibility, we restrict \G to be the class of $H$-(topological)-minor-free graphs.

\begin{theorem}$[\star]$   \label{theo:main}
Let \G be the class of graphs excluding some fixed graph $H$ as a (topological) minor and let $\Pi$ be a parameterized packing-certifiable problem defined on \G. Let \E be an encoder, let $g: \mathbb{N} \rightarrow \mathbb{N}$, and let $t \in \mathbb{N}$ such that \E is a $g$-confined $\Pi$-encoder and \eqsG\E is DP-friendly. Given an instance $(G,k)$ of $\Pi$ and a $t$-protrusion $Y$ in $G$, we can compute in time $O(|Y|)$ an equivalent instance $(G-(Y-\partial(Y)) \oplus Y',k')$ where $Y'$ is a $t$-protrusion with $|Y'| \leq b(\E,g,t,\G)$ and $k'\leq k$ and where $b(\E,g,t,\G)$ is the function defined in Lemma~\ref{lem:progres size}.
\end{theorem}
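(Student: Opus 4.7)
The plan is to stitch together Lemmas~\ref{lem:progres size}, \ref{lem:comput repres}, and~\ref{lem:refine eq} into a single algorithmic statement; the theorem itself does not introduce new ideas but extracts the constructive consequence of the infrastructure built in the section. First I would invoke Lemma~\ref{lem:progres size} to obtain the explicit bound $b(\E,g,t,\G)$ on the size of a smallest progressive representative of every class of \eqG\E; both hypotheses of that lemma, namely $g$-confinement and DP-friendliness, are assumed in the statement we are proving.

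Next I would apply Lemma~\ref{lem:comput repres} with this value of $b$ to the given $t$-protrusion $Y$. In time $O(|Y|)$ it returns a $t$-protrusion $Y'$ with $|Y'|\leq b(\E,g,t,\G)$, $Y \eqG\E Y'$, and $\D\E(Y',Y)\leq 0$. This is the step that actually performs the replacement algorithmically, and it is also where the restriction to $H$-(topological)-minor-free graphs is used: Lemma~\ref{lem:comput repres} needs a finite list of candidate representatives to enumerate, and this constructibility is exactly what the excluded-minor hypothesis provides. Then, since $\E$ is a $g$-confined $\Pi$-encoder, $\eqsG\E$ is DP-friendly, and $Y,Y' \in \F$ with $Y \eqG\E Y'$, Lemma~\ref{lem:refine eq} transfers the encoder-level equivalence to a $\Pi$-level equivalence, yielding $Y \equiv_{\Pi} Y'$ with transposition constant $\Delta_{\Pi,t}(Y,Y') = \D\E(Y,Y')$. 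Unfolding Definition~\ref{defi:cano} with the outer boundaried graph $H := G-(Y-\partial(Y))$ so that $G = Y \oplus H$, we obtain for every integer $k$ that
\[
(G,k)\in\Pi \;\Longleftrightarrow\; \bigl(Y'\oplus H,\ k + \Delta_{\Pi,t}(Y,Y')\bigr)\in\Pi.
\]
Setting $k' := k + \D\E(Y,Y')$ produces the announced equivalent instance $(G-(Y-\partial(Y)) \oplus Y',\,k')$, and the bound $\D\E(Y',Y)\leq 0$ coming from Lemma~\ref{lem:comput repres} together with the sign relation between $\D\E(Y,Y')$ and $\D\E(Y',Y)$ forced by Definition~\ref{defi:equiv} controls the shift of the parameter in exactly the direction required by the statement.

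The genuinely hard part is not this theorem but the preparatory work that feeds into it; in particular, showing that one can design an encoder for a packing-certifiable problem that is simultaneously a $\Pi$-encoder, $g$-confined for an explicit function $g$, and DP-friendly. For vertex-certifiable problems one can ``force'' confinement by discarding entries that differ too much from the extremal ones, and DP-friendliness survives; for packing problems, as explained in Section~\ref{sec:generic}, a partial solution can contribute unboundedly many vertices from a rooted subtree, so confinement must be obtained by a more delicate \emph{propagation} of the discarded entries — captured here by the relevant function $\fbar$. Once the encoders in later sections are shown to satisfy the three properties, Theorem~\ref{theo:main} immediately yields the desired explicit protrusion replacer, and from there the linear kernels.
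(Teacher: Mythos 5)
Your proof follows the paper's proof almost line for line: invoke Lemma~\ref{lem:progres size} for the bound $b$, apply Lemma~\ref{lem:comput repres} to compute $Y'$ with $Y\eqG\E Y'$ and $\D\E(Y',Y)\leq 0$ in time $O(|Y|)$, transfer to $\Pi$-equivalence via Lemma~\ref{lem:refine eq}, and read off the new parameter. All of that is correct and is precisely the paper's route. The one place where your argument does not close is the very last claim, namely that the parameter moves ``in exactly the direction required.''

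Concretely, you set $k' := k + \D\E(Y,Y')$, which is indeed what Definition~\ref{defi:cano} with $G_1=Y$, $G_2=Y'$ together with Lemma~\ref{lem:refine eq} ($\Delta_{\Pi,t}(Y,Y')=\D\E(Y,Y')$) hands you. But the control you have from Lemma~\ref{lem:comput repres} is $\D\E(Y',Y)\leq 0$, and Definition~\ref{defi:equiv} makes $\D\E$ antisymmetric: $\D\E(Y,Y') = -\D\E(Y',Y)\geq 0$. So your $k' \geq k$, which is the \emph{opposite} of the inequality $k'\leq k$ asserted in the statement. The ``sign relation between $\D\E(Y,Y')$ and $\D\E(Y',Y)$'' that you invoke pushes the parameter the wrong way, and you never actually verify the inequality. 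The paper's own proof avoids this by taking $k':= k + \D\Pi(Y',Y) = k + \D\E(Y',Y)$, whose non-positivity is immediate from Lemma~\ref{lem:comput repres}. To fix your write-up you should use $\D\E(Y',Y)$ rather than $\D\E(Y,Y')$ in the definition of $k'$, i.e., set $k' := k + \D\E(Y',Y)$, and cite $\D\E(Y',Y)\leq 0$ to conclude $k'\leq k$; as written, the final sentence asserts an inequality that, under the paper's stated definitions, goes in the wrong direction.
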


%\begin{proof}
%\end{proof}

%\vspace{-.15cm}

Such a protrusion replacer can be used to obtain a kernel when, for instance, one is able to provide a protrusion decomposition of the instance.

%\vspace{-.15cm}
\begin{corollary}$[\star]$   \label{coro:main}
Let \G be the class of graphs excluding some fixed graph $H$ as a (topological) minor and let $\Pi$ be a parameterized packing-certifiable problem defined on \G. Let \E be an encoder, let $g: \mathbb{N} \rightarrow \mathbb{N}$, and let $t \in \mathbb{N}$ such that \E is a $g$-confined $\Pi$-encoder and \eqsG\E is DP-friendly. Given an instance $(G,k)$ of $\Pi$ and an $(\alpha k,t)$-protrusion decomposition of $G$, we can construct a linear kernel for $\Pi$ of size at most $ (1+b(\E,g,t,\G))\cdot \alpha \cdot k$, where $b(\E,g,t,\G)$ is the function defined in Lemma~\ref{lem:progres size}.
\end{corollary}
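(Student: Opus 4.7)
My plan is to deduce Corollary~\ref{coro:main} by iterating the explicit protrusion replacer of Theorem~\ref{theo:main} over each of the protrusions delivered by the protrusion decomposition, and then simply adding up the resulting sizes. Concretely, let $(G,k)$ be the input and let ${\cal P}=Y_{0}\uplus Y_{1}\uplus \cdots \uplus Y_{\ell}$ be the given $(\alpha k,t)$-protrusion decomposition. By definition, $\max\{\ell,|Y_{0}|\}\leq \alpha k$ and, for every $1\leq i\leq \ell$, the induced subgraph $Z_{i}:=Y_{i}\cup N(Y_{i})$ is a $t$-protrusion of $G$ whose boundary $\partial(Z_{i})=N(Y_{i})$ is contained in $Y_{0}$.

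Starting from $(G,k)$, I would apply Theorem~\ref{theo:main} successively to the $t$-protrusions $Z_{1},Z_{2},\ldots,Z_{\ell}$. The first application replaces $Z_{1}$ by a $t$-protrusion $Z'_{1}$ with $|Z'_{1}|\leq b(\E,g,t,\G)$, producing an equivalent instance $(G^{(1)},k^{(1)})$ with $k^{(1)}\leq k$. The crucial point is that this replacement only modifies vertices in the interior $Y_{1}=Z_{1}\setminus \partial(Z_{1})$, and since the partition $\cal P$ guarantees $Y_{1}\cap Z_{j}=\emptyset$ for $j\neq 1$, the other protrusions $Z_{2},\ldots,Z_{\ell}$ and the separator $Y_{0}$ remain untouched; in particular each $Z_{j}$ for $j\geq 2$ is still a $t$-protrusion of $G^{(1)}$ with the same boundary. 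Iterating, after $\ell$ steps we obtain an equivalent instance $(G',k')$ with $k'\leq k$ in which every $Y_{i}$ has been replaced by (the interior of) a $t$-protrusion of size at most $b(\E,g,t,\G)$.

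It then remains to bound $|V(G')|$ and the total running time. Since the replacements leave $Y_{0}$ intact and each interior has size at most $b(\E,g,t,\G)$, we get
\[
|V(G')|\ \leq\ |Y_{0}|\ +\ \sum_{i=1}^{\ell}\bigl(|Z'_{i}|-|\partial(Z'_{i})|\bigr)\ \leq\ \alpha k\ +\ \ell\cdot b(\E,g,t,\G)\ \leq\ (1+b(\E,g,t,\G))\cdot \alpha\cdot k,
\]
which is exactly the claimed kernel size. The total running time is $\sum_{i=1}^{\ell}O(|Z_{i}|)=O(|G|)$ by Theorem~\ref{theo:main}, hence linear. Monotonicity of the parameter follows since each single application gives $k^{(i)}\leq k^{(i-1)}$, so $k'\leq k$.

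I do not expect any substantive obstacle here, as the corollary is essentially a bookkeeping argument on top of Theorem~\ref{theo:main}; the only subtlety to verify carefully is that the successive protrusion replacements are truly independent, i.e.\ that replacing the interior of $Z_{i}$ does not spoil the status of $Z_{j}$ as a $t$-protrusion with boundary in $Y_{0}$. This is guaranteed by the fact that the $Y_{i}$'s are pairwise disjoint and that $\partial(Z_{j})\subseteq Y_{0}$ for all $j$, so no replacement ever touches vertices belonging to another $Z_{j}$ or to its boundary.
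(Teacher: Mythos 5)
Your argument is correct and matches the paper's proof: both iterate Theorem~\ref{theo:main} over the protrusions $Y_i\cup N(Y_i)$ of the decomposition and add the resulting sizes, using $\max\{\ell,|Y_0|\}\le\alpha k$ to get the $(1+b(\E,g,t,\G))\cdot\alpha\cdot k$ bound. The only difference is that you spell out explicitly why the successive replacements do not interfere (disjointness of the $Y_i$'s and $N(Y_i)\subseteq Y_0$), a detail the paper leaves implicit.
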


%\begin{proof}
%\end{proof}

%\vspace{-.35cm}
\section{Main ideas for the applications}
\label{sec:applications}
%\vspace{-.1cm}

In this section by sketch  the main ingredients that we use in our applications for obtaining the linear kernels, before going through the details for each problem in the next sections.

%, where for the sake of readability, each of the considered problems will be redefined.

%\ig{update!!} More precisely, we start in Appendix~\ref{sec: FPack} with the linear kernel for \textsc{Connected-Planar-$\mc{F}$-Packing}.
%This problem is illustrative, as it contains most of the technical ingredients of our approach, and will be generalized later in the two orthogonal directions mentioned in the introduction. Namely, in Appendix~\ref{sec: rFPack} we deal with the variant in which the minor-models are pairwise at distance at least $\ell$ and, in Appendix~\ref{sec: FMemb} with the version in which each vertex can belong to at most $\ell$ minor-models. Finally, we adapt in Appendix~\ref{sec: FSub}, the machinery developed for packing minors to packing subgraphs, considering both variants of the problem. For the sake of readability, each the considered problems will be redefined in the corresponding appendix.

%We refer the reader to the appendices for the formal definition of all these problems.

\vspace{.2cm}
\noindent\textbf{\textsc{General methodology}.} The next theorem will be fundamental in the applications.

%\vspace{-.15cm}
\begin{theorem}[Kim \emph{et al}.~\cite{KLP+12}]\label{theo:prot dec}
Let $c,t$ be two positive integers, let $H$ be an $h$-vertex graph, let $G$ be an $n$-vertex $H$-topological-minor-free graph, and let $k$ be a positive integer. If we are given a set $X \subseteq V(G)$ with $|X| \leq c \cdot k$ such that $\tw(G-X) \leq t$, then we can compute in time $O(n)$ an $((\alpha_{H} \cdot t \cdot c)\cdot k, 2t + h)$-protrusion decomposition of $G$, where $\alpha_{H}$ is a constant depending only on $H$, which is upper-bounded by $40 h^2 2 ^{5 h \log h}$.
\end{theorem}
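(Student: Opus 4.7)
The plan is to build the protrusion decomposition from the modulator $X$ in three stages: first produce a tree decomposition of $G-X$, then use the $H$-topological-minor-free hypothesis to control how subtrees of that decomposition attach to $X$, and finally cut the decomposition into the pieces that become the protrusions $Y_1,\dots,Y_\ell$.

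First, using Bodlaender's linear-time algorithm for fixed treewidth~\cite{Bod96}, compute in time $O(n)$ a rooted nice tree decomposition $(T,\{B_x\}_{x\in V(T)})$ of $G-X$ of width at most $t$, so every bag has size at most $t+1$. For a subtree $T'$ of $T$, let $V_{T'}$ be the union of its bags in $G-X$ and define its \emph{$X$-interaction} as $X(T'):=N(V_{T'})\cap X$. The combinatorial core is a structural lemma for $H$-topological-minor-free graphs: because $G$ excludes $H$ as a topological minor, the collection of attachment patterns of pairwise-disjoint subtrees to $X$ is severely restricted, and one can mark a set $M$ of $O(\alpha_H\cdot t\cdot|X|)=O(\alpha_H\cdot t\cdot c\cdot k)$ nodes of $T$ whose deletion leaves components $C$ all with $|X(C)|\leq h$. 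The bound $\alpha_H\leq 40\,h^2\,2^{5h\log h}$ arises from a density argument: enumerating the (at most $h^{O(h)}$) possible attachment types of a subdivision of $H$ and showing that too many disjoint rich attachments would assemble such a subdivision inside $G$, contradicting the hypothesis.

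Second, inside each component surviving the first cut, carry out a standard bottom-up subtree-splitting traversal on $T$: maintain the size of the current subtree and, whenever it exceeds a fixed threshold or acquires more than one ``heavy child'', detach it and push its root bag into the separator. A routine amortization shows that at most $O(\alpha_H\cdot t\cdot c\cdot k)$ cuts suffice, each contributing at most $t+1$ vertices to the separator, and that each resulting piece has interface to the rest of $G-X$ consisting of at most $2(t+1)$ bag vertices, coming from at most two cut bags bounding the piece in $T$. Then define
\[
Y_0\ :=\ X\ \cup\ \bigcup_{x\in M} B_x, \qquad Y_i\ :=\ V(C_i)\setminus Y_0,
\]
where $C_1,\dots,C_\ell$ are the resulting pieces. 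By construction $N(Y_i)\subseteq Y_0$, since neighbours of $Y_i$ inside $G-X$ lie in bags of $M$ and neighbours in $X$ lie in $X\subseteq Y_0$; moreover $\max\{\ell,|Y_0|\}\leq (\alpha_H\cdot t\cdot c)\cdot k$ after absorbing multiplicative constants into $\alpha_H$. Finally, $Y_i\cup N(Y_i)$ has at most $2(t+1)+h\leq 2t+h$ boundary vertices, and extending the restricted tree decomposition of $G-X$ by inserting the at most $h$ vertices of $X(C_i)$ into every bag yields a tree decomposition of width at most $t+h\leq 2t+h-1$, certifying a $(2t+h)$-protrusion.

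The whole procedure consists of a constant number of linear passes over $T$, so the total running time is $O(n)$ with hidden constants depending only on $H$ and $t$. The hard part is the structural lemma in the first stage: quantifying how excluding $H$ as a topological minor forces the separator $M$ to have size linear in $|X|$ with the stated constant $40\,h^2\,2^{5h\log h}$. This is the combinatorial heart of the result of Kim~\emph{et al.}~\cite{KLP+12} and is established there via a careful counting of subdivision patterns; the plan would be to invoke it verbatim rather than re-derive it, since the rest of the construction---the tree-decomposition split and the verification of the protrusion properties---is routine given this lemma.
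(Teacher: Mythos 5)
The paper does not prove this theorem: it is stated with attribution to Kim \emph{et al}.~\cite{KLP+12} and used as a black box, so there is no in-paper argument to compare your sketch against. What you have produced is a plausible reconstruction of the architecture behind the cited result---compute a nice tree decomposition of $G-X$ in linear time, mark $O(\alpha_H\cdot t\cdot|X|)$ tree nodes so that unmarked components attach to few vertices of $X$ (the step driven by the $H$-topological-minor-free hypothesis), then split the surviving subtrees and fold the marked bags together with $X$ into $Y_0$---and you explicitly acknowledge that the combinatorial heart, the marking lemma with the explicit constant $40\,h^2\,2^{5h\log h}$, is being invoked from~\cite{KLP+12} rather than re-derived. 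That honesty is appropriate; but it also means your proposal is a scaffold around a citation, not a self-contained proof.

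Two concrete issues. First, the inequality $2(t+1)+h\leq 2t+h$ that you use to bound the boundary size is false: it reads $2t+2+h\leq 2t+h$. Since $\tw(G-X)\leq t$ means bags have size at most $t+1$, a piece bounded by two distinct cut-bags plus its at most $h$ neighbours in $X$ naively touches $2t+2+h$ vertices, which overshoots the claimed $2t+h$. Obtaining $2t+h$ requires a more careful split (for example, arranging that each piece is a rooted subtree hanging from a single marked node so that its boundary in $G-X$ lies in one bag, or exploiting that consecutive bags on a path of a nice tree decomposition overlap in all but one vertex); a naive ``two bounding bags'' count does not give the stated parameter. Second, your structural claim that unmarked components $C$ have $|X(C)|\leq h$ is stated loosely; the precise threshold and the $2$-connectivity/counting argument in~\cite{KLP+12} is what produces the specific constant, and a careless reformulation of ``at most $h$'' versus ``at most $2(h-1)$'' or similar would change $\alpha_H$. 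Neither issue invalidates the high-level plan, but in a paper whose whole point is explicit constants these would have to be tightened before the sketch could stand as a proof.
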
%\vspace{-.20cm}

A typical application of our framework for obtaining an explicit linear kernel for a packing-certifiable problem $\Pi$ on a graph class \G is as follows. The first task is to define an encoder $\E$ and to prove that for some function $g: \mathbb{N} \rightarrow \mathbb{N}$, \E is a $g$-confined $\Pi$-encoder and \eqsG\E is DP-friendly.
%This step is usually quite technical, but it becomes reasonably easy after one gets used to it.
The next ingredient is a  polynomial-time algorithm that, given an instance $(G,k)$ of $\Pi$, either reports that $(G,k)$ is a \textsc{Yes}-instance (or a \textsc{No}-instance, depending on the problem), or finds a treewidth-modulator of $G$ with size $O(k)$. The way to obtain this algorithm depends on each particular problem and in our applications we will use a number of existing results in the literature in order to find it. Once we have such a linear treewidth-modulator, we can use Theorem~\ref{theo:prot dec} to find a linear protrusion decomposition of $G$. Finally, it just remains to apply Corollary~\ref{coro:main} to obtain an explicit linear kernel for $\Pi$ on $\G$; see Figure~\ref{fig:scheme} for a schematic illustration.

\vspace{.25cm}
\begin{figure}[h!]
\vspace{-.1cm}
\begin{center}
\scalebox{.86}{\includegraphics{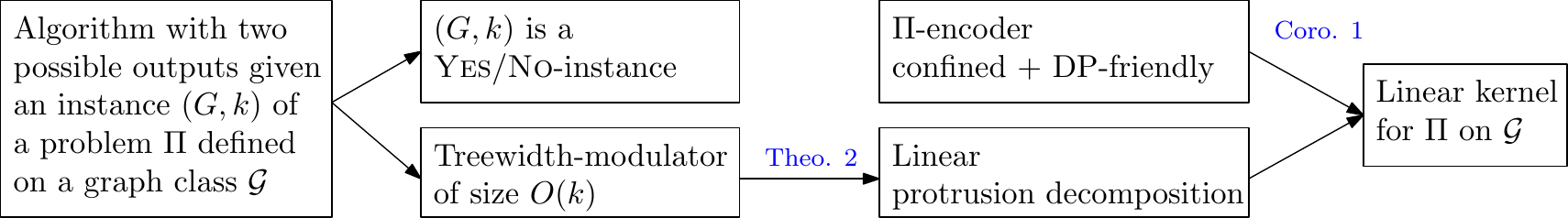}}
\end{center}\vspace{-0.3cm}
\caption{Illustration of a typical application of the framework presented in this article.}
\label{fig:scheme}
\end{figure}

Let us provide here some generic  intuition about the additional criterion mentioned in the introduction to discard the entries in the tables of an encoder. For an encoder $\E = (\C,\L,\f)$ and a function $g: \mathbb{N} \to \mathbb{N}$, we need some notation in order to define the \emph{relevant function} \fbar, which will be an appropriate modification of \f. Let $G \in \B$ with boundary $A$ and let $R_A$ be an encoding. We (recursively) define $R_A$ to be \emph{irrelevant for \fbar} if there exists a certificate $\mc{S}$ such that $(G,\mc{S},R_A)\in \L$ and $|\mc{S}|= \f(G,R_A)$ and a separator $B \subseteq V(G)$ with $|B| \leq t$ and $B \neq A$, such that $\mc{S}$ {\sl induces} an encoding $R_B$ in the graph $G_B \in \B$ with $ \fbar(G_B,R_B) = -\infty$. Here, by using the term ``induces'' we implicitly assume that $\mc{S}$ defines an encoding $R_B$ in the graph $G_B$; this will be the case in all the encoders used in our applications.

To define \fbar, we will always use the following {\sl natural} function \f, which for each problem $\Pi$ is meant to correspond to an extension to boundaried graphs of the maximization function $f^{\Pi}$ of Definition~\ref{defi:optimization-function}. For a graph $G$ and an encoding $R$, this natural function is defined as $\f(G,R)  =  \max \{k  :  \exists \mc{S}, |\mc{S}| \geq k, (G,\mc{S},R) \in \L\}$. Then we define the function $\fbar$ as follows:

%\vspace{-.25cm}
\begin{equation*} \label{eq: relevant f}
\fbar(G,R_A) =\
\left\{\begin{array}{lll}
  & -\infty,  & \text{if } \f(G,R_A) + g(t) < \max \{\f(G,R): R \in \C(\Lambda(G)) \}, \\
  &     &     \text{or if $R_A$ is irrelevant for \fbar.} \\
%  &  &\text{where the maximum is taken over the encoders $R$ that} \\
%  & & \text{have not been already declared irrelevant} \\
%
  & \f(G,R_A),  & \mbox{otherwise}.\\
\end{array}\right.
\end{equation*}
%\vspace{-.25cm}

That is, we will use the modified encoder $(\C,\L,\fbar)$. We need to guarantee that the above function \fbar is computable, as required\footnote{The fact that the values of the function \fbar can be calculated is important, in particular, in the proof of Lemma~\ref{lem:comput repres}, since we need to be able to compute equivalence classes of the equivalence relation \eqG\E.} in Definition~\ref{defi:encod}. Indeed, from the definition it follows that an encoding $R_A$ defined at a node $x$ of a given tree decomposition is irrelevant if and only if $R_A$ can be obtained by combining encodings corresponding to the children of $x$, such that at least one of them is irrelevant. This latter property can be easily computed recursively on a tree decomposition, by performing standard dynamic programming. We will omit this computability issue in the applications, as the same argument sketched here applies to all of them.

 %it can be easily seen that detecting whether an encoding is irrelevant can be computed via dynamic programming, since this property... \ig{FINISH!!}

In order to  obtain the linear treewidth-modulators mentioned before, we will use several results from~\cite{BFL+09,FLST10,FLRS10}, which in turn use the following two propositions. For an integer  $r\geq 2$, let ${\rm \Gamma}_{r}$  be the graph obtained from the  $(r\times r)$-grid by
triangulating internal faces such that all internal vertices become  of degree $6$,
all non-corner external vertices are of degree 4,
and  one corner of degree 2 is made adjacent to all vertices
of the external face (the {\em corners} are the vertices that in the underlying grid have degree 2).
As an example, the graph $\Gamma_6$ is shown in Figure~\ref{fig-gamma-k}.

\begin{figure}[htb]
%\vspace{-.3cm}
\begin{center}
\scalebox{.8}{\includegraphics{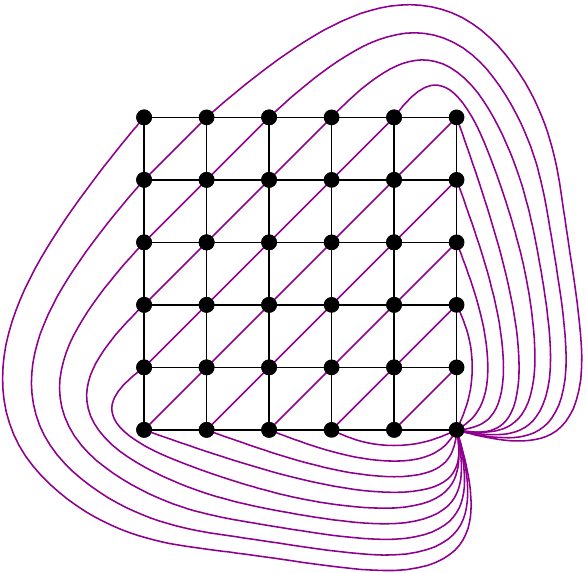}}
\end{center}\vspace{-.15cm}
\caption{The graph $\Gamma_{6}$.}
\label{fig-gamma-k}
\end{figure}

%(the grid-like graph  ${\rm \Gamma}_{r}$ is defined in Appendix~\ref{ap:fig-grid-like}).

\begin{proposition}[Demaine  and Hajiaghayi~\cite{DH08}]\label{prop:tw-minor}
There is a function $f_m:\mathbb{N}\rightarrow\mathbb{N}$ such that for every $h$-vertex graph $H$
and every positive integer $r$, every $H$-minor-free graph with treewidth at least $f_{m}(h)\cdot r$, contains
the $(r\times r)$-grid as a minor.
\end{proposition}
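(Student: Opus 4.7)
The plan is to follow the approach of Demaine and Hajiaghayi, which combines the Robertson--Seymour structure theorem for $H$-minor-free graphs with a linear excluded-grid theorem for graphs embeddable in surfaces of bounded Euler genus. By the structure theorem, there exist constants $a = a(h)$, $g = g(h)$, $k = k(h)$, and $s = s(h)$ such that every $H$-minor-free graph $G$ admits a tree decomposition of adhesion at most $s$ in which every torso is $(a,g,k)$-almost embeddable: after deleting at most $a$ apex vertices, the remainder embeds in a surface of Euler genus at most $g$ with at most $k$ vortices, each of width at most $k$. A standard argument (passing to a suitably ``deep'' node of the tree decomposition and invoking that the adhesion sets are small separators) shows that if $\tw(G) \geq W$, then some torso $G_t$ has treewidth at least $W - s$.

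Next, I would reduce $G_t$ to a cleanly surface-embedded graph. Deleting the at most $a$ apex vertices decreases treewidth by at most $a$, and each of the at most $k$ vortices can be absorbed into the embedded part by replacing it with a bounded-size gadget along its society, losing only an additive $O_h(1)$ in treewidth. This yields a graph $G''$ embedded in a surface of Euler genus at most $g$ with $\tw(G'') \geq W - O_h(1)$.

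The core ingredient is then the linear excluded-grid theorem for graphs of bounded Euler genus: there is a function $c:\mathbb{N} \to \mathbb{N}$ such that every graph embeddable in a surface of Euler genus at most $g$ with treewidth at least $c(g) \cdot r$ contains the $(r \times r)$-grid as a minor. This is established using the fact that bounded-genus graphs have linear local treewidth (the treewidth of the $d$-neighborhood of any vertex is $O_g(d)$), combined with a branch-decomposition surgery on the surface: large branch-width forces either a ``flat'' planar patch that already yields a grid via the planar grid theorem, or a handle around which one can route enough disjoint concentric cycles and transversal paths to form an $(r\times r)$-grid minor. Choosing $f_m(h)$ large enough to absorb $c(g)$, $a$, $k$, $s$, and the vortex correction, $G''$ contains an $(r\times r)$-grid minor; since every virtual edge of a torso is realized by a connected subgraph of $G$ below the corresponding adhesion, this minor lifts to an $(r\times r)$-grid minor of $G$ itself.

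The main obstacle will be establishing the linear grid-minor theorem for surface-embedded graphs, which requires delicate surgery combining local patches (controlled by bounded local treewidth) into a single global grid across handles, and is the genuinely hard technical step underlying the whole result. A secondary difficulty is the careful bookkeeping of the additive treewidth losses incurred at each structural reduction (adhesion, apices, vortices) so that the final constant $f_m(h)$ depends only on $h$ and not on $r$.
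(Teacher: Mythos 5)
The paper does not prove Proposition~\ref{prop:tw-minor}; it is imported verbatim from Demaine and Hajiaghayi~\cite{DH08}, so there is no proof of the paper's own to compare yours against. Your sketch does follow the actual route of~\cite{DH08}: reduce via the Robertson--Seymour structure theorem to an almost-embeddable torso of comparable treewidth, strip the apex and vortex structure, and apply a linear excluded-grid theorem for bounded-genus graphs, then lift the grid minor back to $G$. That is the correct architecture.

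Two points in your write-up are imprecise and would need repair in a full proof. First, a vortex is not a bounded-size object: its society can be arbitrarily long and the vortex can contain arbitrarily many vertices, so you cannot "replace it with a bounded-size gadget along its society" and charge only an additive $O_h(1)$ treewidth loss in the way you describe. What is bounded is the vortex's \emph{path-width} (at most $k(h)$), and its path decomposition is ordered along a face of the embedded part. The correct way to control the loss is either to show that deleting all vortices decreases treewidth by at most a factor depending only on $h$ (using the ordered bounded-width path decompositions to merge the vortex bags into a tree decomposition of the embedded part), or, as in~\cite{DH08}, to argue directly that a large grid minor in the almost-embedded torso already lives, up to discarding a bounded number of rows and columns, in the surface-embedded part. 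Either variant gives the linear dependence you need, but the additive-bounded-gadget phrasing is not right as stated. Second, the slack "$W - s$" when passing to a torso is not needed: adhesion sets are cliques in the torsos, so refining each torso's tree decomposition and gluing along adhesions gives $\tw(G) \le \max_t \tw(G_t)$, hence some torso already has treewidth at least $W$. You correctly identify the linear grid theorem for bounded-genus graphs as the genuine technical core; that is indeed where the bulk of the work in~\cite{DH08} (building on Demaine--Hajiaghayi--Thilikos) lies.
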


\begin{proposition}[Fomin {\em et al.} \cite{FGT11}]\label{prop:tw-contraction}
There is a function $f_c:\mathbb{N}\rightarrow\mathbb{N}$ such that
for every $h$-vertex apex graph $H$ and every positive integer $r$,
every $H$-minor-free graph with treewidth at least $f_{c}(h)\cdot r$,
contains the graph  ${\rm \Gamma}_{r}$ as a contraction.
\end{proposition}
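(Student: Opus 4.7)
The plan is to first extract a large grid minor via Proposition~\ref{prop:tw-minor} and then upgrade it, using the apex-minor-free hypothesis, to a contraction witnessing $\Gamma_r$. Since an apex $h$-vertex graph is in particular an $h$-vertex graph, Proposition~\ref{prop:tw-minor} yields that, provided $\tw(G)\geq f_m(h)\cdot r'$, $G$ contains the $(r'\times r')$-grid as a minor. I will choose $r':=c(h)\cdot r$ for a constant $c(h)$ to be determined, and set $f_c(h):=f_m(h)\cdot c(h)$. Let $\{B_{ij}\}_{1\leq i,j\leq r'}$ be pairwise disjoint connected branch sets realizing this minor, and identify a central $(r\times r)$-subgrid $S$ with $\Theta(r'-r)$ padding layers around it, together with the residual vertex set $R=V(G)\setminus \bigcup_{i,j}B_{ij}$.

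The plan for upgrading to a contraction has three ingredients. First, triangulate the internal faces of $S$: each internal face of the central subgrid is realized by four branch sets $B_{i,j}, B_{i+1,j}, B_{i,j+1}, B_{i+1,j+1}$, and the diagonal edges demanded by $\Gamma_r$ can be obtained by rerouting connections through an adjacent padding branch set that gets merged into one of the four corners. Second, make one corner branch set of $S$ absorb $R$ and the non-central padding, producing a single contracted ``apex-like'' vertex. Third, ensure that this apex-like vertex is adjacent to every external-face vertex of $S$, which is the only non-trivial requirement (observe that the contraction must cover all of $V(G)$, so letting the apex branch set soak up the ``leftover'' part of the graph is natural).

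The main obstacle is the third ingredient. I plan to argue by contradiction: suppose that no choice of grid minor allows the residual-plus-padding to contract into a single vertex touching every external vertex of $S$. Then the outer boundary of $S$ splits into several arcs, each of whose surrounding vertices is reached by a distinct connected component of the residual structure. At this point the apex-minor-free hypothesis enters: letting $a$ be an apex vertex of $H$, the graph $H-a$ is planar, and any sufficiently large grid contains every planar graph as a minor; combining such a minor model of $H-a$ inside $S$ with one of the residual components acting as the apex $a$ would yield $H$ as a minor of $G$, contradicting $H$-minor-freeness. Quantifying exactly how large $c(h)$ must be to both embed a minor model of $H-a$ within $S$ \emph{and} simultaneously route enough disjoint connections from the residual to realize the incidences with the apex is the technical calculation that drives the bound $f_c(h)$, and is precisely where the structure theorem for apex-minor-free graphs (or its quantitative refinements) does the heavy lifting.
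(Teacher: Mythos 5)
The paper does not prove this proposition; it cites it from Fomin, Golovach and Thilikos~\cite{FGT11} as a black box, so there is no internal argument to compare your sketch against. What I can do is assess your sketch on its own merits, and there I see a genuine gap.

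The central difficulty you have not addressed is that a \emph{contraction} (unlike a minor) admits no deletions: the branch sets must partition all of $V(G)$, and the contracted graph has an edge between two parts if and only if there is \emph{any} $G$-edge between them. So the obstruction is not only creating the triangulation edges and the apex adjacencies of $\Gamma_r$; it is also \emph{avoiding} every edge that $\Gamma_r$ does not have. Your plan of letting one corner branch set absorb $R$ together with the non-central padding immediately runs into this: if $R$ has even one neighbour in the interior of the central subgrid $S$, the resulting apex-like part becomes adjacent to an internal vertex of the grid, which $\Gamma_r$ forbids (the distinguished corner is adjacent only to the external face). Similarly, the residual or the padding can short-circuit two non-adjacent grid cells. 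Your sketch treats these as non-issues, but they are the entire content of the theorem; the plain grid minor of Proposition~\ref{prop:tw-minor} says nothing about where the leftover vertices attach.

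Your contradiction argument for the ``third ingredient'' is also too loose to close this. The negation of ``the residual can be contracted into a single part adjacent exactly to the outer face'' is not ``the outer boundary splits into arcs hit by distinct components'' --- the failure mode is far more varied (e.g.\ a single connected residual component touching both the boundary and the interior). Moreover, to produce an $H$-minor from a bad residual you would need to route a model of $H-a$ inside $S$ \emph{and} realise every $a$-incidence through the offending residual component; you say this yourself, but nothing in the sketch forces the residual's attachment points to coincide with the neighbourhood of $a$ in the chosen model. The actual argument in~\cite{FGT11} does not proceed by grid minor plus ad hoc absorption: it leans on the structural characterisation of apex-minor-free graphs (bounded local treewidth / almost-embeddability) to control where the rest of the graph can attach to the grid, which is precisely the piece your proposal is missing.
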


The current best upper bound~\cite{KaKo12} for the function $f_m$ is $f_m (h) = 2^{ O(h^2  \log h)}$ and, up to date, there is no explicit bound for the function $f_c$. We would like to note that this non-existence of explicit bounds for $f_c$ is an issue that concerns the {\sl graph class} of $H$-minor-free graphs and it is perfectly compatible with our objective of providing explicit constants for particular {\sl problems} defined on that graph class, which will depend on the function $f_c$.

%In order to find fully explicit bounds, we leave as future work using the linear-time deterministic protrusion replacer recently introduced by Fomin \emph{et al}.~\cite{FLM+15}, partially inspired from~\cite{KviaDP}.

Let us now provide a sketch of the main basic ingredients used in each of the applications.

\vspace{.25cm}
\noindent\textbf{Packing minors.} Let $\mc{F}$ be a fixed finite set of graphs. In the \textsc{$\mc{F}$-Packing} problem, we are given a graph $G$ and an integer parameter $k$ and the question is whether $G$ has $k$ vertex-disjoint subgraphs $G_1,\ldots,G_k$,
                     each containing some graph in $\mc{F}$ as a minor.
%We define the \textsc{$\mc{F}$-Packing} problem:
%
%\vspace{.3cm}
%\begin{boxedminipage}{.9\textwidth}
%\textsc{$\mc{F}$-Packing}
%\vspace{.1cm}
%
%\begin{tabular}{ r l }
%\textbf{Instance:}  & A graph $G$ and a non-negative integer $k$. \\
%\textbf{Parameter:} & The integer $k$.\\
%\textbf{Question:}  & Does $G$ have $k$ vertex-disjoint subgraphs $G_1,\ldots,G_k$\\
%                    & each containing some graph in $\mc{F}$ as a minor?\\
%\end{tabular}
%\end{boxedminipage}
%\vspace{.3cm}
When all the graphs in $\mc{F}$ are connected and $\mc{F}$ contains at least one planar graph, we call the problem  \textsc{Connected-Planar-$\mc{F}$-Packing}. The encoder uses the notion of rooted packing introduced by Adler \emph{et al}.~\cite{ADF+11}, which we also used in~\cite{KviaDP} for \textsc{Connected-Planar-$\mc{F}$-Deletion}. To obtain the treewidth-modulator, we use the  \emph{Erd\H{o}s-P\'{o}sa property} for graph minors~\cite{ErPo65,RoSe86,ChChSTOC13}. More precisely, we use that on minor-free graphs, as proved by Fomin \emph{et al}.~\cite{FST11}, if $(G,k)$ is a \textsc{No}-instance of \textsc{Connected-Planar-$\mc{F}$-Packing}, then $(G,k')$ is a \textsc{Yes}-instance of \textsc{Connected-Planar-$\mc{F}$-Deletion} for $k' = O(k)$. Finally, we use a result of Fomin \emph{et al}.~\cite{FLST10} that provides a polynomial-time algorithm to find treewidth-modulators for \textsc{Yes}-instances of \textsc{Connected-Planar-$\mc{F}$-Deletion}. The obtained constants involve, in particular, the currently best known constant-factor approximation of treewidth on minor-free graphs.

\vspace{.25cm}
\noindent\textbf{Packing scattered minors.} Let  $\mc{F}$ be a fixed finite set of graphs and let $\ell$ be a positive integer.   In the $\ell$-$\mc{F}$-Packing problem, we are given a graph $G$ and an integer parameter $k$ and the question is whether $G$ has $k$ subgraphs $G_1,\ldots,G_k$ pairwise at distance at least $\ell$,
                     each containing some graph from $\mc{F}$ as a minor.
%, the problem is defined as follows.
%
%\vspace{.4cm}
%\begin{boxedminipage}{.9\textwidth}
%\textsc{$\ell$-$\mc{F}$-Packing}
%\vspace{.1cm}
%
%\begin{tabular}{ r l }
%\textbf{Instance:}  & A graph $G$ and a non-negative integer $k$. \\
%\textbf{Parameter:} & The integer $k$.\\
%\textbf{Question:}  & Does $G$ have $k$ subgraphs $G_1,\ldots,G_k$ pairwise at distance at least $\ell$,\\
%                    & each containing some graph from $\mc{F}$ as a minor?\\
%\end{tabular}
%\end{boxedminipage}
The encoder for \rFPack is a combination of the encoder for \FPack and the one for $\ell$-\textsc{Scattered Set} that we used in~\cite{KviaDP}. For obtaining the treewidth-modulator, unfortunately we cannot proceed as for packing minors, as up to date no linear Erd\H{o}s-P\'{o}sa property for packing scattered planar minors is known; the best bound we are aware of is $O(k\sqrt{k})$, which is {\sl not} enough to obtain a linear kernel. To circumvent this problem, we use the following trick: we (artificially) formulate \rFPack as a vertex-certifiable problem and prove that it fits the conditions required by the framework of Fomin~\emph{et al}.~\cite{FLST10} to produce a treewidth-modulator. (We would like to stress that this formulation of the problem as a vertex-certifiable one is {\sl not} enough to apply the results of~\cite{KviaDP}, as one has to further verify the necessary properties of the encoder are satisfied and it does not seem to be an easy task at all.) Once we have it, we consider the original formulation of the problem to define its encoder. As a drawback of resorting to the general results of~\cite{FLST10} and, due to the fact that \rFPack is contraction-bidimensional, we provide linear kernels for the problem on the (smaller) class of apex-minor-free graphs. %which is a subclass of general minor-free graphs.

\vspace{.25cm}
\noindent\textbf{Packing overlapping minors.}  Let  $\mc{F}$ be a fixed finite set of graphs and let $\ell$ be a positive integer. In the \textsc{$\cal F$-Packing with $\ell$-Membership} problem, we are given a graph $G$ and an integer parameter $k$ and the question is whether $G$ has $k$ subgraphs $ G_1,\dots,G_k$ such that
 each subgraph contains some graph from $\cal F$ as a minor,
                     and each vertex of $G$ belongs to at most $\ell$ subgraphs.
%\vspace{.4cm}
%\begin{boxedminipage}{.9\textwidth}
%\textsc{$\cal F$-Packing with $\ell$-Membership}
%\vspace{.1cm}
%
%\begin{tabular}{ r l }
%\textbf{Instance:}  & A graph $G$ and a non-negative integer $k$. \\
%\textbf{Parameter:} & The integer $k$.\\
%\textbf{Question:}  & Does $G$ have $k$ subgraphs $ G_1,\dots,G_k$ such that\\
%                    & each subgraph contains some graph from $\cal F$ as a minor,\\
%                    & and each vertex of $G$ belongs to at most $\ell$ subgraphs?\\
%\end{tabular}
%\end{boxedminipage}
%\vspace{.4cm}
The encoder is an enhanced version of the one for packing minors, in which we allow a vertex to belong simultaneously to several minor-models. To obtain the treewidth-modulator,  the situation is simpler than above, thanks to the fact that a packing of models is in particular a packing of models with $\ell$-membership. This allows us to use the linear Erd\H{o}s-P\'{o}sa property that we described for packing minors and therefore to construct linear kernels on minor-free graphs.

\vspace{.25cm}
\noindent\textbf{Packing scattered and overlapping subgraphs.} The definitions of the corresponding problems are similar to the ones above, just by replacing the minor  by the subgraph relation. The encoders are simplified versions of those that we defined for packing scattered and overlapping minors, respectively. The idea for obtaining the treewidth-modulator is to apply a simple reduction rule that removes all vertices not belonging to any of the copies of the subgraphs we are looking for. It can be easily proved that if a reduced graph is a \textsc{No}-instance of the problem, then it is a \textsc{Yes}-instance of $\ell'$-\textsc{Dominating Set},
where $\ell'$ is a function of the integer $\ell$ corresponding to the problem and the largest diameter of a subgraph in the given family. We are now in position to use the machinery of~\cite{FLST10} for $\ell'$-\textsc{Dominating Set} and find a linear treewidth-modulator.

% !TEX root = ../K-DP-pack-arXiv_NEW.tex

%elements d'encodeur
\renewcommand{\Epi}[2]{\ensuremath{\mc{#1 E}_{\!\mc{F}\!\sc{P}}^{#2}}\xspace}
\renewcommand{\Ebar}{{\Epi{\bar}{}}\xspace}
\renewcommand{\fbar}{\ensuremath{\bar f^{\Epi{}{}}_g}\xspace}

\section{A linear kernel for  \textsc{Connected-Planar-$\mc{F}$-Packing}} \label{sec: FPack}

Let $\mc{F}$ be a finite set of graphs. We define the \textsc{$\mc{F}$-Packing} problem as follows.

\vspace{.4cm}
\begin{boxedminipage}{.9\textwidth}
\textsc{$\mc{F}$-Packing}
\vspace{.1cm}

\begin{tabular}{ r l }
\textbf{Instance:}  & A graph $G$ and a non-negative integer $k$. \\
\textbf{Parameter:} & The integer $k$.\\
\textbf{Question:}  & Does $G$ have $k$ vertex-disjoint subgraphs $G_1,\ldots,G_k$\\
                    & ~~~each containing some graph in $\mc{F}$ as a minor?\\
\end{tabular}
\end{boxedminipage}
\vspace{.4cm}

In order to build a protrusion decomposition for instances of the above problem, we use a version of the Erd\H{o}s-P\'{o}sa property (see Definition~\ref{defi: Erdos Posas} and Theorem~\ref{theo: Erdos Posa}) %\cite{ErPo65} \cite{FST11}
 that establishes a linear relation between \NO-instances of \FPack and \YES-instances of \textsc{$\mc{F}$-Deletion}, and then we apply tools of Bidimensionality theory on \textsc{$\mc{F}$-Deletion} (see Corollary~\ref{coro: tw modul FDel}).
Hence, we also need to define the \textsc{$\mc{F}$-Deletion} problem.

\vspace{.4cm}
\begin{boxedminipage}{.9\textwidth}
\textsc{$\mc{F}$-Deletion}
\vspace{.1cm}

\begin{tabular}{ r l }
\textbf{Instance:} & A graph $G$  and a non-negative integer $k$. \\
\textbf{Parameter:} & The integer $k$.\\
\textbf{Question:} & Does $G$ have a set $S\subseteq V(G)$
such that $|S|\leqslant k$\\ &~~~and $G-S$ is $H$-minor-free for every $H\in
\mc{F}$?\\
\end{tabular}
\end{boxedminipage}
\vspace{.4cm}

When all the graphs in $\mc{F}$ are connected, the corresponding problems are called  \textsc{Connected-$\mc{F}$-Packing} and \textsc{Connected-$\mc{F}$-Deletion},
and when $\mc{F}$ contains at least one planar graph, we call them  \textsc{Planar-$\mc{F}$-Packing} and \textsc{Planar-$\mc{F}$-Deletion}, respectively. When both conditions are satisfied, the problems are called  \textsc{Connected-Planar-$\mc{F}$-Packing} and \textsc{Connected-Planar-$\mc{F}$-Deletion} (the parameterized versions of these problems are respectively denoted by $\textsc{c}\mc{F}\textsc{P}$, $\textsc{c}\mc{F}\textsc{D}$, $\textsc{p}\mc{F}\textsc{P}$, $\textsc{p}\mc{F}\textsc{D}$, $\textsc{cp}\mc{F}\textsc{P}$, and $\textsc{cp}\mc{F}\textsc{D}$).

%We define the \textsc{Connected-Planar-$\mc{F}$-Packing} (denoted by $\textsc{cp}\mc{F}\textsc{P}$) \todo{all-connected one-plan} problem where every graph in $\mc F$ has to be connected and at least one has to be planar.

\smallskip

In this section we present a linear kernel for \textsc{Connected-Planar-$\mc{F}$-Packing} on the family of graphs excluding a fixed graph $H$ as a minor.
%he proof relies on the Erd\H{o}s-P\'{o}sa property \cite{???}.

\smallskip

We need to define which kind of structure a certificate for \FPack is. For an arbitrary graph, a solution will consist of a \emph{packing of models} as defined below. We also recall the definition of model.

\begin{definition} \label{defi: model}
A \emph{model} of a graph $F$ in a graph $G$ is a mapping $\Phi$ that assigns
to every vertex $v \in V(F)$ a non-empty connected subgraph $\Phi(v)$ of $G$, and % $\Phi(v) \subseteq G$, and
to every edge $e \in E(F)$ an edge $\Phi(e) \in E(G)$, such that:
\begin{itemize}
       \item[$\bullet$] the graphs $\Phi(v)$ for $v \in V(F)$ are mutually vertex-disjoint and
             the edges $\Phi(e)$ for $e \in E(F)$ are pairwise distinct;
       \item[$\bullet$] for $\{u,v\} \in E(F)$, $\Phi(\{u,v\})$ has one endpoint in $V(\Phi(u))$ and the other in $V(\Phi(v))$.
\end{itemize}

We denote by $\Phi(F)$ the subgraph of $G$ obtained by the (disjoint) union of the subgraphs $\Phi(v)$ for $v\in V(F)$ plus the edges $\Phi(e)$ for $e \in E(F)$.
\end{definition}

\begin{definition} \label{defi: packing models}

Given a set $\mc{F}$ of minors and a graph $G$, a \emph{packing of models} $\mc{S}$ is a set of vertex-disjoint models. That is, the graphs $\Phi(F)$ for $\Phi \in \mc{S}, F \in  \mc{F}$ are pairwise vertex-disjoint.

%We also define a \emph{potential model} \todo[color=green!40]{redefinir correctement} of $F$, a model of a subgraph of $F$ where some $\Phi(v)$ for $v\in V(F)$ are allowed to be disconnected (according to some rooted packing).
%A \emph{packing of partial models} is a packing that contain models and maybe \emph{partial models} $\mc{F}$ that is, models of subgraphs of any $F \in \mc{F}$.
\end{definition}

\subsection{A protrusion decomposition for an instance of \FPack}\label{ssec: FPack prot decompo}

In order to find a linear protrusion decomposition, we need some preliminaries.
\begin{definition}\label{defi: Erdos Posas}
A class of graphs $\mc{F}$ satisfies the \emph{Erd\H{o}s-P\'{o}sa property} \cite{ErPo65} if
   there exists a function $f$ such that, for every integer $k$ and every graph $G$, either $G$ contains $k$ vertex-disjoint subgraphs each isomorphic to a graph in $\mc{F}$,
   or there is a set $S \subseteq V(G)$ of at most $f(k)$ vertices such that $G - S$ has no subgraph in $\mc{F}$.
\end{definition}

Given a connected graph $F$, let $\mc{M}(F)$ be the class of graphs that can be contracted to $F$. Robertson and Seymour \cite{RoSe86} proved that $\mc{M}(F)$ satisfies the Erd\H{o}s-P\'{o}sa property if and only if $F$ is planar.
A significant improvement on the function $f(k)$ has been recently provided by Chekuri and Chuzhoy \cite{ChChSTOC13}. When $G$ belongs to a proper minor-closed family, Fomin \emph{et al}.~\cite{FST11} proved that $f$ can be taken to be linear for any planar graph $F$. It is not difficult to see that these results also hold if instead of a connected planar graph $F$, we consider a finite family $\mc{F}$ of connected graphs containing at least one planar graph. This discussion can be summarized as follows, with a precise upper bound on the desired linear constant.

\begin{theorem}[Fomin \emph{et al}.~\cite{FST11}]\label{theo: Erdos Posa}
Let $\mc{F}$ be a finite family of connected graphs containing at least one planar graph on $r$ vertices, let $H$ be an $h$-vertex graph, and let \G be the class of $H$-minor-free graphs. There exists a constant $c$ such that if $(G,k)\notin \textsc{cp}\mc{F}\textsc{P}_\G$,
then $(G, c\cdot r\cdot 2^{15h+8h \log h} \cdot k) \in  \textsc{cp}\mc{F}\textsc{D}_\G$.
%\todo{ il y a repetition entre $F$ et nom du pb}
%Let $\mc{F}$ be a finite family of connected graphs containing at least one planar graph on $r$ vertices, and let $H$ be an $h$-vertex graph. There exists a constant $c$ such that if $(G,k)$ is a \textsc{No}-instance of the \textsc{Connected-Planar-$\mc{F}$-Packing} problem restricted to $H$-minor free graphs,
%then $(G, c\cdot r\cdot 2^{15h+8h \log h} \cdot k)$ is a \textsc{Yes}-instance of the \textsc{Connected-Planar-$\mc{F}$-Deletion} problem restricted to $H$-minor free graphs.
\end{theorem}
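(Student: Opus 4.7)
The result is essentially the linear Erd\H{o}s--P\'{o}sa property of Fomin, Saurabh, and Thilikos~\cite{FST11} for planar connected minors inside proper minor-closed classes, promoted from a single planar graph to a finite family $\mc{F}$ containing one, and packaged with explicit constants. My plan is to first reduce to a single-graph statement, then invoke the grid-based Erd\H{o}s--P\'{o}sa inequality for $F$, and finally track the constants.

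I would begin by isolating a planar connected $F \in \mc{F}$ with $|V(F)|=r$ and observing the inclusion of packings: every $k$-packing of $F$-models is a $k$-packing of $\mc{F}$-models, so the hypothesis $(G,k)\notin\textsc{cp}\mc{F}\textsc{P}_\G$ already implies that $G$ has no $k$ disjoint $F$-models. For the deletion side, since $\mc{F}$ is a fixed finite family of connected graphs, destroying all $\mc{F}$-minors can be organized one graph at a time: any non-planar $F' \in \mc{F}$ is handled by the same minor-by-minor recursion, contributing only a multiplicative $|\mc{F}|$ that is absorbed into the absolute constant $c$.

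The heart of the argument is then the single-graph linear Erd\H{o}s--P\'{o}sa inequality for the planar $F$. Since every planar graph on $r$ vertices is a minor of the $(2r \times 2r)$-grid, Proposition~\ref{prop:tw-minor} implies that any $H$-minor-free graph of treewidth at least $f_m(h) \cdot 2r \cdot \sqrt{k}$ contains a grid large enough to be tiled into $k$ disjoint $(2r \times 2r)$-subgrids, hence $k$ disjoint $F$-minors. Alone this yields only a $\sqrt{k}$ treewidth bound in the no-packing regime; the linear upgrade from~\cite{FST11} is obtained by an iterative peeling argument: while an $F$-minor persists, locate a locally grid-structured $F$-model of constant ``hitting cost'' $\gamma(H,F)$ depending only on $H$ and $F$, add those vertices to the growing hitting set, and recurse. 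After at most $k$ rounds the procedure either produces a $k$-packing (contradicting the hypothesis) or terminates with a hitting set of size at most $\gamma(H,F) \cdot k$ destroying all $F$-minors.

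The remaining task, and the main obstacle, is to extract the explicit form $c \cdot r \cdot 2^{15h + 8h \log h}$ of the constant. The linear factor $r$ encodes the $(2r \times 2r)$-grid embedding of $F$; the exponential factor $2^{O(h \log h)}$ in $h$ must come from a more delicate analysis than the naive combination of $f_m(h) = 2^{O(h^2 \log h)}$ of~\cite{KaKo12} with the peeling step (for instance by invoking the Robertson--Seymour excluded-minor structure theorem directly, or by using the grid-minor bound only on grids whose size depends on $r$ rather than on $h$); the absolute constant $c$ bundles the $|\mc{F}|$-factor of the family reduction together with the per-iteration geometric overhead. Tracking how these contributions combine precisely into the coefficients $15$ and $8$ of the exponent is the delicate combinatorial bookkeeping at the heart of the proof.
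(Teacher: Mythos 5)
The paper does not prove this theorem: it is imported directly from Fomin, Saurabh, and Thilikos~\cite{FST11}, with only the one-sentence remark that "it is not difficult to see" that the single-planar-graph result extends to a finite family of connected graphs containing a planar member. So the comparison is really between your reconstruction and the argument of~\cite{FST11}, which the paper treats as a black box.

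Your outline has the right shape but contains a genuine error and leaves the crux unproven. The error is in the family reduction on the deletion side. You propose to handle each $F' \in \mc{F}$ "one graph at a time by the same minor-by-minor recursion," but Robertson and Seymour proved that $\mc{M}(F')$ has the Erd\H{o}s--P\'{o}sa property \emph{if and only if} $F'$ is planar, and $\mc{F}$ may contain non-planar members. A non-planar $F'$ therefore cannot be treated independently via its own grid-based Erd\H{o}s--P\'{o}sa inequality. The correct route (and what makes the family extension "not difficult") is asymmetric: one uses the planar $F$ only to bound $\tw(G)$ in the no-packing regime, and then, once treewidth is controlled, one hits \emph{all} $\mc{F}$-models simultaneously by a bottom-up separator argument on a tree decomposition; connectivity of every $F' \in \mc{F}$ guarantees that any $\mc{F}$-model crossing a minimal bag must intersect that bag. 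Planarity of the other members of $\mc{F}$ plays no role.

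The second, deeper gap is the step you label "constant hitting cost $\gamma(H,F)$." As you describe it, peeling one $F$-model and adding its separator bag costs $O(\tw(G))$ vertices per round, and the treewidth bound you get from tiling a square grid is $O(r\sqrt{k})$, so the naive account yields a hitting set of size $O(r\,k^{3/2})$, not $O(r\,2^{O(h\log h)}\,k)$. The claim that each round can be charged only $O_{H,F}(1)$ vertices is exactly the theorem of~\cite{FST11}; asserting it without justification does not constitute a proof, and the "locally grid-structured" phrasing does not explain why such a cheaply-isolable model must exist. You do correctly observe that the advertised constant $2^{15h+8h\log h}$ is exponentially smaller than $f_m(h)=2^{O(h^2\log h)}$, so the argument cannot simply route through Proposition~\ref{prop:tw-minor}; but you leave that mismatch, together with the explicit coefficients $15$ and $8$, entirely unresolved. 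In short: the reduction-to-$F$ step needs to be redone as sketched above, and the linear upgrade and the explicit constant both still require the actual machinery of~\cite{FST11}.
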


%%%%%%%%%%%%%%%%%%%%%%%%%%%%%%%%%%%%%%%%%%%%%%%%%
The next theorem provides a way to find a treewidth-modulator for an instance of a problem verifying the so-called \emph{bidimensionality} and \emph{separability} properties restricted to the class of (apex)-minor-free graphs.
%Loosely speaking, the algorithm consists in building a tree decomposition of the instance; then find a bag that separate the instant in such a way that the reparation of the solution is balanced; and find recursively an other bag in the two new tree decomposition.
Loosely speaking, the algorithm consists in building a tree decomposition of the instance, then finding a bag that separates the instance in such a way that the solution is balanced, and finally finding recursively other bags in the two new tree decompositions.
In order to make the algorithm constructive, we need to build a tree decomposition of the input graph whose width differs from the optimal one by a constant factor. To this aim, we use a (polynomial) approximation algorithm of treewidth on minor-free graphs, which is well-known to exist. Let us denote by $\tau_H$ this approximation ratio. To the best of our knowledge there is no explicit upper bound on this ratio, but one can be derived from the proofs of Demaine and Hajiaghayi~\cite{DH08}. We note that any improvement on this constant will directly translate to the size of our kernels. We also need to compute an initial solution of the problem under consideration. Fortunately, for all our applications, there is an {\sc{EPTAS}} on minor-free graphs \cite{FLRS10}. By choosing the approximation ratio of the solution to be $2$, we can announce the following theorem adapted from Fomin \emph{et al}.~\cite{FLST10}.

 %hence we can choose the ratio to be $2$, for instance.
%We state a constructive version of the theorem \cite{FLST10}, where we add the factor due to approximation.

\begin{theorem}[Fomin \emph{et al}.~\cite{FLST10}] \label{theo: tw mod}
For any real $\epsilon > 0 $ and any minor-bidimensional (resp. contraction-bidimensional) linear-separable problem $\Pi$  on the class \G of graphs that exclude a minor $H$ (resp. an apex-minor $H$), there exists an integer $t \geq 0$ such that any graph $G\in\G$ has a treewidth-$t$-modulator of size at most $\epsilon\cdot f^\Pi(G)$.
\end{theorem}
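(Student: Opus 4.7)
The plan is to follow the bidimensionality-based framework of Fomin, Lokshtanov, Saurabh, and Thilikos. The two ingredients provided by the hypothesis are: (i) bidimensionality, which on $H$-minor-free (resp. apex-minor-free) graphs yields via Proposition~\ref{prop:tw-minor} (resp. Proposition~\ref{prop:tw-contraction}) a constant $c_1 > 0$ with $f^\Pi(G) \geq c_1 \cdot \tw(G)^2$ whenever $\tw(G)$ is large enough; and (ii) linear separability, which gives a constant $c_2 > 0$ such that for any separator $S$ splitting $G$ into parts $G_1, G_2$ one has $f^\Pi(G_1) + f^\Pi(G_2) \leq f^\Pi(G) + c_2 |S|$. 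Given the target $\epsilon$, I would fix $t := t(\epsilon, c_1, c_2, \tau_H)$ sufficiently large; the precise value will be determined by the recursion at the end of the argument.

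The construction proceeds by induction on $|V(G)|$. First I would invoke the polynomial constant-factor treewidth approximation on $H$-minor-free graphs to produce a tree decomposition of $G$ of width at most $\tau_H \cdot \tw(G)$. If $\tw(G) \leq t$, then the empty set is a treewidth-$t$-modulator and we are done. Otherwise, using a standard argument on weighted tree decompositions (assigning each bag the $f^\Pi$-weight of the corresponding vertex set), I would locate a bag $B$ whose removal splits $G$ into induced subgraphs $G_1, \dots, G_q$ such that $f^\Pi(G_i) \leq \tfrac{2}{3} f^\Pi(G)$ for every $i$. Recursively apply the statement to each $G_i$ with the same $\epsilon$ to obtain a treewidth-$t$-modulator $X_i$ of $G_i$ of size at most $\epsilon \cdot f^\Pi(G_i)$, and return $X := B \cup X_1 \cup \dots \cup X_q$, which is clearly a treewidth-$t$-modulator of $G$.

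To bound $|X|$, note that $|B| \leq \tau_H \cdot \tw(G) + 1$, and by bidimensionality $\tw(G) \leq \sqrt{f^\Pi(G)/c_1}$, so $|B| = O(\sqrt{f^\Pi(G)})$, which is $o(f^\Pi(G))$ once $f^\Pi(G) \geq c_1 t^2$ with $t$ large. Iterated application of linear separability to the separator $B$ gives $\sum_i f^\Pi(G_i) \leq f^\Pi(G) + c_2 |B|$, so by induction
\[
|X| \;\leq\; |B| + \epsilon \sum_i f^\Pi(G_i) \;\leq\; |B| + \epsilon \lp f^\Pi(G) + c_2 |B| \rp \;=\; \epsilon \cdot f^\Pi(G) + (1 + \epsilon c_2)|B|.
\]
Since $|B| = O(\sqrt{f^\Pi(G)})$, the residual term $(1 + \epsilon c_2)|B|$ can be absorbed by slightly shrinking $\epsilon$ in the recursive call (say using $\epsilon/2$ for sub-instances and choosing $t$ so that the slack covers the surface term), which closes the induction.

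The main obstacle is the last step: balancing the constants so that the inequality truly contracts from scale to scale. Two points require care. First, one must verify that a bag $B$ with the desired $\tfrac{2}{3}$-balanced property always exists; this is a routine weighted variant of the classical balanced-separator lemma for tree decompositions, but it must be performed with the $f^\Pi$-values rather than vertex counts. Second, the choice of $t$ must simultaneously guarantee that the bidimensionality inequality $f^\Pi(G) \geq c_1 \tw(G)^2$ is active (which requires $\tw(G)$ to exceed the threshold of the Excluded Grid Theorem, hence the dependence of $t$ on $H$ through $f_m(h)$ or $f_c(h)$) and that the ratio $|B|/f^\Pi(G)$ is smaller than any prescribed fraction of $\epsilon$. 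Tracking these constants through the recursion yields the claimed bound, with the resulting $t$ depending on $\epsilon$, on $\tau_H$, and on the bidimensionality and separability constants of $\Pi$.
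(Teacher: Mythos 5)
The paper does not prove this statement: it cites Theorem~\ref{theo: tw mod} directly from Fomin \emph{et al}.~\cite{FLST10}, and only offers an informal one-sentence description of the FLST10 algorithm (``building a tree decomposition \ldots finding a bag that separates the instance in such a way that the solution is balanced \ldots recursively''). Your high-level scheme matches that description exactly, so the overall approach is the intended one.

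There is, however, a genuine gap in the bookkeeping, and you partially flag it yourself. Applying the induction hypothesis with the \emph{same} $\epsilon$ to the pieces $G_i$ gives $|X|\leq \epsilon f^\Pi(G) + (1+\epsilon c_2)|B|$, which strictly overshoots the target whenever $|B|>0$; the fix you suggest --- ``slightly shrinking $\epsilon$ in the recursive call (say using $\epsilon/2$ for sub-instances) and choosing $t$ so that the slack covers the surface term'' --- does not close the induction. If you shrink $\epsilon$ at every level of the graph recursion, the sequence $\epsilon, \epsilon/2, \epsilon/4,\dots$ tends to $0$ over an unbounded recursion depth (the depth is $\Theta(\log f^\Pi(G))$, which is not bounded by any function of $t$ alone), and no base case is ever reached with a positive budget; if you shrink it only once, the induction hypothesis you are invoking for the $G_i$ no longer has the form you claim. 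What is actually needed is to exploit the balance condition $f^\Pi(G_i)\leq \tfrac{2}{3}f^\Pi(G)$ \emph{quantitatively}: it forces the per-node separator sizes $|B_v|=O(\sqrt{f^\Pi(G_v)})$ along any root-to-leaf path of the recursion tree to decay geometrically, and a global amortization over the whole recursion tree (or, equivalently, a strengthened induction hypothesis of the shape $|X|\leq \epsilon f^\Pi(G) - \beta\sqrt{f^\Pi(G)}$ for a suitable $\beta=\beta(\epsilon,c_1,c_2,\tau_H)$, valid for all $G$ with $\tw(G)>t$) then yields a constant $t$ independent of $G$. You identify this as ``the main obstacle'' but do not actually carry out the argument, so the proof is incomplete at precisely the step that makes the theorem true. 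A secondary, smaller point: the existence of a bag $B$ that is $\tfrac{2}{3}$-balanced with respect to $f^\Pi$ is not quite a ``routine weighted variant'' of the centroid lemma, because $f^\Pi$ is not an additive vertex weight; establishing it already uses the separability hypothesis (to control $f^\Pi$ of the piece lying outside the chosen subtree), and this deserves to be said explicitly.
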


The impact of the tree decomposition approximation is hidden in the value of $t$, and the impact of the solution approximation will be hidden in the ``$O$'' notation. The parameters from the class of graphs or from the problem  will affect the {\sl time complexity} of the algorithm, and not the size of our kernel.
In our applications we state corollaries of the above result (namely, Corollary~\ref{coro: tw modul FDel} and Corollary~\ref{coro: tw modul rFpack}) in which we choose $\epsilon = 1$ and we provide an explicit bound on the value of $t$.
%%%%%%%%%%%%%%%%%%%%%%%%%%%%%%%%%%%%%%%%%%%%%%%%%
%\newline

%Now we need the following theorem %, based many results \cite{FLST10},\cite{KaKo12,KaRe10}.

We are in position to state the following corollary
 claiming that, given an instance of \textsc{Planar-$\mc{F}$-Deletion}, in polynomial time we can either find a treewidth-modulator or report that is a \NO-instance. This is a corollary of the result of Fomin \emph{et al}.~\cite{FLST10} stated in Theorem \ref{theo: tw mod}, where $\epsilon$ is fixed to be $1$.
%, the upper bound $2^{O(h^2log(h))}$ \todo{$f_m$ a une dif de log} on the multiplicative constant is du to Kawarabayashi and Kobayashi \cite{KaKo12}.
The bound on the treewidth is derived from the proof of Theorem \ref{theo: tw mod} in~\cite{FLST10}. %, choosing $\epsilon =1$.

\begin{corollary} \label{coro: tw modul FDel}%[Fomin, Lokshtanov, Saurabh, Thilikos \cite{FLST10}]
Let $\mc{F}$ be a finite set of graphs containing at least one $r$-vertex planar graph $F$, let $H$ be an $h$-vertex graph, and let \G be the class of $H$-minor-free graphs.
If $(G,k') \in \textsc{p}\mc{F}\textsc{D}_\G$, then there exists a set $X\subseteq V(G)$
  such that $|X|= k'$                                          % $|X|=r \cdot f_m(h) \cdot k$
  and       $\tw(G-X)= O (r\sqrt r \cdot \tau_H^3 \cdot f_m(h)^3 )$.            % $tw(G-X)=r \cdot f_m(h)$.
Moreover, given an instance  $(G,k)$ with $|V(G)|=n$, there is an algorithm running in time $O(n^{3})$ that either finds such a set $X$ or correctly reports that $(G,k) \notin \textsc{p}\mc{F}\textsc{D}_\G$.

%Let $\mc{F}$ be a finite set of graphs containing at least one $r$-vertex planar graph $F$, let $H$ be an $h$-vertex graph, and let $\textsc{p}\mc{F}\textsc{D}_H$ be the restriction of the \textsc{Planar-$\mc{F}$-Deletion} problem to input graphs which exclude $H$ as a minor.
%If $(G,k) \in \textsc{p}\mc{F}\textsc{D}_H$, then there exists a set $X\subseteq V(G)$
%  such that $|X|=r \cdot 2^{O(h^2)} \cdot k$
%  and $tw(G-X)=r \cdot 2^{O(h^2)}$.
%
%Moreover, given an instance  $(G,k)$ of $\textsc{p}\mc{F}\textsc{D}_H$ with $|V(G)|=n$, there is an algorithm running in time $O(n^{3})$ that either finds such a set $X$ or correctly reports that $(G,k)$ is a \NO-instance.
\end{corollary}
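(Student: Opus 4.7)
The plan is to derive the corollary from Theorem~\ref{theo: tw mod} by verifying its two structural hypotheses for \textsc{Planar-$\mc{F}$-Deletion} and by tracking the constants explicitly.

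First, I would check that \textsc{Planar-$\mc{F}$-Deletion} is minor-bidimensional and linear-separable. Bidimensionality is standard: since $\mc{F}$ contains an $r$-vertex planar graph $F$, any $(\Theta(\sqrt r) \times \Theta(\sqrt r))$-grid contains $F$ as a minor, so destroying every occurrence of $F$ in a $(q\times q)$-grid requires $\Omega(q^{2}/r)$ vertices; this is precisely minor-bidimensionality with constant of order $r$. Linear separability is immediate from the local nature of the minor relation: a deletion set restricted to either side of a small separator is still valid for that side, and conversely, merging valid deletion sets across a separator produces a valid deletion set of total size at most the sum plus the separator size, so balanced separators split an optimal solution nearly proportionally.

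Second, apply Theorem~\ref{theo: tw mod} with $\epsilon=1$; this already yields the existence of a treewidth-$t$-modulator of size at most $f^{\textsc{p}\mc{F}\textsc{D}}(G)=k'$. To extract the explicit bound on $t$, I would trace the three ingredients combined in the proof of~\cite{FLST10}. The grid-minor theorem (Proposition~\ref{prop:tw-minor}) is invoked to turn treewidth into a grid minor, contributing the factor $f_m(h)$; it is used at three nested levels of the recursive decomposition, giving the exponent $3$. The bidimensionality parameter of the problem, proportional to $r$, enters via the relationship between the side length of the excluded grid and the minimum size of a solution, and the $\sqrt r$ factor arises precisely because inverting this quadratic relation produces a square root; together these yield $r\sqrt r$. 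Finally, each recursive call uses a $\tau_{H}$-approximate tree decomposition of the current subgraph, so $\tau_H$ appears once per level of recursion, giving $\tau_{H}^{3}$. Multiplying these three contributions yields $t = O\bigl(r\sqrt{r}\cdot\tau_H^{3}\cdot f_m(h)^{3}\bigr)$ as required.

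Third, for the algorithmic statement, I would combine the algorithmic version of Theorem~\ref{theo: tw mod} with the \textsc{EPTAS} for \textsc{Planar-$\mc{F}$-Deletion} on $H$-minor-free graphs from~\cite{FLRS10}. Running this approximation scheme with ratio $2$ produces in polynomial time a deletion set $S$; if $|S|>2k$, we answer \NO, since then $f^{\textsc{p}\mc{F}\textsc{D}}(G)>k$. Otherwise, we feed the bound $|S|\leq 2f^{\textsc{p}\mc{F}\textsc{D}}(G)$ into the recursive decomposition procedure of~\cite{FLST10}: at each step, compute a $\tau_H$-approximate tree decomposition of the current subgraph using the cubic-time algorithm derived from Demaine--Hajiaghayi~\cite{DH08}, locate a balanced separator from the decomposition, and recurse on the two sides. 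Because the recursion has depth $O(\log n)$ and each level performs an essentially linear computation dominated by the treewidth approximation, the overall running time is $O(n^{3})$, and the union of the separators across all levels is the required modulator $X$ of size exactly $k'$ (when $(G,k')$ is a \YES-instance).

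The main obstacle is really just the constant accounting in the second step: the exponents of $\tau_H$ and $f_m(h)$ and the precise $r$-dependence must be extracted from the chain of inequalities in~\cite{FLST10} rather than merely quoted, and one has to be careful that the EPTAS from~\cite{FLRS10} is invoked only to control the quality of the initial solution rather than as part of the recursive decomposition itself, since otherwise its own approximation ratio would creep into $t$ and inflate the kernel size.
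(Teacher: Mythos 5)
Your approach matches the paper's: invoke Theorem~\ref{theo: tw mod} (Fomin \emph{et al.}~\cite{FLST10}) with $\epsilon = 1$, verify that \textsc{Planar-$\mc{F}$-Deletion} is minor-bidimensional and linear-separable, extract the treewidth bound from the internals of the proof in~\cite{FLST10}, and run the EPTAS of~\cite{FLRS10} only to seed the recursion with a $2$-approximate solution so that its ratio does not pollute $t$. The paper offers no more than this either; it states explicitly that the corollary follows from Theorem~\ref{theo: tw mod} with $\epsilon = 1$ and that ``the bound on the treewidth is derived from the proof of Theorem~3 in~\cite{FLST10}'', so the structure of your argument is sound and faithful.

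However, the one place where you try to justify a constant on your own contains a real error. You assert that ``any $(\Theta(\sqrt r) \times \Theta(\sqrt r))$-grid contains $F$ as a minor,'' but this is false in general for an $r$-vertex planar graph $F$: the correct and standard fact (Robertson--Seymour--Thomas) is that every $r$-vertex planar graph is a minor of an $O(r)\times O(r)$ grid, and there is no $\Theta(\sqrt r)$ guarantee. A $\sqrt r\times\sqrt r$ grid has only $r$ vertices, and already $K_4$ fails to be a minor of the $2\times 2$ grid, so the size alone does not suffice. Consequently the bidimensionality lower bound on the $q\times q$ grid for \textsc{Planar-$\mc{F}$-Deletion} is $\Omega(q^2/r^2)$ (pack $(q/\Theta(r))^2$ disjoint $F$-minors), not $\Omega(q^2/r)$ as you write. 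This also makes your constant accounting internally inconsistent with the stated bound: plugging $\Omega(q^2/r)$ through a single inverse-square step as you describe yields $\sqrt r$, and plugging the correct $\Omega(q^2/r^2)$ yields $r$, yet the corollary asserts $r\sqrt r$. The extra $\sqrt r$ comes from the balanced-separator recursion in~\cite{FLST10} and cannot be recovered from the ``invert a quadratic'' heuristic you describe; you would need to actually trace the chain of inequalities there, as your last paragraph concedes. So the explanation of \emph{where} the $r\sqrt r$ arises is wrong even though the headline bound you quote is the one the paper also quotes by citation.
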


 Note that since in Theorem \ref{theo: tw mod} the value of $\epsilon$ can be chosen arbitrarily, we can state many variants of the above corollary.
For instance, in our previous article~\cite{KviaDP}, we used the particular case where $|X|=O(r \cdot f_m(h) \cdot k')$ and $\tw(G-X)= O(r \cdot f_m(h)^2)$.

We are now able to construct a linear protrusion decomposition.

\begin{lemma} \label{lem: FPack prot decompo}
Let $\mc{F}$ be a finite set of graphs containing at least one $r$-vertex planar graph $F$, let $H$ be an $h$-vertex graph, and let \G be the class of $H$-minor-free graphs.
%Let $(G,k)$ be an instance of  $\textsc{cp}\mc{F}\textsc{P}_\G$  \todo[color=green!40]{???} . If $(G,k)$ is not a \YES-instance, then we can construct a linear protrusion decomposition.
Let $(G,k)$ be an instance of \textsc{Connected-Planar}-\FPack . If $(G,k)  \notin \textsc{cp}\mc{F}\textsc{P}_\G$, then we can construct  in polynomial time  a linear protrusion decomposition of $G$.
\end{lemma}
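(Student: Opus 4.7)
\medskip
\noindent\textbf{Plan of proof.} The strategy is to chain together the three ingredients recalled just before the statement: the linear Erd\H{o}s--P\'{o}sa theorem for connected planar minors, the treewidth-modulator corollary for \textsc{Planar-}$\mc{F}$\textsc{-Deletion}, and the protrusion-decomposition theorem of Kim \emph{et al}. More precisely, assume that $(G,k)\notin\textsc{cp}\mc{F}\textsc{P}_\G$. Since all graphs in $\mc{F}$ are connected and one of them is planar (on $r$ vertices) and $G$ is $H$-minor-free, Theorem~\ref{theo: Erdos Posa} applies and yields a constant $c$ such that $(G,k') \in \textsc{cp}\mc{F}\textsc{D}_\G$ for $k' := c\cdot r\cdot 2^{15h+8h\log h}\cdot k = O(k)$. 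In particular, the \textsc{Planar-$\mc{F}$-Deletion} instance $(G,k')$ is a \YES-instance.

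Next, I would feed $(G,k')$ into the polynomial-time algorithm of Corollary~\ref{coro: tw modul FDel}. Since the instance is positive, that algorithm cannot report a \NO-answer; it must therefore return in time $O(n^3)$ a set $X\subseteq V(G)$ with $|X| \leqslant k' = O(k)$ and $\tw(G-X) \leqslant t_0$, where $t_0 = O(r\sqrt{r}\cdot \tau_H^3\cdot f_m(h)^3)$ is the constant supplied by the corollary. Hence $X$ is a linear-size treewidth-modulator of $G$ of width $t_0$.

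Finally, I would invoke Theorem~\ref{theo:prot dec} with the parameters $t=t_0$ and $c = c\cdot r\cdot 2^{15h+8h\log h}$ (so that $|X|\leqslant c\cdot k$ with this $c$). Since $G$ is $H$-minor-free, it is in particular $H$-topological-minor-free, so the theorem runs in time $O(n)$ and outputs an $((\alpha_H\cdot t_0\cdot c)\cdot k,\ 2t_0+h)$-protrusion decomposition of $G$. As $\alpha_H$, $t_0$, $c$, and $h$ depend only on $\mc{F}$ and $H$ (hence are constants from the point of view of the instance), this is indeed a linear, i.e.\ $(\alpha k, t)$-protrusion decomposition with $\alpha=O(k)$ and $t=O(1)$, and the overall running time is polynomial, as required.

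The only non-routine point is making sure that the constants emerging from the three invocations line up cleanly, and in particular that the hypothesis of Corollary~\ref{coro: tw modul FDel} (namely that $(G,k')$ is a positive instance of \textsc{Planar-$\mc{F}$-Deletion}) is genuinely guaranteed by the Erd\H{o}s--P\'{o}sa step, so that the algorithm is forced to output a modulator rather than a \NO-certificate; this is where the assumption that $\mc{F}$ contains a planar graph and that its members are connected is crucially used, since both properties are needed for Theorem~\ref{theo: Erdos Posa} to produce the linear relation between \NO-instances of packing and \YES-instances of deletion.
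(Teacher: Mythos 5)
Your proof is correct and follows essentially the same route as the paper's: invoke the Erd\H{o}s--P\'{o}sa Theorem~\ref{theo: Erdos Posa} to turn a \NO-instance of packing into a \YES-instance of \textsc{Planar-$\mc{F}$-Deletion} with parameter $k'=O(k)$, extract a linear treewidth-modulator via Corollary~\ref{coro: tw modul FDel}, and hand it to Theorem~\ref{theo:prot dec} to produce the linear protrusion decomposition. The only cosmetic difference is that the paper phrases the argument contrapositively (run the algorithm first, and if it fails conclude the packing instance was \textsc{Yes}), whereas you argue directly from the \NO-assumption that the algorithm is forced to succeed; the constants and bounds you track match those in the paper's proof.
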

\begin{proof}
Given an instance $(G,k)$ of $\textsc{cp}\mc{F}\textsc{P}_\G$, we run the algorithm given by Corollary \ref{coro: tw modul FDel} for the \textsc{Connected-Planar-$\mc{F}$-Deletion} problem with input $(G, k' = c\cdot r\cdot 2^{15h+8h \log h} \cdot k)$.
If the algorithm is not able to find a treewidth-modulator $X$ of size $|X|= k'$, then by Theorem \ref{theo: Erdos Posa} we can conclude that $(G,k) \in \textsc{cp}\mc{F}\textsc{P}_\G$.
Otherwise, we use the set $X$ as input to the algorithm given by Theorem \ref{theo:prot dec}, which outputs in linear time an
$((\alpha_{H} \cdot t)\cdot k', 2t + h)$-protrusion decomposition of $G$, where
\begin{itemize}
\item[$\bullet$] $t =O ( r\sqrt r \cdot \tau_H^3 \cdot f_m(h)^3 )$ is provided by Corollary \ref{coro: tw modul FDel}  (the bound on the treewidth);
\item[$\bullet$] $k'= O ( r \cdot 2^{O(h \log h)}\cdot k)$ is provided by Theorem \ref{theo: Erdos Posa} (the parameter of \textsc{$\mc{F}$-Deletion}); and %(with the factor due to Erd\"os-P\`osas) %( c \cdot r \cdot 2^{15h+8h \log h}\cdot k)$
%\item $2\tau_H$ is in \autoref{coro: tw modul FDel}  (the constant in the size of $X$); %, due to approximation of tree decomposition);
\item[$\bullet$] $\alpha_H = O( h^2 2 ^{ O(h \log h) })$ is  the constant provided by Theorem \ref{theo:prot dec}.
\end{itemize}
That is, we  obtained  an $\left(O( h^2 2 ^{ O(h \log h) } \cdot  r^{5/2} \cdot \tau_H^3 \cdot f_m(h)^3 )\cdot k,O(r\sqrt r \cdot \tau_H^3 \cdot f_m(h)^3 )\right)$-protrusion decomposition of $G$, as claimed.
\end{proof}

%If we replace $\tau_H$ and  $f_m(h)$ with their current upper bound we obtain a $(~ O( h^2 2 ^{ O(h^2 \log h)} \cdot r^{5/2} \cdot \tau_H^3  )\cdot k, O ( r\sqrt r \cdot \tau_H^3 \cdot 2 ^{ O(h^2 \log h)} )~ )$-protrusion decomposition

\subsection{An encoder for \FPack} \label{ssec: FPack encod}

Our encoder \E for \FPack uses the notion of rooted packing \cite{ADF+11}, and is inspired by results on the \textsc{Cycle Packing} problem~\cite{BFL+09}.

%%%%%%%%%%%%%%%%%%%%%%%%%%%%%%%%%%%%%%%%%%%%%%%%%%%%%%%%%%%%%%%%%%%%%%%%%%%%%%%%%%%%
%The rooted packings are essentially a version of the folio (defined by Robertson and Seymour \cite{RS95})adapted to dynamic programming.
Assume first for simplicity that $\mc{F}=\{F\}$ consists of a single connected graph $F$. Following~\cite{ADF+11}, we introduce a combinatorial object called \emph{rooted packing}. These objects are originally defined for branch decompositions, but can easily be translated to tree decompositions. Loosely speaking, rooted packings capture how \emph{potential models} of $F$ intersect the separator that the algorithm is processing. It is worth mentioning that the notion of rooted packing is related to the notion of \emph{folio} introduced by Robertson and Seymour \cite{RS95}, but more suited to dynamic programming.

\begin{definition}
Let $F$ be a connected graph. %Let $S_F^* \subseteq V(F)$ be a subset of the vertices of the graph $F$, and let $S_F \subseteq S_F^*$.
% bag $B$ of a tree decomposition $(T,\mc{X})$
Given a set $B$ of boundary vertices of the input graph $G$, we define a \emph{rooted packing} of
$B$ as a quintuple $(\mc{A},S_F^*,S_F,\psi,\chi)$, where

\begin{itemize}
\item[$\bullet$] $S_F \subseteq S_F^*$ are both subsets of $V(F)$;
\item[$\bullet$] $\mc{A}$ is a (possible empty) collection of mutually disjoint non-empty subsets of $B$;
\item[$\bullet$]  $\psi: \mc{A} \to S_F$ is a surjective mapping assigning vertices of $S_F$ to the sets in $\mc{A}$; and
\item[$\bullet$] $\chi: S_F \times S_F \to \{0,1\}$ is a binary symmetric function between pairs of vertices in $S_F$.
\end{itemize}

We also define a \emph{potential model} of $F$ in $G$ \emph{matching} with $(\mc{A},S_F^*,S_F,\psi,\chi)$ as a partial mapping $\Phi$, that assigns
to every vertex $v \in S_F$ a non-empty subgraph $\Phi(v) \subseteq G$ such that $\{A \in \mc{A} : \psi(A)=v \}$ is the set of intersections of $B$ with connected components of $\Phi(v)$;
to every vertex $v \in S_F^* \setminus S_F$ a non-empty connected subgraph $\Phi(v) \subseteq G$; and
to every edge $e \in \{e \in E(F) : \chi(e)=1 \vee e \in S_F^* \times S_F^* \setminus S_F \}$ an edge $\Phi(e) \in E(G)$, such that $\Phi$ satisfies the two following conditions  (as in Definition \ref{defi: model}):

\begin{itemize}
       \item[$\bullet$] the graphs $\Phi(v)$ for $v \in V(F)$ are mutually vertex-disjoint and
             the edges $\Phi(e)$ for $e \in E(F)$ are pairwise distinct; and
       \item[$\bullet$] for $\{u,v\} \in E(F)$, $\Phi(\{u,v\})$ has one endpoint in $V(\Phi(u))$ and the other in $V(\Phi(v))$.
\end{itemize}

%as a model $\Phi$ of $F[S_F^*]$ where $\Phi(v)$ is not necessarily connected for $v \in S_F^*$ and $\Phi(e)$ is not define if $\chi(e)=0$.
\end{definition}

\begin{figure}[h!b]
%\vspace{-.3cm}
\begin{center}
\scalebox{.95}{\includegraphics{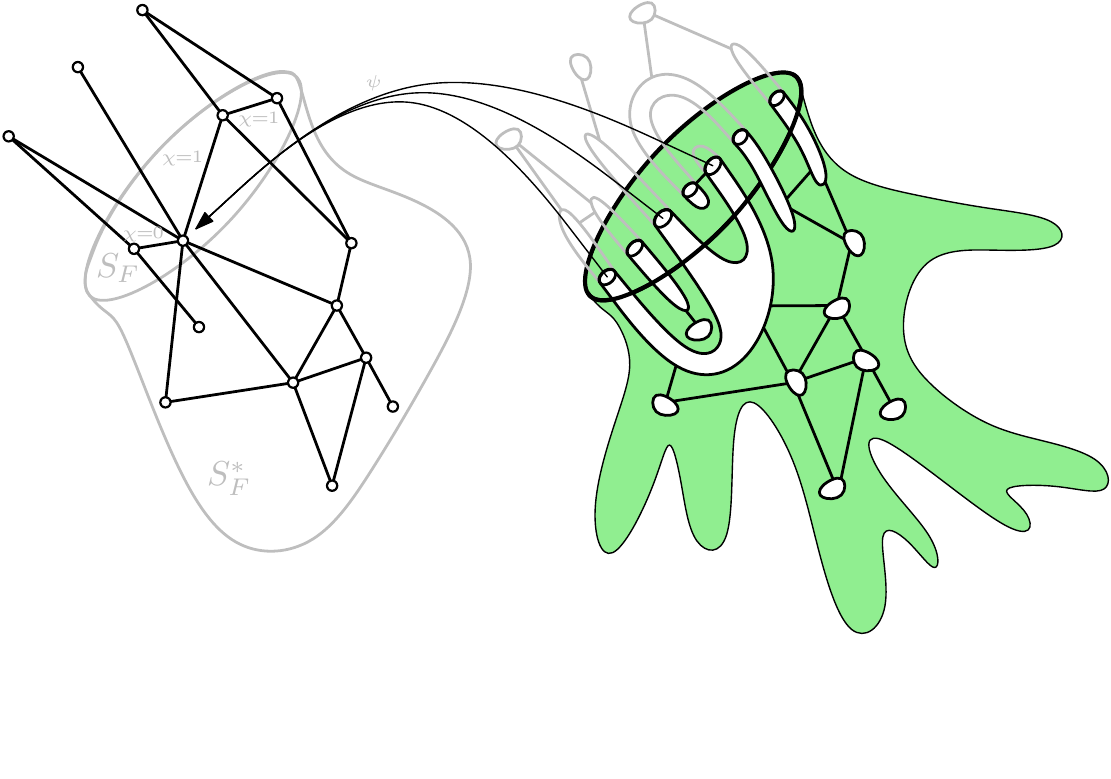}}
\end{center}\vspace{-1.3cm}
\caption{Example of a rooted packing (left) and a potential model matching with it (right).}
\label{fig:rooted-packing}
\end{figure}

See Figure~\ref{fig:rooted-packing} for a schematic illustration of the above definition. The intended meaning of a rooted packing $(\mc{A},S_F^*,S_F,\psi,\chi)$ on a separator $B$ is as follows.
The packing $\mc{A}$ represents the intersection of the connected components of the potential model with $B$.
The subsets $S_F^*,S_F \subseteq V(F)$ and the function $\chi$ indicate that we are looking in the graph $G$ for a potential model of $F[S_F^*]$ containing the edges between vertices in $S_F$ given by the function $\chi$. Namely, the function $\chi$ captures which edges of $F[S_F^*]$ have been
realized so far in the processed graph. Since we allow the vertex-models intersecting $B$ to be disconnected, we need to keep track of their connected components. The subset $S_F \subseteq S_F^*$ tells us which vertex-models intersect $B$ (in other words, $S_F$ is the boundary of $F[S_F^*]$), and the function $\psi$ associates the sets in ${\cal A}$ with the vertices in $S_F$. We
can think of $\psi$ as a coloring that colors the subsets in ${\cal A}$ with
colors given by the vertices in $S_F$. Note that several subsets in ${\cal A}$
can have the same color $u \in S_F$, which means that the vertex-model of $u$ in
$G$ is not connected yet, but it may get connected in further steps of the
dynamic programming. Again, see~\cite{ADF+11} for the details.

It is proved in~\cite{ADF+11} that rooted packings allow to carry out dynamic programming in order to determine whether an input graph $G$ contains a graph $F$ as a minor. It is easy to see that the number of distinct rooted packings at a separator $B$ is upper-bounded by $f(t,F):=2^{t \log t} \cdot r^t \cdot 2^{r^2}$, where $ t \geq |B| $. In particular, this proves that when $\mc{G}$ is the class of graphs excluding a fixed graph $H$ on $h$ vertices as a minor, then the index of the equivalence relation \eq\G is bounded by $2^{t \log t} \cdot h^t \cdot 2^{h^2}$.
%%%%%%%%%%%%%%%%%%%%%%%%%%%%%%%%%%%%%%%%%%%%%%%%%%%%%%%%%%%%%%%%%%%%%%%%%%%%%%%%%%%%

\smallskip
\noindent\textbf{The encodings generator \C.}
Let $G \in \B$ with boundary $\partial (G)$ labeled with $\Lambda(G)$. The function \C maps $\Lambda(G)$ to a set  $\C(\Lambda(G))$ of encodings.
Each $R\in \C(\Lambda(G))$ is a set of at most $|\Lambda(G)|$ rooted packings $\{(\mc{A}_i, S_{F_i}^*,S_{F_i} ,\psi_i, \chi_i) \mid F_i \in\mc{F} \}$, where each such rooted packing encodes a potential model of a minor $F_i \in \mc{F}$ (multiple models of the same graph are allowed).

\smallskip
\noindent\textbf{The language \L.}
For a packing of models $\mc{S}$, we say that $(G,\mc{S},R)$ belongs to the language $\L$ (or that $\mc{S}$ is a packing of models \emph{satisfying} $R$) if there is a packing of potential models matching with the rooted packings of $R$ in $G \setminus \bigcup_{\Phi \in \mc{S}} \Phi(F)$.

Note that we allow the entirely realized models of $\mc{S}$ to intersect $\partial (G) $ arbitrarily, but they must not intersect potential models imposed by $R$.
\newline

As mentioned in the introduction, the natural definition of the maximization function does not provide a confined encoder, hence we need to use the relevant function \fbar. In order to define this function we note that, given a separator $B$ and a subgraph $G_B$, a (partial) solution naturally {\sl induces} an encoding $R_B \in \C(\Lambda(G_B))$, where the rooted packings correspond to the intersection of models with $B$.

Formally, let $G$ be a $t$-boundaried graph with boundary $A$ and let $\mc{S}$ be a partial solution satisfying some $R_A \in \C(\Lambda(G))$. Let also $\mc{P}$ be the set of potential models matching with the rooted packings in $R_A$. Given a separator $B$ in $G$, we define the induced encoding $R_B = \{(\mc{A}_i, S_{F_i}^*,S_{F_i} ,\psi_i, \chi_i) \mid \Phi_i \in \mc{S}\cup\mc{P}\} \in \C(\Lambda(G_B))$ such that
for each (potential) model $\Phi_i \in \mc{S} \cup \mc{P}$ of $F_i \in \mc{F}$ intersecting $B$,

\begin{itemize}
\item[$\bullet$] $\mc{A}_i$ contains elements of the form $B\cap C$, where $C$ is a connected component of the graph induced by  $V(\Phi_i(v)) \cap V(G_B)$, with $v \in V(F_i)$;
\item[$\bullet$] $\psi_i$ maps each element of $\mc{A}_i$ to its corresponding vertex in $F_i$; and
\item[$\bullet$] $S_{F_i}^*,S_{F_i}$, correspond to the vertices of $F_i$ whose vertex models intersect $G_B$ and $B$, respectively.
%the potential model defined by the intersection between $G_B$ and $\Phi_i(F_i)$.
%\item $\chi$
\end{itemize}

Clearly, the set of models of $\mc{S}$ entirely realized in $G_B$ is a partial solution satisfying $R_B$.

Provided with a formal definition of an induced encoding, and following the description given in Section~\ref{sec:applications}, we can state the definition of an irrelevant encoding for our problem.
 Let $G \in \B$ with boundary $A$ and let $R_A$ be an encoding. An encoding $R_A$ is \emph{irrelevant for \fbar} if there exists a certificate $\mc{S}$ such that $(G,\mc{S},R_A)\in \L$ and $|\mc{S}|= \f(G,R_A)$, and a separator $B \subseteq V(G)$ with $|B| \leq t$ and $B \neq A$, such that $\mc{S}$ {\sl induces} (as defined above) an encoding $R_B$ in the graph $G_B \in \B$ with $ \fbar(G_B,R_B) = -\infty$.

\smallskip
\noindent\textbf{The function \fbar.}
Let $G \in \B$ with boundary $A$ and let $g(t) =t$. We define
the function $\fbar$ as %    : \C(\rmq{t}) \rightarrow \mathbb{N} \cup \{+\infty\}$:
\begin{equation} \label{eq: relevant f-packing}
\fbar(G,R_A) =\
\left\{\begin{array}{lll}
  & -\infty,  & \text{if } \f(G,R_A) + g(t) < \\
  &           &      ~~~~~~~~~~~~~ \max \{\f(G,R): R \in \C(\Lambda(G)) \}   \\
  &           & \text{or if } R_A \text{ is irrelevant for } \fbar. \\
  & \f(G,R_A),  & \mbox{otherwise}.\\
\end{array}\right.
\end{equation}
%\begin{equation} \label{eq: relevant f}
%\fbar(G,R_A) =\
%\left\{\begin{array}{lll}
%  & -\infty,  & \text{if } \f(G,R_A) + g(t) < \max \{\f(G,R), R \in \C(\Lambda(G)) \} \\
%  &           & \text{or if \rmq{there exists} some $\mc{S}$ such that $(G,\mc{S},R_A)\in \L$,} \\
%  &           & \text{some separator $B$ of $G$ and some subgraph $G_B$} \\
%  &           & \text{such that $\mc{S}$ induces \ig{induceS} $R_B$ and }
%                \f(G_B,R_B) = -\infty  \\
%      & &   \text{\ig{NO!! $\f(G_B,R_B) = -\infty$ n'est pas correct}}       \\
%       & &   \text{\ig{il faut ecrire $\fbar(G_B,R_B) = -\infty$}}       \\
%         & &   \text{\ig{regarde la page 10 de "ma" version}}       \\
%  & \f(G,R_A),  & \mbox{otherwise}.\\
%\end{array}\right.
%\end{equation}
%  &       & \text{or if \rmq{there exists} some } \mc{S} \text{ such that } (G,\mc{S},R_A) \in \L \\
%  &       & \text{ some separator } B \text{ of } G \text{ and some subgraph } G_B \text{ such that } \text{ such that } \mc{S} \text{ induce } R_B \\
%  &       & \text{and } \f(G_B,R_B) = +\infty \text{ where } \\\

In the above equation, \f is the natural maximization function associated with the encoder, that is, $\f(G,R)$ is the maximal number of (entire) models in $G$ which do not intersect potential models imposed by $R$. Formally,
\begin{equation*}\label{eq:fEmin}
\f(G,R) \ = \ \max \{k  : \exists \mc{S}, |\mc{S}| \geq k, (G,\mc{S},R) \in \L\}.
\end{equation*}

\smallskip
\noindent\textbf{The size of \E.}
Recall that $f(t,F):=2^{t \log t} \cdot r^t \cdot 2^{r^2}$ is the number of rooted packings for a minor $F$ of size $r$ on a boundary of size $t$.  If we let $r := \max_{F \in \mc{F}} |V(F)|$ and $J$ be any set of positive integers such that $\sum_{j\in J} j \leq t$, by definition of \E, it holds that
\begin{equation}\label{eq:sizeEncoderFPackSet}
s_{\E}(t) \ \leq \ (\sum_{j\in J} 2^{j \log j} \cdot r^j \cdot 2^{r^2}) \\
           \ \leq \ (\sum_{j\in J} 2^{t \log t} \cdot r^t) \cdot 2^{r^2} \\
           \ \leq \  t \cdot 2^{t \log t} \cdot r^t \cdot 2^{r^2} .
%s_{\E}(t) \ \leq \  2^{t \log t} \cdot (t\cdot r)^t \cdot 2^{t \cdot r^2} .
\end{equation}

Note that an encoding can also be seen as the rooted packing of the disjoint union of at most $t$ minors of $\mc{F}$.
%Hence, if we let $r := \max_{F \in \mc{F}} |V(F)|$, by definition of \E, it clearly hold that

\begin{fact} \label{fait: rp}
Let $G \in \B$ with boundary $A$, let $\Phi$ be a model (resp. a potential model matching with a rooted packing defined on $A$) of a graph $F$ in $G$, let $B$ be a separator of $G$, and let $G_B \in \B$ be as in Definition \ref{defi:DP-friend}. Let $(\mc{A}, S_{F}^*,S_{F} ,\phi, \chi)$ be the rooted packing induced by $\Phi$ (as defined above). Let $G_B' \in \B$ with boundary $B$ and let $G'$ be the graph obtained by replacing $G_B$ with $G_B'$.
If $G_B'$ has a potential model $\Phi_B'$ matching with $(\mc{A}, S_{F}^*,S_{F} ,\phi, \chi)$, then $G'$ has a model (resp. a potential model) of $F$.
\end{fact}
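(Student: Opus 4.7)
The plan is a direct gluing argument. Starting from $\Phi$ in $G=H\oplus G_B$ and $\Phi_B'$ in $G_B'$, we define a candidate (potential) model $\Phi'$ in $G'=H\oplus G_B'$ by cutting $\Phi$ along the separator $B$, discarding its $G_B$-side, and pasting in $\Phi_B'$ in its place. Concretely, using that $S_F^*$ (resp.\ $S_F$) records which vertex-models of $\Phi$ meet $V(G_B)$ (resp.\ $B$), I would set $\Phi'(v):=\Phi(v)$ whenever $v\notin S_F^*$ (so $\Phi(v)\subseteq V(H)\setminus B$), $\Phi'(v):=\Phi_B'(v)$ whenever $v\in S_F^*\setminus S_F$ (so $\Phi(v)\subseteq V(G_B)\setminus B$), and for $v\in S_F$ glue $\Phi'(v):=(\Phi(v)\cap H)\cup \Phi_B'(v)$, the two pieces being identified at their common vertices, which lie in $B$. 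For edges, use $\Phi(e)$ when it already lies in $E(H)$, and otherwise replace it by the corresponding edge provided by $\Phi_B'$ (for edges with $\chi(e)=1$ in $S_F\times S_F$ or with both endpoints in $S_F^*\setminus S_F$).

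Verifying the axioms of Definition~\ref{defi: model} (or of a potential model matching a given rooted packing on $A$) is then a case analysis. Pairwise vertex-disjointness of the $\Phi'(v)$'s follows from the corresponding property of $\Phi$ on the $H$-side and of $\Phi_B'$ on the $G_B'$-side, together with the fact that on $B$ the sets $\mc{A}$ with $\phi(A)=v$ are the same for both (this is exactly the matching condition); hence no two $\Phi'(v)$, $\Phi'(u)$ can share a vertex of $B$. Pairwise distinctness of the edges $\Phi'(e)$ is automatic since the $H$-edges come from $\Phi$, the $G_B'$-edges come from $\Phi_B'$, and the two edge sets are disjoint. The incidence property ``$\Phi(\{u,v\})$ has endpoints in $\Phi(u)$ and $\Phi(v)$'' is inherited from $\Phi$ for $H$-edges and from $\Phi_B'$ for $G_B'$-edges, using again that the $B$-intersections are described by the same sets of $\mc{A}$.

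The one point needing a little care is the connectedness requirement for the model case (for $v\in S_F$). The outside piece $\Phi(v)\cap H$ may itself be disconnected, but its components each meet $B$ in some set of $\mc{A}$ labelled $v$ by $\phi$. By the matching hypothesis, the connected components of $\Phi_B'(v)$ are in bijection with the same sets of $\mc{A}$ and meet $B$ in exactly those subsets. Consequently the bipartite ``intersection graph'' between outside-components and inside-components is isomorphic to that of $\Phi(v)$ in $G$, which is connected because $\Phi(v)$ was connected. So $\Phi'(v)$ is connected in $G'$, and all edges of $F$ incident with $v$ still have an endpoint in $\Phi'(v)$ because the edges we kept or inserted respect the same $B$-interface.

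In the potential-model variant there is no connectedness requirement on $\Phi(v)$ for $v\in S_F$: one checks instead that $\Phi'$ matches the original rooted packing defined on $A$, which is immediate because the construction on the $H$-side (the side containing $A$) is untouched, and the parts added on the $G_B'$-side are disjoint from $A$ unless $A\cap B\neq\emptyset$, in which case the boundary behaviour on $A$ is dictated by the $H$-side. The main potential obstacle is the connectedness argument for the model case described above; once it is in place the remaining verifications are routine bookkeeping and we conclude that $\Phi'$ is a (potential) model of $F$ in $G'$.
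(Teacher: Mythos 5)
Your proposal is correct and takes essentially the same approach as the paper: the same three-way split of vertex-models according to $V(F)\setminus S_F^*$, $S_F^*\setminus S_F$, and $S_F$, the same dichotomy for edges, and the same appeals to the matching condition for disjointness and connectedness. The only cosmetic difference is that your bipartite-intersection-graph explanation of why $\Phi'(v)$ stays connected for $v\in S_F$ makes explicit what the paper dispatches in one line.
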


\begin{proof}
Let us build  a model (resp. a potential model) $\Phi'$ of $F$ in $G'$. For every vertex $v$ in $V(F) \setminus S_{F}^*$, 	 we set $\Phi'(v) = \Phi(v)$.
For every vertex $v$ in $S_{F}^* \setminus S_{F}$, 	we set $\Phi'(v) = \Phi_B'(v)$.
For every vertex $v$ in $S_{F}$, 					we set $\Phi'(v) = \Phi(v)[V(G) \setminus V(G_B)] \oplus \Phi_B'(v)$.
As $\Phi(v)$ is connected and the connected components in $\Phi_B'(v)$ have the same boundaries than the ones in $\Phi(v)[V(G_B)]$ (by definition of rooted packing), it follows that $\Phi'(v)$ is connected.
Note that $\Phi'(v)$ do not intersect $\Phi'(u)$, since $\Phi(v),\Phi_B'(v)$ do not intersect $\Phi'(u)$ for any $u \in V(F)$.

For every edge $e$ in $V(F) \times V(F) \setminus S_{F}^*$
     or such that $\chi(e) = 0$     				we set $\Phi'(e) = \Phi(e)$.
%and every edge $e$ such that $\chi(e) = 0$			we set $\Phi'(e) = \Phi(e)$.
For every edge $e$ in $S_{F}^*  \times S_{F}^* \setminus S_{F}$
     or such that $\chi(e) = 1$ 					we set $\Phi'(e) = \Phi_B'(e)$.
Since $B$ is a separator in $G$, $S_{F}$ is a separator in $F$ and there is no edge in $V(F) \setminus S_{F}^* \times S_{F}^* \setminus S_{F}$. Since $\Phi, \Phi_B'$ are (potential) models, the edges $\Phi'(e), e \in E(F)$ are obviously distinct and if $e = \{u,v\}$, then $\Phi'(e)$ as one endpoint in $\Phi'(u)$ and the other in $\Phi'(v)$.
\end{proof}

\begin{figure}[h!b]
%\vspace{-.3cm}
\begin{center}
\scalebox{.734}{\includegraphics{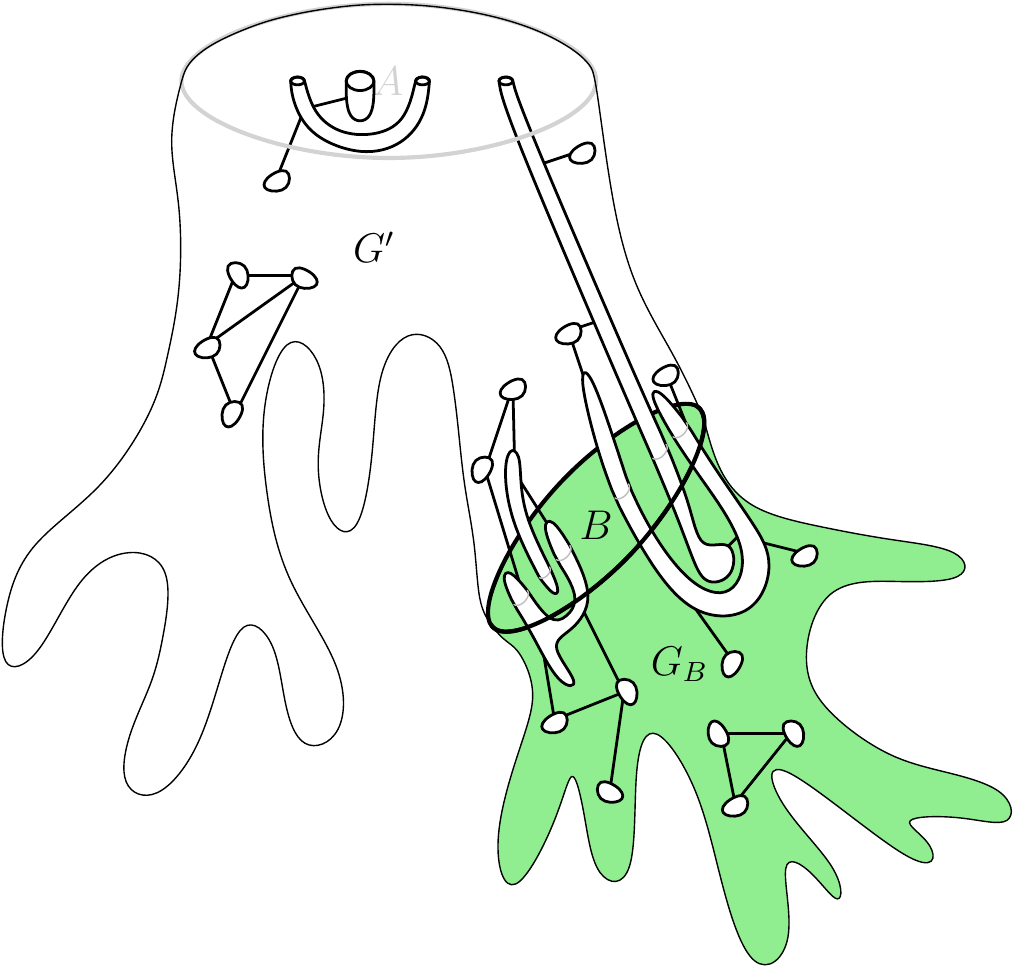}\includegraphics{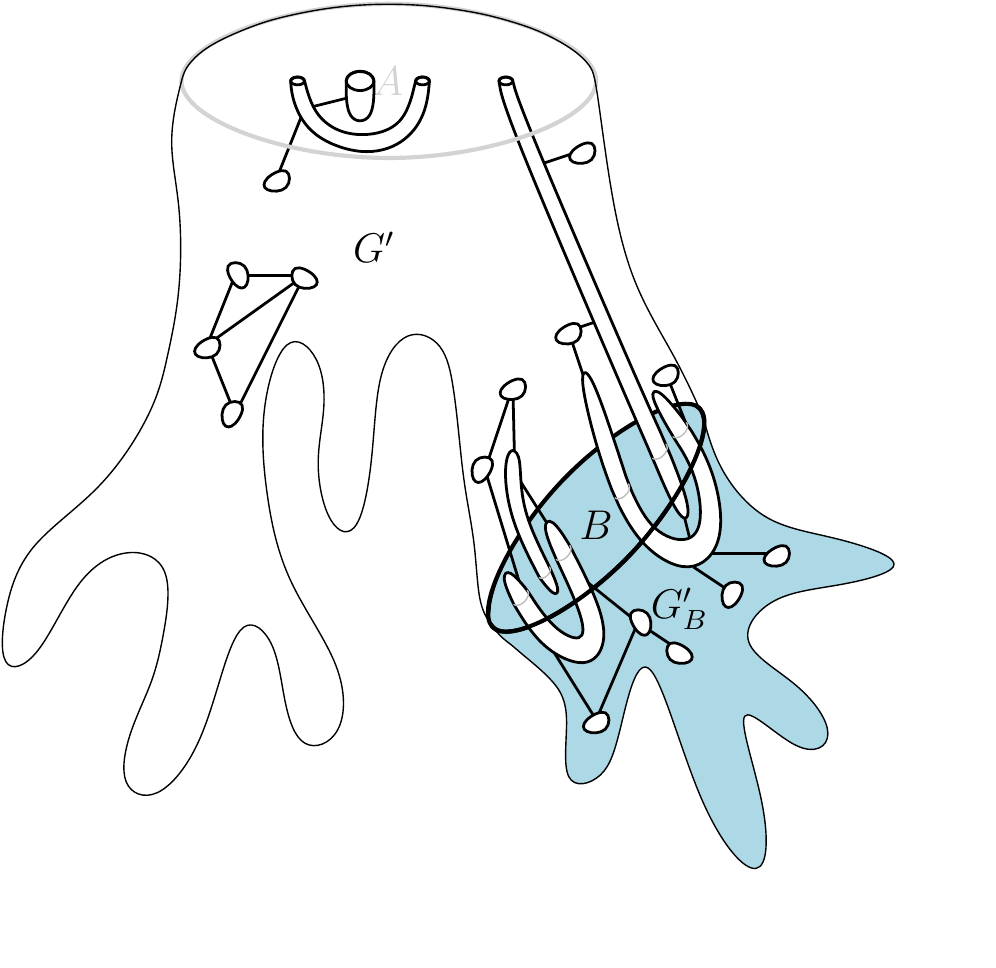}}
\end{center}%\vspace{-0.3cm}
\caption{Illustration of a protrusion replacement  for \FPack.}
\label{fig:replacement-packing}
\end{figure}

See Figure~\ref{fig:replacement-packing} for an illustration of the scenario described in the statement of Fact~\ref{fait: rp}.

\begin{lemma} \label{lem: Pack DPfriend}
The encoder $\E$ is a $g$-confined $\textsc{c}\mc{F}\textsc{P}$-encoder for $g(t)=t$.\! Furthermore, if \G is an arbitrary class of graphs, then the equivalence relation \eqsG\E is DP-friendly.
\end{lemma}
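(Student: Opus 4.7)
The lemma has three parts to verify: (i) $\E$ is a $\textsc{c}\mc{F}\textsc{P}$-encoder; (ii) the confinement with $g(t)=t$; (iii) DP-friendliness of $\eqsG\E$ for an arbitrary class $\G$.

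\emph{Part (i).} When $\Lambda(G)=\emptyset$, no rooted packing can be defined (all subsets of $\mc{A}$ are subsets of $\partial(G)=\emptyset$), so $\C(\emptyset)$ contains only the empty collection $R_\emptyset$, which is a singleton. By construction of $\L$, a certificate $\mc{S}$ with $(G,\mc{S},R_\emptyset)\in\L$ is simply a packing of entire $\mc{F}$-models in $G$, hence $\f(G,R_\emptyset)=f^{\textsc{c}\mc{F}\textsc{P}}(G)$. The first clause of~(\ref{eq: relevant f-packing}) cannot trigger because there is only one encoding, and for the irrelevance clause I would argue that among the optimal packings, one can always pick one whose induced encoding at every separator is itself relevant (iterating on smaller $G_B$'s). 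Thus $\fbar(G,R_\emptyset)=\f(G,R_\emptyset)=f^{\textsc{c}\mc{F}\textsc{P}}(G)$.

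\emph{Part (ii).} This is essentially built into the definition of $\fbar$: the first clause of~(\ref{eq: relevant f-packing}) sets $\fbar(G,R)=-\infty$ precisely when $\f(G,R)$ lags more than $g(t)=t$ behind $\max_R\f(G,R)$. Hence, writing $M:=\max_R\{\f(G,R)\neq-\infty\}$, every surviving encoding $R$ satisfies $M-t\leqslant \f(G,R)\leqslant M$, giving $\max-\min\leqslant t$. So the confinement with $g(t)=t$ is automatic; I would just spell out this observation.

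\emph{Part (iii).} Let $G=H\oplus G_B$ and $G'=H\oplus G'_B$ with $G_B\eqsG\E G'_B$. By Fact~\ref{fait:equiv} it is enough to prove $G\eqs\E G'$ with $\D\E(G,G')=\D\E(G_B,G'_B)$. Fix an encoding $R$ at $\partial(G)$. The plan is to show the \emph{decomposition identity}
\begin{equation*}
\fbar(G,R)\;=\;\max_{(R_H,R_B)\,\text{compatible with}\,R}\Bigl(\fbar(G_B,R_B)+\fbar(H,R_H)\Bigr),
\end{equation*}
where $(R_H,R_B)$ runs over pairs of encodings at $\partial(H)$ and $\partial(G_B)$ that can be glued at $B$ to match $R$ at $\partial(G)$. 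The justification of this identity has two ingredients. The forward direction uses that any certificate $\mc{S}$ for $(G,R)$ splits, along $B$, into an induced certificate $\mc{S}_B$ in $G_B$ (entire models contained in $G_B$ together with crossing models seen as potential models) and an induced certificate $\mc{S}_H$ in $H$ (defined symmetrically); the induced rooted packings at $B$ must agree, which is precisely the compatibility condition. The backward direction is exactly Fact~\ref{fait: rp}: from potential models on both sides matching the same rooted packings at $B$, one can reconstruct global (potential) models in $G$. The irrelevance clause of $\fbar$ propagates along this decomposition by its recursive definition, so the same identity holds with $\fbar$ in place of $\f$.

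Applying $G_B\eqs\E G'_B$ entry-wise then yields
\begin{equation*}
\fbar(G,R)\;=\;\max_{(R_H,R_B)}\Bigl(\fbar(G'_B,R_B)-\D\E(G_B,G'_B)+\fbar(H,R_H)\Bigr)\;=\;\fbar(G',R)-\D\E(G_B,G'_B),
\end{equation*}
which gives $G\eqs\E G'$ with $\D\E(G,G')=\D\E(G_B,G'_B)$. The main obstacle is the bookkeeping of the relevance clause through the decomposition: one must check that $\fbar(G_B,R_B)=-\infty$ forces the corresponding term in $\fbar(G,R)$ to be discarded as well, which follows from the fact that the induced encoding at $B$ inside $G$ coincides with $R_B$ and the recursive definition of irrelevance. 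Fact~\ref{fait:equiv} then promotes the $\eqs\E$-equivalence to $\eqsG\E$-equivalence, completing the proof.
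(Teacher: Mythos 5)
Your Part~(ii) is correct and matches the paper: confinement with $g(t)=t$ is indeed immediate from the first clause in the definition of \fbar.

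In Part~(i) there is a genuine gap. You assert that ``among the optimal packings, one can always pick one whose induced encoding at every separator is itself relevant,'' but you give no argument, and this is exactly the non-trivial content that needs a proof. The paper closes this by a swap argument: if an optimal packing $\mc{S}$ induced an irrelevant encoding $R_B$ at some separator $B$ (i.e.\ $\f(G_B,R_B)+t<\max_R\f(G_B,R)$), one could remove the at most $\f(G_B,R_B)$ models of $\mc{S}$ entirely inside $G_B$ and the at most $t$ crossing models, and plug in an optimal packing $M_0$ inside $G_B$; since $|M_0|>\f(G_B,R_B)+t$, this would strictly improve $\mc{S}$, contradicting optimality. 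Your ``iterating on smaller $G_B$'s'' does not replace this; without the exchange bound, nothing prevents an optimal packing from being forced to be irrelevant at some separator.

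In Part~(iii) your central ``decomposition identity'' is wrong as stated. You write $\fbar(G,R)=\max_{(R_H,R_B)}\bigl(\fbar(G_B,R_B)+\fbar(H,R_H)\bigr)$, but a model of $\mc{S}$ that is entirely realized in $G$ yet crosses the separator $B$ is counted by $\fbar(G,R)$ and by \emph{neither} term on the right: inside $G_B$ (resp.\ $H$) it is only a potential model, hence excluded from both $\f(G_B,R_B)$ and $\f(H,R_H)$. The correct identity must add a term for the at most $t$ crossing models, determined by which rooted packings in $R_B$ ``complete'' against their counterparts in $R_H$; since this correction depends only on $(R_A,R_B,R_H)$ and not on whether $G_B$ or $G_B'$ is present, the overall conclusion can still be salvaged, but the identity as you wrote it is false and your proof as written does not go through. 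The paper avoids this by not using a max-over-splits identity at all: it takes a concrete optimal packing $\mc{S}=M\cup M_B\cup M_H$, observes $|M|=\fbar(G_B,R_B)$, replaces $M$ by an optimal $M'$ in $G_B'$, and reassembles $M'\cup M_B'\cup M_H$ using Fact~\ref{fait: rp}, verifying disjointness of all parts explicitly. You also hand-wave the propagation of the irrelevance clause through your decomposition (``by its recursive definition''); this is precisely where the paper argues $\fbar(G_B,R_B)\neq-\infty$ (else $\fbar(G,R_A)=-\infty$), and it is needed to invoke $G_B\eqs\E G_B'$ at the relevant encoding.
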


\begin{proof} Let us first show that the encoder \E is a  $\textsc{c}\mc{F}\textsc{P}$-encoder. Indeed, if $G$ is a $0$-boundaried graph, then $\C(\emptyset)$ consists of a single encoding $R_{\emptyset}$ (an empty set of rooted packings), and by definition of \L, any $\mc{S}$ such that $(G,\mc{S},R_\emptyset) \in \L$ is a packing of models.
%and $\f(G,R_\emptyset)$ returns the size of the maximal one. \todo{ je pense que c'est ici qu'il faut utilisé la prop}
According to Equation~(\ref{eq: relevant f-packing}), there are two possible values for $\fbar(G,R_\emptyset)$: either $\f(G,R_\emptyset)$, which by definition equals $f^\Pi(G)$, or $-\infty$.
%Note that, here, the one using natural (or truncated) function is not confronted to this issue, since, by definition $\f(G,R_\emptyset) = f^\Pi(G)$; in some sense, the argumentation added here compensate the simplification of the proof that appear below.
Let $\mc{S}$ be a packing of models of size $f^\Pi(G)$, and assume for contradiction that $\fbar(G,R_\emptyset) = - \infty$. Then, by a recursive argument we can assume that there is a separator $B$ of size at most $t$ and a subgraph $G_B$ of $G$ as in Definition \ref{defi:DP-friend}, such that $\mc{S}$ induces $R_B$ and $ \f(G_B,R_B) + t < \max \{\f(G,R): R \in \C(I) \}$.
Let $M$ be the set of models entirely realized in $G_B$. We have $|M| =\f(G_B,R_B)$, as otherwise $\mc{S}$ is not maximal. Let $M_B$ be the set of models intersecting $B$, so we have $|M_B| \leq t$. Finally, let $M_0$ be a packing of models in $G_B$ of size $\max \{\f(G,R), R \in \C(I) \}$.
Clearly, $\mc{S} \setminus (M \cup M_B) \cup M_0$ is a packing of models smaller than $\mc{S}$ (by optimality), that is, $ |M_0| \leq |M| + t$, a contradiction with the definition of \fbar. Hence $\fbar(G,R_\emptyset) = f^\Pi(G)$.

By definition of the function \fbar, the encoder \E is $g$-confined for $g: t \mapsto t$.

%and $\f(G,R_{\emptyset}) \geq k$ if and only if the graph $G$ contains at least $k$ vertex-disjoint models of graphs in $\mc{F}$, that is, if and only if $(G,k)$ is a \YES-instance of \textsc{Connected-$\mc{F}$-Packing}. In other words, $\f(G,R_{\emptyset})$ indeed gives the maximum number of vertex-disjoint models of graphs in $\mc{F}$ contained in $G$.

It remains to prove that the equivalence relation \eqsG\E is DP-friendly for $g(t)=t$.
Due to Fact \ref{fait:equiv}, it suffices to prove that \eqs\E is DP-friendly.
Let $G \in \B$ with boundary $A$, let $B$ be any separator of $G$, and let $G_B$ be as in Definition \ref{defi:DP-friend}.
%a subgraph of $G$ separated by $B$ from the rest of $G$ and such that $A \cap V(G_B) \subseteq B$.
The subgraph $G_B$ can be viewed as a $t$-boundaried graph with boundary $B$. We define $H \in \B$ to be the graph induced by $V(G) \setminus (V(G_B) \setminus B)$, with boundary $B$ (that is, we forget boundary $A$) labeled in the same way than $G_B$. Let $G_B' \in \B$ such that $G_B \eqs\E G_B'$ and let $G' = H \oplus G_B'$, with boundary $A$. %(The configuration is depicted in \autoref{fig: ???}).
We have to prove that $G \eqs\E G' $ and $\D\E(G,G') = \D\E(G_B,G_B')$, that is, that $\fbar(G,R_A) = \fbar(G',R_A) + \D\E(G_B,G_B')$ for all $R_A \in \C(\Lambda(G))$.

Let $R_A$ be an encoding defined on $A$.
Assume first that $\fbar(G,R_A) \neq - \infty$.
Let $\mc{S} = M \cup M_B \cup M_H$ be a packing of models satisfying $R_A$ with size $ \fbar(G,R_A) $ in $G$, with $M$ being the set of models entirely contained in $G_B$, $M_H$ the set of models entirely contained in $V(H)\setminus B$, and $M_B$ the set of models intersecting $B$ and $H$. Notice that $M, M_B , M_H$ is a partition of $\mc{S}$.
Let $\mc{P}$ be the set of potential models matching with the rooted packings in $R_A$.
Let also $R_B \in \C(\Lambda(G_B))$ be the encoding induced by $\mc{S} \cup \mc{P}$. %and the potential models imposed by $R_A$ as above.

Observe that $\fbar(G_B,R_B) \neq - \infty$, as otherwise, by definition of the relevant function $\fbar$, we would have that $\fbar(G,R_A) = - \infty$. Also,  by construction of $R_B$ it holds that $|M| = \fbar(G_B,R_B)$, as otherwise $\mc{S}$ would not be not maximum.
%\todo{en fait, nous avions le choix entre avoir un pb ici ou plus bas} Note that, here, the one using natural (or truncated) function need to prove that the value $\fbar(G_B,R_B) = - \infty$ is propagated to $\fbar(G,R_A)$; however, for packing problems, we do not succeed to proved the propagation; therefore, we introduce the relevant function, which, by definition, circumvent the issue.
Let $M'$ be a packing of models of $\mc{F}$ in $G_B'$ such that $(G_B',M',R) \in \L$ and of maximum cardinality, that is, such that $|M'| = \fbar(G_B',R_B)$.
Consider now the potential models matching with $R_B$. There are two types of such potential models. The first ones match with rooted packings defined by the intersection of models in $\mc{S}$ and $B$; we glue them with the potential models defined by $H \cap M_B$ to construct $M'_B$.
The other ones match with rooted packings defined by the intersection of potential model in $\mc{P}$ and $B$; we glue them with the potential models defined by $H \cap \mc{P}$ to construct $\mc{P'}$.
%Let also $M_B'$ be the set of models obtained by gluing the potential models defined by $H \cap M_B$ with the potential models in $G_B'$ matching with $R_B$.
Observe that $|M_B| = |M_B'|$. As $G_B \eqs\E G_B'$ and {$\fbar(G_B,R_B) \neq -\infty$}, we have that $|M'| =  \fbar(G_B,R_B) + \D\E(G_B,G_B')$, and therefore $|M' \cup M_B' \cup M_H| =  \fbar(G_B,R_B) + \D\E(G_B,G_B') + |M_B| + |M_H| = \fbar(G,R_A) + \D\E(G_B,G_B')$. %\todo{}

By definition we have that $M_H$ and $M'$ are packings of models.
The set $M'_B$ contains vertex-disjoint models by Fact \ref{fait: rp}.
Note that
models in $M_H \cup M'$ are vertex-disjoint (because $V(H) \cap V(G_B) = \emptyset$),
models in $M_H \cup M'_B$ are vertex-disjoint (because the ones in $M_H \cup M_B$ are vertex-disjoint), and
models in $M' \cup M'_B$ are vertex-disjoint (because $M'$ satisfies $R_B$). % (due to constraint imposed by $R_B$).\todo{burk}
Hence $M_H \cup M' \cup M'_B$ is a packing of models.

It remains to prove that $M_H \cup M' \cup M'_B$ satisfies $R_A$.
The set $\mc{P'}$  contains vertex-disjoint potential models by Fact \ref{fait: rp}.
Models in $\mc{P'} \cup M'$ are vertex-disjoint, as $M'_B$ satisfies $R_B$.
Models in $\mc{P'} \cup M'_B$ are vertex-disjoint by definition of $R_B$. Finally, models in $\mc{P'} \cup M_H$ are vertex-disjoint since $\mc{S}$ satisfies $R_A$.

% And by construction they are vertex disjoint, pairwise and with the potential models imposed by $R_A$\todo{plus de detail ?}.
It follows that $G'$ has a packing of models satisfying $R_A$ of size $ \fbar(G,R_A) + \D\E(G_B,G_B')$, that is,
$G \eqs\E G'$ and $\D\E(G,G') =  \D\E(G_B,G_B')$.

Assume now that $\fbar(G,R_A) = - \infty$.
If $\fbar(G',R_A) \neq - \infty$, then applying the same arguments as above we would have that $\fbar(G,R_A) \neq - \infty$, a contradiction. \end{proof}

\subsection{A linear kernel for \FPack} \label{ssec: FPack kernel}

We are now ready to provide a linear kernel for \textsc{Connected-Planar-$\mc{F}$-Packing}.

\begin{theorem} \label{theo: Kernel FPack}
%Let $\mc{F}$ be a finite family of connected graphs containing at least one planar graph on $r$ vertices, let $H$ be an $h$-vertex graph, and let $\textsc{cp}\mc{F}\textsc{P}_H$ be the restriction of the \textsc{Connected-Planar-$\mc{F}$-Packing} problem to $H$-minor-free input graphs. Then $\textsc{cp}\mc{F}\textsc{P}_H$ admits a constructive linear kernel of size at most $f(r,h)\cdot k$, where $f$ is an explicit function depending only on $r$ and $h$, defined in Equation \eqref{eq: kernel FPack}.

Let $\mc{F}$ be a finite family of connected graphs containing at least one planar graph on $r$ vertices, let $H$ be an $h$-vertex graph, and let \G be the class of $H$-minor-free graphs. Then $\textsc{cp}\mc{F}\textsc{P}_\G$ admits a constructive linear kernel of size at most $f(r,h)\cdot k$, where $f$ is an explicit function depending only on $r$ and $h$, defined in Equation \eqref{eq: kernelFPacking}.
\end{theorem}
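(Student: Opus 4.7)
The plan is to combine three ingredients already established in the previous sections: the linear protrusion decomposition of Lemma~\ref{lem: FPack prot decompo}, the $g$-confined DP-friendly $\textsc{c}\mc{F}\textsc{P}$-encoder \E built in Section~\ref{ssec: FPack encod} (with $g(t)=t$, by Lemma~\ref{lem: Pack DPfriend}), and the generic kernelization statement of Corollary~\ref{coro:main}.

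Given an input $(G,k)$ of $\textsc{cp}\mc{F}\textsc{P}_\G$, I would first run the algorithm of Lemma~\ref{lem: FPack prot decompo}. If it fails to produce a treewidth-modulator, the combination of Theorem~\ref{theo: Erdos Posa} and Corollary~\ref{coro: tw modul FDel} certifies that $(G,k)$ is a \YES-instance, and I would return the trivial equivalent instance consisting of $k$ vertex-disjoint copies of a fixed graph of $\mc{F}$, whose total size is at most $r\cdot k$. Otherwise, the algorithm returns in polynomial time an $(\alpha\cdot k,\, t')$-protrusion decomposition with the explicit values $\alpha = O(h^2\cdot 2^{O(h\log h)}\cdot r^{5/2}\cdot \tau_H^3\cdot f_m(h)^3)$ and $t'=2t+h$ recorded in the proof of that lemma, where $t=O(r\sqrt{r}\cdot \tau_H^3\cdot f_m(h)^3)$.

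Next, I would feed this protrusion decomposition together with \E and $g$ into Corollary~\ref{coro:main}, whose hypotheses are all verified by Lemma~\ref{lem: Pack DPfriend}. The corollary returns an equivalent instance of size at most $(1+b(\E,g,t',\G))\cdot \alpha\cdot k$, where by Lemmas~\ref{lem:progres size} and~\ref{lem:nb class} we have $b(\E,g,t',\G)=2^{r(\E,g,t',\G)+1}\cdot t'$ with $r(\E,g,t',\G)=(g(t')+2)^{s_{\E}(t')}\cdot 2^{t'}\cdot r_{\G,t'}$. To turn this bound into a function of $r$ and $h$ alone, I would substitute the size bound $s_{\E}(t')\leq t'\cdot 2^{t'\log t'}\cdot r^{t'}\cdot 2^{r^2}$ from Equation~\eqref{eq:sizeEncoderFPackSet}, the bound $r_{\G,t'}\leq 2^{t'\log t'}\cdot h^{t'}\cdot 2^{h^2}$ coming from the rooted-packing counting argument of Section~\ref{ssec: FPack encod} applied to the MSO-definable class \G, and the explicit expressions for $\alpha$ and $t'$ above.

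The main obstacle is purely computational: since $t'$ itself depends polynomially on $r$ and on $\tau_H\cdot f_m(h)$, the quantity $b(\E,g,t',\G)$ ends up as a tower of three exponentials in these parameters, and the resulting expression has to be carefully reorganized into a closed form $f(r,h)$ depending only on $r$ and $h$ (the dependence on $\tau_H$ and on $f_m$ being absorbed, since both are themselves functions of $h$ alone). There is no conceptual difficulty here, only the arithmetic of collecting exponents to produce the final statement of Equation~\eqref{eq: kernelFPacking}; the correctness of the resulting kernel, as well as its constructibility and polynomial running time, follow directly from Corollary~\ref{coro:main}.
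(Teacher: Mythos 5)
Your proposal is correct and follows essentially the same route as the paper: run Lemma~\ref{lem: FPack prot decompo} to either certify a \YES-instance or obtain a linear protrusion decomposition, then plug the encoder of Subsection~\ref{ssec: FPack encod} (shown $g$-confined, $\Pi$-encoding and DP-friendly in Lemma~\ref{lem: Pack DPfriend}) into Corollary~\ref{coro:main} to get the bound~\eqref{eq: kernelFPacking}. Your only substantive additions are (i) spelling out what to output in the \YES-case and (ii) correctly using the protrusion-width parameter $t'=2t+h$ when invoking $b(\E,g,\cdot,\G)$ and $s_{\E}(\cdot)$, a point the paper glosses over by writing $t$.
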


\begin{proof}
By Lemma \ref{lem: FPack prot decompo}, given an instance $(G,k)$  we can either conclude that $(G,k)$ is a \YES-instance of $\textsc{cp}\mc{F}\textsc{P}_\G$, or build in linear time an $((\alpha_{H} \cdot t )\cdot k', 2t + h)$-protrusion decomposition of $G$, where $\alpha_{H},t, k'$ are defined in the proof of Lemma \ref{lem: FPack prot decompo}.

We now consider the encoder \E defined in Subsection \ref{ssec: FPack encod}. By Lemma \ref{lem: Pack DPfriend}, \E is a $g$-confined $\textsc{cp}\mc{F}\textsc{P}_\G$-encoder and \eqsG\E is DP-friendly, where $g(t)=t$ and \G is the class of $H$-minor-free graphs. An upper bound on $s_{\E}(t)$ is given in Equation \eqref{eq:sizeEncoderFPackSet}. Therefore, we are in position to apply Corollary \ref{coro:main} and obtain a linear kernel for $\textsc{cp}\mc{F}\textsc{P}_\G$ of size at most
\begin{equation}\label{eq: kernelFPacking}
(\alpha_{H} \cdot t) \cdot (b\left(\E, g,t ,\mathcal{G}\right)+1) \cdot k' \ , \text{ where}
\end{equation}
\begin{itemize}
\item[$\bullet$] $b\left(\E, g, t ,\mathcal{G}\right)$ is the function defined in Lemma \ref{lem:progres size};
\item[$\bullet$] $t$ is the bound on the treewidth provided by Corollary \ref{coro: tw modul FDel};
%=O ( r\sqrt r \cdot \tau_H^3 \cdot f_m(h)^3 )$
\item[$\bullet$] $k' $ is the parameter of \textsc{$\mc{F}$-Deletion} provided by Theorem \ref{theo: Erdos Posa}; and
%= O ( r \cdot 2^{O(h \log h)}\cdot k)$
%(with the factor due to Erd\"os-P\`osas) %( c \cdot r \cdot 2^{15h+8h \log h}\cdot k)$
%\item $2\tau_H$ is in \autoref{coro: tw modul FDel}  (the constant in the size of $X$); %, due to approximation of tree decomposition);
\item[$\bullet$] $\alpha_H $ is  the constant provided by Theorem \ref{theo:prot dec}.
%= O( h^2 2 ^{ O(h \log h) })$
\end{itemize}\vspace{-.65cm}
\end{proof}
%\begin{proof}
%By \autoref{lem: FPack prot decompo}, given an instance $(G,k)$  we can claim that $(G,k)$ is a \YES-instance for $\textsc{cp}\mc{F}\textsc{P}_H$ or we can built in linear time a $(r^2 \cdot 2^{O(h^2)}\cdot k, r \cdot 2^{O(h^2)})$-protrusion decomposition of $G$.
%
%We now consider the encoder \Ebar (defined in \autoref{ssec: FPack encod}). By \autoref{lem: Pack DPfriend}, \Ebar is a $g$-confined $\textsc{cp}\mc{F}\textsc{P}_\G$-encoder and \eqsG\Ebar is DP-friendly, where $g(t)=t$ and \G is the class of $H$-minor-free graphs. An upper bound on $s_{\E}(t)$ is given in Equation \eqref{eq:sizeEncoderFPackSet}. Therefore, we are in position to apply \autoref{coro: main} and obtain a linear kernel for $\textsc{cp}\mc{F}\textsc{P}_H$ of size at most
%\begin{equation}\label{eq: kernel FPack}
%r^2 \cdot 2^{O(h^2)} \cdot b\left(\E, g,r \cdot 2^{O(h^2)} ,\mathcal{G}\right) \cdot k \ ,
%\end{equation}
%where $b\left(\E, g, r \cdot 2^{O(h^2)} ,\mathcal{G}\right)$ is the function defined in \autoref{lem:  nb class}.
%\end{proof}

By using the recent results of Chekuri and Chuzhoy \cite{ChCh13}, it can be shown that the factor $\alpha_H = O( h^2 2 ^{ O(h \log h) })$ in Theorem \ref{theo: Erdos Posa} can be replaced with $h^{O(1)}$. However, in this case this would not directly translate into an improvement of the size of the kernel given in Equation \eqref{eq: kernelFPacking}, as the term $h^{O(1)}$ would be dominated by the term $f_m (h) = 2^{ O(h^2 \log h)}$. % $2^{O(h^2)}$.

% !TEX root = ../K-DP-pack-arXiv_NEW.tex
%elements d'encodeur
\renewcommand{\Epi}[2]{\ensuremath{\mc{#1 E}_{\ell\!\mc{F}\!\sc{P}}^{#2}}\xspace}
\renewcommand{\Ebar}{{\Epi{\bar}{}}\xspace}
\renewcommand{\fbar}{\ensuremath{\bar f^{\Epi{}{}}_g}\xspace}
\newcommand{\Scat}{\textsc{Scattered Set}\xspace}

\section{Application to \rFPack} \label{sec: rFPack}

We now consider the scattered version of the packing problem.
Given a finite set of graphs  $\mc{F}$ and a positive integer $\ell$, the \textsc{$\ell$-$\mc{F}$-Packing} problem is defined as follows.

\vspace{.4cm}
\begin{boxedminipage}{.9\textwidth}
\textsc{$\ell$-$\mc{F}$-Packing}
\vspace{.1cm}

\begin{tabular}{ r l }
\textbf{Instance:}  & A graph $G$ and a non-negative integer $k$. \\
\textbf{Parameter:} & The integer $k$.\\
\textbf{Question:}  & Does $G$ have $k$ subgraphs $G_1,\ldots,G_k$ pairwise at distance at \\
                    & ~~~least $\ell$, each containing some graph from $\mc{F}$ as a minor?\\
\end{tabular}
\end{boxedminipage}
\vspace{.4cm}

We again consider the version of the problem where all the graphs in $\mc{F}$ are connected and at least one is planar, called \textsc{Connected-Planar-$\ell$-$\mc{F}$-Packing} ($\textsc{cp}\ell\mc{F}\textsc{P}$).

We obtain a linear kernel for \textsc{Connected-Planar \rFPack} on the family of graphs excluding a fixed apex graph $H$ as a minor. We use again the notions of model, packing of models, and rooted packing.

%We obtain a linear kernel for \textsc{Connected-Planar-$\mc{F}$-Packing} on the family of grahs excluding a fixed graph $H$ as a minor. The proof relies on the Erd\H{o}s-P\'{o}sa property \cite{???}.

\subsection{A protrusion decomposition for an instance of \rFPack}\label{ssec: rFPack prot decompo}

In order to obtain a linear protrusion decomposition for \rFPack, a natural idea could be to prove an Erd\H{o}s-P\'osa property at distance $\ell$, generalizing the approach for \textsc{$\mc{F}$-Packing} described in Section~\ref{sec: FPack}. Unfortunately, the best known Erd\H{o}s-P\'osa relation between a maximum $\ell$-$\cal F$-packing and a minimum $\ell$-$\cal F$-deletion set is not linear. Indeed, by following and extending the ideas of Giannopoulou \cite[Theorem 8.7 in Section 8.4]{Archontia-PhD} for the special case of cycles, it is possible to derive a bound of $O(k\sqrt{k})$, which is superlinear, and therefore not enough for our purposes. Proving a linear bound for this Erd\H{o}s-P\'osa relation, or finding a counterexample, is an exciting topic for further research.

%We try to generalise to any family $\cal \F$ the ideas of , however, the bound that have been derived was in $O(k\sqrt{k})$. %Following the idea proposed by Giannopoulou \cite{Archontia-PhD} for the special case where $\cal F = \{ K_3\}$,

We will use another trick to obtain the decomposition: we will (artificially) consider the \rFPack problem as a vertex-certifiable problem. Hence we propose the  formulation described below, which is clearly equivalent to the previous one. Using such a formulation, a natural question is whether  the \rFPack problem can fit into the framework for vertex-certifiable problems~\cite{KviaDP}. However, finding an appropriate encoder for this formulation does not seem an easy task, and it is more  convenient to
describe the encoder for \rFPack using the new framework designed for packing problems.

\vspace{.4cm}
\begin{boxedminipage}{.94\textwidth}
\textsc{$\ell$-$\mc{F}$-Packing}
\vspace{.1cm}

\begin{tabular}{ r l }
\textbf{Instance:}  & A graph $G$ and a non-negative integer $k$. \\
\textbf{Parameter:} & The integer $k$.\\
\textbf{Question:}  & Does $G$ have a set $\{v_1, \dots, v_k \}$ of $k$ vertices such that every $v_i$ \\
                    &  ~~~belongs to a subgraph $G_i$ of $G$ with $G_1,\ldots,G_k$ pairwise at\\
                    & ~~~distance at least $\ell$ and each containing some graph from $\mc{F}$\\
                    &  ~~~as a minor?\\
\end{tabular}
\end{boxedminipage}
\vspace{.4cm}

With such a formulation, we are in position to use some powerful results from Bidimensionality theory. It is not so difficult to see that the \rFPack problem is
\emph{contraction-bidimensional}~\cite{FLST10}.
%contraction-bidimensional \cite[Definition 2.8]{FLST10}, linear-separable \cite[Definition 3.3]{FLST10}. %n° version journal en preparation
Then we can use Theorem~\ref{theo: tw mod} and obtain the following corollary.
Again, the bound on the treewidth is derived from the proof of Theorem \ref{theo: tw mod} in~\cite{FLST10}.

\begin{corollary} \label{coro: tw modul rFpack}
Let $\mc{F}$ be a finite set of graphs containing at least one $r$-vertex planar graph $F$, let $H$ be an $h$-vertex apex graph, and let \G be the class of $H$-minor-free graphs.
If $(G,k) \in \textsc{p}\ell\mc{F}\textsc{P}_\G$, then there exists a set $X\subseteq V(G)$
  such that $|X|= k$
  and $\tw(G-X)= O ((2r+\ell)^{3/2} \cdot \tau_H^3 \cdot f_c(h)^3 ) $. Moreover, given an instance  $(G,k)$ with $|V(G)|=n$, there is an algorithm running in time $O(n^{3})$ that either finds such a set $X$ or correctly reports that $(G,k) \notin \textsc{p}\ell\mc{F}\textsc{P}_\G$.

\end{corollary}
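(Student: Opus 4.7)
The plan is to recast \rFPack in its vertex-certifiable form (as already stated in the excerpt) and apply Theorem~\ref{theo: tw mod} of Fomin et al.~\cite{FLST10} with $\epsilon = 1$, which requires verifying two properties: contraction-bidimensionality and linear separability. Closure under contraction is immediate, since a packing of models pairwise at distance at least $\ell$ in $G/e$ lifts to such a packing in $G$. For the quantitative bidimensionality inequality, I would argue that whenever $G$ contracts onto $\Gamma_\rho$, one can carve the triangulated grid into blocks of side $\Theta(2r+\ell)$, place in the central $r \times r$ subblock of each a contraction-model of the fixed $r$-vertex planar minor $F \in \mc{F}$, and keep a ring of grid rows/columns of width $\ell$ separating consecutive blocks; lifting this configuration back to $G$ produces an $\ell$-$\mc{F}$-packing of size $\Omega(\rho^2/(2r+\ell)^2)$. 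Linear separability is routine and mirrors the check used for \textsc{$\mc{F}$-Deletion} in~\cite{FLST10}: restricting a packing to a subgraph $Y$ loses at most $O(|\partial Y|)$ models, namely those crossing the boundary.

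With both properties in hand, Theorem~\ref{theo: tw mod} applied with $\epsilon=1$ yields a modulator $X$ of size $|X| \leq f^{\Pi}(G) = k$ in any \YES-instance. For the explicit treewidth bound, I would trace constants through the proof of~\cite{FLST10}: Proposition~\ref{prop:tw-contraction} relates treewidth and the grid parameter with factor $f_c(h)$; the bidimensionality coefficient $(2r+\ell)^{-2}$ and the two approximation ratios internal to the algorithm --- the $\tau_H$-approximation of treewidth on $H$-minor-free graphs, and the $2$-approximation of $f^{\Pi}(G)$ provided by the EPTAS of~\cite{FLRS10} --- are composed through the recursive separator-based construction of the modulator. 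The three layers of this composition produce the cubic dependence on both $\tau_H$ and $f_c(h)$, while the geometric arithmetic contributes the $(2r+\ell)^{3/2}$ factor, producing the claimed $O((2r+\ell)^{3/2} \cdot \tau_H^3 \cdot f_c(h)^3)$.

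For the algorithmic statement, I would invoke the constructive version of the same framework: compute a $\tau_H$-approximate tree decomposition, run the EPTAS of~\cite{FLRS10} with ratio $2$ to obtain a reference packing, and then perform the recursive separator-based extraction of the modulator from~\cite{FLST10}. If at any stage the procedure fails to deliver a set $X$ of size at most $k$ with the prescribed treewidth bound on $G-X$, then by the first part of the corollary $(G,k)$ cannot be a \YES-instance and the algorithm reports \NO. The $O(n^{3})$ running time is inherited from the analysis in~\cite{FLST10}. The main obstacle I anticipate is the careful verification of the bidimensionality inequality with the explicit coefficient $(2r+\ell)^{-2}$: one has to show that the $r \times r$ central subblocks of $\Gamma_\rho$ lift through the contraction map to subgraphs of $G$ that themselves admit every planar $F \in \mc{F}$ as a minor, and that the separating rings witness distance at least $\ell$ in $G$ and not merely in the quotient grid --- a technical but standard bidimensionality check.
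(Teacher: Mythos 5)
Your proposal matches the paper's own argument: the paper also recasts \rFPack in vertex-certifiable form, observes contraction-bidimensionality and (implicitly) linear separability, and then cites Theorem~\ref{theo: tw mod} with $\epsilon=1$, tracing the treewidth constant through the proof in~\cite{FLST10} exactly as you do. You supply more detail on the bidimensionality check (the $\Theta(2r+\ell)$ block decomposition of $\Gamma_\rho$) than the paper, which simply asserts it ``is not so difficult to see''; otherwise the route, the constants, and the algorithmic part ($\tau_H$-approximate tree decomposition, ratio-$2$ EPTAS, recursive separator extraction, $O(n^3)$ runtime) are the same.
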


We are now able to construct a linear protrusion decomposition.

\begin{lemma} \label{lem: rFPack prot decompo}
Let $\mc{F}$ be a finite set of graphs containing at least one $r$-vertex planar graph $F$, let $H$ be an $h$-vertex apex graph, and let \G be the class of $H$-minor-free graphs.
%Let $(G,k)$ be an instance of $\textsc{cp}\mc{F}\textsc{P}_\G$ \todo[color=green!40]{???} . If $(G,k)$ is not a \YES-instance, then we can construct a linear protrusion decomposition.
Let $(G,k)$ be an instance of \textsc{Connected Planar}-\rFPack. If $(G,k)  \in \textsc{cp}\ell\mc{F}\textsc{P}_\G$, then we can construct  in polynomial time a linear protrusion decomposition  of $G$.
\end{lemma}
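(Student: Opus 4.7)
The plan is to mirror the argument of Lemma~\ref{lem: FPack prot decompo}, with one structural simplification: since \rFPack admits a vertex-certifiable reformulation, we can reach a linear treewidth-modulator \emph{directly} from the \textsc{Yes}-instance hypothesis via Corollary~\ref{coro: tw modul rFpack}, without having to route through any Erd\H{o}s--P\'{o}sa duality (which, as already mentioned in the text, is only known to be superlinear for scattered packings).

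First, given an instance $(G,k)$ of \textsc{Connected-Planar}-\rFPack, I feed it to the algorithm of Corollary~\ref{coro: tw modul rFpack}, which runs in time $O(n^{3})$. Since $(G,k) \in \textsc{cp}\ell\mc{F}\textsc{P}_\G \subseteq \textsc{p}\ell\mc{F}\textsc{P}_\G$ by assumption, the algorithm cannot output the answer ``$(G,k) \notin \textsc{p}\ell\mc{F}\textsc{P}_\G$'' and must therefore return a set $X \subseteq V(G)$ with $|X|= k$ and $\tw(G-X)\leq t$, where $t = O((2r+\ell)^{3/2} \cdot \tau_H^3 \cdot f_c(h)^3)$.

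Next, I use the pair $(G,X)$ as input to the algorithm of Theorem~\ref{theo:prot dec}, instantiated with constant $c=1$ (since $|X|\leq 1 \cdot k$). In time $O(n)$ this produces an $((\alpha_{H}\cdot t)\cdot k,\, 2t+h)$-protrusion decomposition of $G$, where $\alpha_{H}= O(h^2 2^{5h\log h})$. Putting the bounds together, $G$ admits a
\[
\Bigl(\,O\bigl(h^2\, 2^{5h\log h}\cdot (2r+\ell)^{3/2} \cdot \tau_H^3 \cdot f_c(h)^3\bigr)\cdot k,\ O\bigl((2r+\ell)^{3/2} \cdot \tau_H^3 \cdot f_c(h)^3\bigr)\Bigr)
\]
protrusion decomposition, which is linear in $k$ as required, and it has been constructed in polynomial time.

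There is essentially no obstacle here once Corollary~\ref{coro: tw modul rFpack} is in hand; the only thing to notice is that the vertex-certifiable and the original packing formulations of \rFPack are trivially equivalent, so the hypothesis $(G,k) \in \textsc{cp}\ell\mc{F}\textsc{P}_\G$ transfers verbatim to the input of Corollary~\ref{coro: tw modul rFpack}. The real work, namely obtaining the bidimensionality-based modulator on apex-minor-free graphs, has been isolated in Corollary~\ref{coro: tw modul rFpack} and the engineering of Fomin~\emph{et al}.~\cite{FLST10}; the present lemma is simply the ``glue'' that combines it with the protrusion-decomposition theorem of Kim~\emph{et al}.~\cite{KLP+12}.
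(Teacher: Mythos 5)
Your proof is correct and follows essentially the same route as the paper: invoke Corollary~\ref{coro: tw modul rFpack} (with $c=1$) to obtain the linear treewidth-modulator $X$, then feed $(G,X)$ to Theorem~\ref{theo:prot dec} to get the $((\alpha_H\cdot t)\cdot k,\,2t+h)$-protrusion decomposition. The only cosmetic difference is that you phrase the modulator step as ``the algorithm must succeed because $(G,k)$ is a \textsc{Yes}-instance,'' whereas the paper presents it symmetrically (``if it fails, conclude \textsc{No}''), and you keep the $(2r+\ell)^{3/2}$ from the Corollary statement while the paper's proof writes $(r+\ell)^{3/2}$ — both are identical up to the hidden constant.
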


\begin{proof}
Given an instance $(G,k)$ of $\textsc{cp}\ell\mc{F}\textsc{P}_\G$, we run the algorithm given by Corollary \ref{coro: tw modul rFpack}.
If the algorithm is not able to find a treewidth-modulator $X$ of size $|X|= k$, then we can conclude that $(G,k) \notin \textsc{cp}\ell\mc{F}\textsc{P}_\G$.
Otherwise, we use the set $X$ as input to the algorithm given by Theorem~\ref{theo:prot dec}, which outputs in linear time an
$((\alpha_{H} \cdot t)\cdot k, 2t + h)$-protrusion decomposition of $G$, where
\begin{itemize}
\item[$\bullet$] $t =O ( (r+\ell)^{3/2} \cdot \tau_H^3 \cdot f_c(h)^3 )$ is provided by Corollary \ref{coro: tw modul rFpack}; and
\item[$\bullet$] $\alpha_H = O( h^2 2 ^{ O(h \log h) })$ is  the constant provided by Theorem \ref{theo:prot dec}.
\end{itemize}

This is an $\left( h^2 \cdot 2 ^{ O(h \log h) } \cdot  (r+\ell)^{3/2} \cdot \tau_H^3 \cdot f_c(h)^3 \cdot k,\ O ( (r+\ell)^{3/2} \cdot \tau_H^3 \cdot f_c(h)^3 )\right)$-pro\-trusion decomposition of $G$.
\end{proof}

\subsection{An encoder for \rFPack} \label{ssec: rFPack encod}

Our encoder \E for \rFPack is a combination of the encoder for \FPack and the one for $\ell$-\Scat that we defined in~\cite{KviaDP}.

\smallskip
\noindent\textbf{The encodings generator \C.}
Let $G \in \B$ with boundary $\partial (G)$ labeled with $\Lambda(G)$. The function \C maps $\Lambda(G)$ to a set  $\C(\Lambda(G))$ of encodings.
Each $R\in \C(\Lambda(G))$ is a pair $(R_P,R_S)$, where
\begin{itemize}
\item[$\bullet$] $R_P$ is a set of at most $|\Lambda(G)|$ rooted packings $\{(\mc{A}_i, S_{F_i}^*,S_{F_i} ,\phi_i, \chi_i) \mid i \in \Lambda(G), F_i \in\mc{F} \}$, where each such rooted packing encodes a potential model of a minor $F_i \in \mc{F}$ (that is, $R_P$ is an encoding of \FPack); and
\item[$\bullet$] $R_S$  maps label $j \in \Lambda(G)$ to an $|\Lambda(G)|$-tuple $(d,d_i,~ i \in \Lambda(G), i \neq j) \in [0,\ell+1]^{|\Lambda(G)|}$ (that is,  $R_S$ is an encoding of $\ell$-\Scat), for simplicity, since each label in $\Lambda(G)$ is uniquely associated with a vertex in $\partial(G)$, we denote by $R(v)$ the vector assigned by $R_S$ to label $\lambda(v)$.
\end{itemize}

\noindent\textbf{The language \L.}
For a packing of models $\mc{S}$, we say that $(G,\mc{S},R)$ belongs to the language $\L$ (or that $\mc{S}$ is a packing of models \emph{satisfying} $R$) if
\begin{itemize}
\item[$\bullet$] the models are pairwise at distance at least $\ell$,
      that is, for each $\Phi_1, \Phi_2 \in \mc{S}$  models of $F_1,F_2 \in \mc{F}$, respectively, $d_G(V(\Phi_1(F_1)),V(\Phi_2(F_2))) \geq \ell$;
\item[$\bullet$] there is a packing of potential models matching with the rooted packings of $R_P$ pairwise at distance at least $\ell$ and at distance at least $\ell$  from $\bigcup_{\Phi \in \mc{S}} \Phi(F)$; and
% in $G \setminus \bigcup_{\Phi \in \mc{S}} \Phi(F)$; \todo{at dist r}
\item[$\bullet$] for any vertex $v \in \partial(G)$, if $(d,d_i) = R(v)$ then $d_G(v,\mc{S} \cup \mc{P}) \geq d$, and   $d_G(v,w) \geq d_{\lambda(w)}$, for any $w\in \partial(G)$.

\end{itemize}

Similarly to \FPack, we need the relevant version of the function \fbar.
%To this aim, given a separator $B$, and a subgraph $G_B$ as in \autoref{defi: DP-friend}, we have to describe how a (partial) solution induce an encoding $R_B \in \C(\Lambda(G_B))$.
Let $G\in \B$ with boundary $A$ and let $\mc{S}$ be a partial solution satisfying some $R_A \in \C(\Lambda(G))$. Let also $\mc{P}$ be the set of potential models matching with the rooted packings in $R_A$. Given a separator $B$ in $G$, and $G_B$ as in Definition \ref{defi:DP-friend}, we define the induced encoding $R_B = (R_P,R_S)$ as follows:
\begin{itemize}
\item[$\bullet$] $R_P$ is defined by the intersection of $B$ with models in $ \mc{S} \cup \mc{P}$, (as for \FPack); and
\item[$\bullet$] $R_S$ maps each $v \in B$ to $R(v) = (d_{G_B}(v,\mc{S} \cup \mc{P}), d_{G_B}(v,w), ~ w \in B)$.
\end{itemize}

The set of models of $\mc{S}$ entirely realized in $G_B$ is a partial solution satisfying $R_B$.

The definition of an irrelevant encoding is as described in Section ~\ref{sec:applications}.

\smallskip
\noindent\textbf{The function \fbar.}
Let $G \in \B$ with boundary $A$ and let $g(t) =2 t$.  We define
$\fbar$ as
\begin{equation} \label{eq: relevant f}
\fbar(G,R_A) =\
\left\{\begin{array}{lll}
  & -\infty,  & \text{if } \f(G,R_A) + 2t <  \\
  &           &  ~~~~~~\max \{\f(G,R): R \in \C(\Lambda(G)) \},\\
  &           & \text{or if } R_A \text{ is irrelevant for } \fbar. \\
%  &           & \text{or if \rmq{there exists} some $\mc{S}$ such that $(G,\mc{S},R_A)\in \L$,} \\
%  &           & \text{some separator $B$ of $G$ and some subgraph $G_B$} \\
%  &           & \text{such that $\mc{S}$ induce $R_B$ and }
%                \f(G_B,R_B) = +\infty  \\
  & \f(G,R_A),  & \mbox{otherwise}.\\
\end{array}\right.
\end{equation}
In the above equation,  \f is the natural optimization function defined as
\begin{equation}\label{eq:fEmin}
\f(G,R) \ = \ \max \{k \ :  \exists \mc{S}, |\mc{S}| \geq k, (G,\mc{S},R) \in \L\}.
\end{equation}

\smallskip
\noindent\textbf{Size of \E.}
Since $\C(I) = { \mc{C}^{ \mc{E}_{\!\mc{F}\!\sc{P}} } }
 \times ([0,\ell+1]^t)^t$, it holds that
\begin{equation}\label{eq:sizeEncoderrFPackSet}
s_{\E}(t) \ \leq \ s_{\bar{\mc{E}}_{\!\mc{F}\!\sc{P}}}(t) \times (\ell+2) ^ {t^2}.
\end{equation}

\begin{lemma} \label{lem: rPack DPfriend}
The encoder $\E$ is a $g$-confined $\textsc{c}\ell\mc{F}\textsc{P}$-encoder for $g(t)=2t$. Furthermore, if \G is an arbitrary class of graphs, then the equivalence relation \eqsG\E is DP-friendly.
\end{lemma}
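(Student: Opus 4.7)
The plan is to mirror the proof of Lemma~\ref{lem: Pack DPfriend}, adapted to account for the distance constraints encoded in $R_S$.

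First I would verify that $\E$ is a $\textsc{c}\ell\mc{F}\textsc{P}$-encoder. When $G$ is $0$-boundaried, $\C(\emptyset)$ is a singleton $\{R_\emptyset\}$ (the empty rooted packing paired with the empty distance map), and the language $\L$ requires precisely that $\mc{S}$ is an $\ell$-scattered packing of $\mc{F}$-minor-models in $G$. Hence $\f(G,R_\emptyset) = f^{\textsc{c}\ell\mc{F}\textsc{P}}(G)$. To conclude $\fbar(G,R_\emptyset) = f^{\textsc{c}\ell\mc{F}\textsc{P}}(G)$, I would show that $R_\emptyset$ is not irrelevant: otherwise, an optimal $\mc{S}$ would induce at some separator $B$ an encoding $R_B$ with $\fbar(G_B,R_B) = -\infty$. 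Descending to a lowest such separator and splitting $\mc{S}$ into $M$ (entirely inside $G_B$), $M_B$ (models crossing $B$), and $M_H$ (entirely outside $G_B$), the bound $|M_B|\leq t$ together with the loss of at most $t$ additional models of $M_H$ within distance $\ell$ of the swapped-in boundary models when replacing $M\cup M_B$ by a witness of $\max_{R'}\f(G_B,R')$ would produce a strictly larger packing than $\mc{S}$, a contradiction. This is exactly where the extra factor of $t$ in $g(t)=2t$ (compared with $g(t)=t$ for \FPack) is used, to absorb the scattering-induced loss.

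Second, the $g$-confinement with $g(t)=2t$ is immediate from Equation~\eqref{eq: relevant f}: if $\fbar(G,R)\neq -\infty$ then $\fbar(G,R)\geq \max_{R'}\f(G,R') - 2t$, and $\max_{R'}\fbar(G,R')\leq \max_{R'}\f(G,R')$, so the spread of the relevant values is at most $2t$.

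For DP-friendliness, by Fact~\ref{fait:equiv} it suffices to prove $G \eqs\E G'$ with $\D\E(G,G') = \D\E(G_B,G_B')$ whenever $G_B \eqs\E G_B'$. For an encoding $R_A$ on the boundary $A$ of $G$ with $\fbar(G,R_A)\neq -\infty$, I would take an optimal solution $\mc{S} = M \cup M_B \cup M_H$ satisfying $R_A$, let $\mc{P}$ be the potential models matching $R_A$, and let $R_B$ be the encoding induced on $G_B$ by $\mc{S}\cup\mc{P}$. Optimality of $\mc{S}$ and the argument above force $|M|=\fbar(G_B,R_B)$ and non-irrelevance of $R_B$. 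I would then take $M'$ realizing $\fbar(G_B',R_B) = \fbar(G_B,R_B) + \D\E(G_B,G_B')$ in $G_B'$, glue $M'$ to $M_H$ through the boundary models $M_B$ (using Fact~\ref{fait: rp}) and to the potential models $\mc{P}'$ in $G'$ obtained analogously, and argue that the result is an $\ell$-scattered packing of size $\fbar(G,R_A) + \D\E(G_B,G_B')$ satisfying $R_A$ in $G'$. The reverse inequality and the straightforward handling of the $-\infty$ case complete the equivalence.

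The main obstacle is verifying that the glued collection in $G'$ still fulfils the three conditions of $\L$: pairwise distance $\geq\ell$ between entire models, distance $\geq\ell$ from the potential models encoded by $R_A$, and the boundary-distance constraints recorded by $R_S$ at $A$. The key observation is that $R_B$ records, for every $v\in B$, its distance inside $G_B$ to every (potential) model of $\mc{S}\cup\mc{P}$ as well as the pairwise distances between vertices of $B$. Since $G_B\eqs\E G_B'$ forces the same $R_B$-compatible distance pattern at $B$ in $G_B'$, any path in $G'$ of length $<\ell$ between a model of $M'$ and a model of $M_H\cup M_B\cup\mc{P}'$ must cross $B$, and would therefore induce in $G$ a path of the same length between the corresponding model of $M$ and a model of $\mc{S}\cup\mc{P}$, contradicting $(G,\mc{S},R_A)\in\L$. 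The analogous argument validates the distance constraints of $R_S$ at $A$, finishing the proof.
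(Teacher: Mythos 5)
Your proof follows the same overall structure as the paper's: (1) show $\E$ is a $\textsc{c}\ell\mc{F}\textsc{P}$-encoder by arguing that $R_\emptyset$ is never irrelevant, (2) read off $g$-confinement from the definition of $\fbar$, and (3) prove DP-friendliness by gluing a maximum packing $M'$ of $G_B'$ to $M_B,M_H,\mc{P}$ and then verifying the three conditions of $\L$, with the distance constraints propagated through $R_B$. Parts (2) and (3) match the paper closely; in particular, your argument that a length-$<\ell$ path in $G'$ must cross $B$ and therefore would yield a length-$\leq\ell$ path in $G$ is exactly the path-decomposition argument in the paper's proof (which in turn points to Lemma 6 of the earlier STACS paper).

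The place where you diverge from the paper, and where there is a genuine gap, is the counting in step (1). You keep the whole swapped-in packing $M_0$ and propose to discard at most $t$ models of $M_H$ that are ``within distance $\ell$ of the swapped-in boundary models.'' But the scattering hypothesis on $M_H$ only ensures its models are pairwise at distance $\geq \ell$; it does \emph{not} bound the number of $M_H$-models lying within distance $\ell$ of $B$ (two such models could be within $\ell$ of the same $v\in B$ and still be $\geq\ell$ apart). So the ``at most $t$ additional models of $M_H$'' claim does not follow, and the budget $g(t)=2t$ is not justified by your argument. The paper instead proceeds \emph{symmetrically}: it sets $M_0^*$ and $M_H^*$ to be $M_0$ and $M_H$ minus the models that come within distance about $\ell/2$ of $B$. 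At radius $\ell/2$, each $v\in B$ is close to at most one $M_0$-model and at most one $M_H$-model (by scattering), giving the bound of $t$ on each side; and since every $M_0^*$--$M_H^*$ path must cross $B$ with both endpoints now farther than $\ell/2$ from $B$, the union $M_0^*\cup M_H^*$ is indeed $\ell$-scattered. This symmetric trimming at half the scattering radius is what makes the $2t$ budget go through, and it is the piece missing from your write-up.
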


\begin{proof}
We first prove that \E is a $\textsc{c}\ell\mc{F}\textsc{P}$-encoder. Obviously, $\{(G,\mc{S}) : (G,\mc{S},R_\emptyset) \in \L, R_\emptyset \in \C(\emptyset) \} = L^\Pi$. As in the proof of Lemma~\ref{lem: Pack DPfriend}, in order to show that $ \f(G,R_\emptyset) \neq -\infty$ we prove that the value computed by \fbar has not been truncated. Let $G, G_B$ and $\mc{S}, M, M_H, M_B, M_0$ as in proof of Lemma \ref{lem: Pack DPfriend}, and let $M_0^* = M_0 \setminus \{\Phi(F) : \Phi(F) \cap N_{r/2}(B) \neq \emptyset, F \in \mc{F} \}$ and $M_H^* = M_H \setminus \{\Phi(F) : \Phi(F) \cap N_{r/2}(B) \neq \emptyset, F \in \mc{F} \}$. $M_0^* \cup M_H^*$ is a scattered packing of size at least $|\mc{S}|-2t$.

The encoder \E is $g$-confined for $g: t \mapsto 2t$ by definition of \fbar.

Following the proof of Lemma \ref{lem: Pack DPfriend} again, let $G,G' \in \B$ with boundary $A$ and let $G_B,G_B',H \in \B$ with boundary $B$.
We have to prove that %$G \eq\Ebar G' $ and $\D\Ebar(G,G') = \D\Ebar(G_B,G_B')$, that is
$\fbar(G,R_A) = \fbar(G',R_A) + \D\E(G_B,G_B')$ for every $R_A \in \C(\Lambda(G))$.

Let $R_A$ be an encoding defined on $A$.
Assume that $\fbar(G,R_A) \neq - \infty$.
Let $\mc{S} = M \cup M_B \cup M_H$ be a packing of models satisfying $R_A$ with size $ \fbar(G,R_A) $ in $G$, with $M,M_B,M_H$ as in the proof of Lemma \ref{lem: Pack DPfriend}. Let also $\mc{P}$ be the set of potential models matching with $R_A$ and let $R_B \in \C(\Lambda(G_B))$ be the encoding induced by $\mc{S} \cup \mc{P}$.

Observe that, by definition, $\fbar(G_B,R_B) \neq - \infty$. Hence there is a packing $M'$ in $G_B'$ of maximum cardinality and such that $(G_B',M',R) \in \L$. As in the proof of Lemma \ref{lem: Pack DPfriend}, we can define $M_B'$ to be the set of models obtained
from the potential models defined by the intersection of models in $M_B$ with $H$, glued to the ones in $G_B'$ matching with $R_B$.
%by gluing the potential models defined by the intersection of models in $M_B$ with $H$  with the ones in $G_B'$ matching with $R_B$,
We can also define $\mc{P'}$ to be the set of potential models obtained
%by gluing potential models defined by $H \cap \mc{P}$ with the ones in $G_B'$
from the potential models defined by the intersection of models in $M_B$ with $H$, glued to the ones in $G_B'$ matching with $R_B$.
%(which exists because $M'$ satisfies $R_B$).
As $G_B \eqs\E G_B'$ and following the argumentation in Lemma \ref{lem: Pack DPfriend} we have that $|M' \cup M_B' \cup M_H| = \fbar(G,R_A) + \D\E(G_B,G_B')$.

We already have that $\mc{S'} = M_H \cup M' \cup M'_B$ is a packing of models according to the proof of Lemma \ref{lem: Pack DPfriend}. It remains to prove that (potential) models in $\mc{S'} \cup \mc{P'}$ are pairwise at distance at least $\ell$. We follow the proof of \cite[Lemma 6]{KviaDP}. Let $P$ be a shortest path between any two models in $\mc{S'} \cup \mc{P'}$. We subdivide $P$ into maximal subpaths in $G_B'$ and maximal subpaths in $H$.
Clearly the length of a subpath in $H$ does not change.
Moreover, note that the length of a subpath in $G_B'$ with extremities $v,w \in B$ is at least $d_{G_B}(v,w)$, by definition of $R_B$. Note also
that the length of a subpath in $G_B'$ with an extremity in a model and the other $v \in B$ is at least $d_{G_B}(v,\mc{S})$, also by definition of $R_B$.
Therefore, the distance between any two models is indeed at least $\ell$.

It follows that $G'$ has a scattered packing of models satisfying $R_A$ of size $ \fbar(G,R_A) + \D\E(G_B,G_B')$, that is,
$G \eqs\E G'$ and $\D\E(G,G') =  \D\E(G_B,G_B')$. The case where $\fbar(G,R_A) = - \infty$ is easily handled as in Lemma \ref{lem: Pack DPfriend}.
%Assume now that $\fbar(G,R_A) = - \infty$. If $\fbar(G',R_A) \neq - \infty$, then applying argumentation above we have $\fbar(G,R_A) \neq - \infty$, a contradiction.
\end{proof}

\subsection{A linear kernel for \rFPack} \label{ssec: FPack kernel}

We are now ready to provide a linear kernel for \textsc{Connected-Planar-$\ell$-$\mc{F}$-Packing}.

\begin{theorem} \label{theo: Kernel FPack}
Let $\mc{F}$ be a finite family of connected graphs containing at least one planar graph on $r$ vertices, let $H$ be an $h$-vertex apex graph, and let \G be the class of $H$-minor-free graphs. Then $\textsc{cp}\ell\mc{F}\textsc{P}_\G$ admits a constructive linear kernel of size at most $f(r,h,\ell)\cdot k$, where $f$ is an explicit function depending only on $r$, $h$, and $\ell$, defined in Equation \eqref{eq: kernel rFPack}.
\end{theorem}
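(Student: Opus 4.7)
The plan is to mirror, almost verbatim, the structure of the proof of the analogous theorem for \textsc{Connected-Planar-$\mc{F}$-Packing} (Theorem in Section~\ref{sec: FPack}), plugging in the ingredients developed specifically for \rFPack in Sections~\ref{ssec: rFPack prot decompo} and~\ref{ssec: rFPack encod}. Concretely, given an instance $(G,k)$ of $\textsc{cp}\ell\mc{F}\textsc{P}_\G$, I would first invoke Lemma~\ref{lem: rFPack prot decompo}: either we already conclude that $(G,k) \notin \textsc{cp}\ell\mc{F}\textsc{P}_\G$ (in which case we output a trivial \textsc{No}-instance of constant size), or we obtain in polynomial time a linear $((\alpha_H \cdot t)\cdot k,\, 2t+h)$-protrusion decomposition of $G$, with $t = O((r+\ell)^{3/2} \cdot \tau_H^3 \cdot f_c(h)^3)$ coming from Corollary~\ref{coro: tw modul rFpack} and $\alpha_H = O(h^2 2^{O(h\log h)})$ from Theorem~\ref{theo:prot dec}.

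Next I would take the encoder $\E$ built in Subsection~\ref{ssec: rFPack encod}. By Lemma~\ref{lem: rPack DPfriend}, this $\E$ is a $g$-confined $\textsc{cp}\ell\mc{F}\textsc{P}_\G$-encoder with $g(t)=2t$, and \eqsG\E is DP-friendly for any class $\G$. Its size is bounded as in Equation~\eqref{eq:sizeEncoderrFPackSet}. With both ingredients in hand (a linear protrusion decomposition and an encoder satisfying the hypotheses of the framework), I would then apply Corollary~\ref{coro:main}, which produces an equivalent instance of size at most $(1+b(\E,g,t,\G)) \cdot \alpha \cdot k$, where $\alpha = \alpha_H \cdot t$ and $b(\E,g,t,\G)$ is the bound from Lemma~\ref{lem:progres size}. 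Substituting back the explicit values of $t$, $\alpha_H$, $s_{\E}(t)$, and unfolding $b(\E,g,t,\G) = 2^{r(\E,g,t,\G)+1}\cdot t$ with $r(\E,g,t,\G) = (g(t)+2)^{s_{\E}(t)} \cdot 2^t \cdot r_{\G,t}$ from Lemma~\ref{lem:nb class}, yields an explicit bound of the form
\begin{equation} \label{eq: kernel rFPack}
(\alpha_H \cdot t) \cdot (b(\E,g,t,\G)+1) \cdot k,
\end{equation}
which depends only on $r$, $h$, and $\ell$, as claimed.

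There is essentially no new obstacle compared with the proof of the earlier \FPack kernel: all the problem-specific work (the linear treewidth-modulator via Bidimensionality through the vertex-certifiable reformulation, and the verification that the scattered-packing encoder is $g$-confined and DP-friendly with $g(t)=2t$) has already been absorbed into Lemmas~\ref{lem: rFPack prot decompo} and~\ref{lem: rPack DPfriend}. The only subtlety worth double-checking when writing out~\eqref{eq: kernel rFPack} is that the extra factor $(\ell+2)^{t^2}$ appearing in $s_{\E}(t)$ propagates correctly through the bound $b(\E,g,t,\G)$; since $t$ itself depends polynomially on $r$ and $\ell$ (and on $h$ through $\tau_H$ and $f_c(h)$), the final expression remains an explicit function of $r$, $h$, $\ell$, independent of $k$, and the kernel size is linear in $k$.
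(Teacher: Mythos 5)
Your proof mirrors the paper's own proof of this theorem essentially line for line: invoke Lemma~\ref{lem: rFPack prot decompo} for the linear protrusion decomposition, Lemma~\ref{lem: rPack DPfriend} to certify that the encoder is a $2t$-confined, DP-friendly $\textsc{cp}\ell\mc{F}\textsc{P}$-encoder, and then apply Corollary~\ref{coro:main}. In fact you quietly repair two small slips in the paper's text --- the paper's proof says the failure branch of Lemma~\ref{lem: rFPack prot decompo} lets one report a \textsc{Yes}-instance, whereas here (unlike the Erd\H{o}s--P\'osa route for plain $\mc{F}$-packing) the inability of Corollary~\ref{coro: tw modul rFpack} to find a modulator certifies a \textsc{No}-instance as you correctly write, and the paper's kernel-size display carries a spurious $k'$ (a leftover from the Erd\H{o}s--P\'osa blowup in Section~\ref{sec: FPack}) where it should simply read $k$.
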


\begin{proof}
By Lemma \ref{lem: rFPack prot decompo}, given an instance $(G,k)$  we can either report that $(G,k)$ is a \YES-instance of $\textsc{cp}\ell\mc{F}\textsc{P}_\G$, or build in linear time an $((\alpha_{H} \cdot t )\cdot k, 2t + h)$-protrusion decomposition of $G$, where $\alpha_{H}$ and $t$ are defined in the proof of Lemma \ref{lem: rFPack prot decompo}.

We now consider the encoder \E defined in Subsection \ref{ssec: rFPack encod}. By Lemma \ref{lem: rPack DPfriend}, \E is a $g$-confined $\textsc{cp}\ell\mc{F}\textsc{P}_\G$-encoder and \eqsG\E is DP-friendly, where $g(t)=2t$ and \G is the class of $H$-minor-free graphs. An upper bound on $s_{\E}(t)$ is given in Equation \eqref{eq:sizeEncoderrFPackSet}. Therefore, we are in position to apply Corollary \ref{coro:main} and obtain a linear kernel for $\textsc{cp}\ell\mc{F}\textsc{P}_\G$ of size at most
\begin{equation}\label{eq: kernel rFPack}
(\alpha_{H} \cdot t) \cdot (b\left(\E, g,t ,\mathcal{G}\right)+1) \cdot k' \ , \text{ where}
\end{equation}
\begin{itemize}
\item[$\bullet$] $b\left(\E, g, t ,\mathcal{G}\right)$ is the function defined in Lemma \ref{lem:progres size};
\item[$\bullet$] $t$ is the bound on the treewidth provided by Corollary \ref{coro: tw modul rFpack}; and
%=O ( r\sqrt r \cdot \tau_H^3 \cdot f_m(h)^3 )$
\item[$\bullet$] $\alpha_H $ is  the constant provided by Theorem \ref{theo:prot dec}.
%= O( h^2 2 ^{ O(h \log h) })$
\end{itemize}\vspace{-.7cm}
\end{proof}

%\begin{proof}
%By \autoref{lem: FPack prot decompo}, given an instance $(G,k)$  we can claim that $(G,k)$ is a \YES-instance for $\textsc{cp}\mc{F}\textsc{P}_H$ or we can built in linear time a $(r^2 \cdot 2^{O(h^2)}\cdot k, r \cdot 2^{O(h^2)})$-protrusion decomposition of $G$.
%
%We now consider the encoder \Ebar (\autoref{ssec: FPack encod}). By \autoref{lem: Pack DPfriend}, \Ebar is a $g$-confined $\textsc{cp}\mc{F}\textsc{P}_H$-encoder and \eqG\Ebar is DP-friendly, where $g(t)=t$ and \G is the class of $H$-minor-free graphs. An upper bound on $s_{\E}(t)$ is given in Equation \eqref{eq:sizeEncoderFPackSet}. Therefore, we are in position to apply \autoref{coro: main} and obtain a linear kernel for $\textsc{cp}\mc{F}\textsc{P}_H$ of size at most
%\begin{equation}\label{eq: kernel FPack}
%r^2 \cdot 2^{O(h^2)} \cdot b\left(\E, g,r \cdot 2^{O(h^2)} ,\mathcal{G}\right) \cdot k \ ,
%\end{equation}
%where $b\left(\E, g, r \cdot 2^{O(h^2)} ,\mathcal{G}\right)$ is the function defined in \autoref{lem:  nb class}.
%\end{proof}

%By using the recent results of Chekuri and Chuzhoy \cite{ChCh13}, it can be shown that the factor $2^{15h+8h \log h}$ in \autoref{theo: Erdos Posa} can be replaced with $h^{O(1)}$. However, in this case this would not directly translate into an improvement of the size of the kernel given in Equation \eqref{eq: kernelFPacking}, as the term $h^{O(1)}$ would be dominated by the term $2^{O(h^2)}$.

% !TEX root = ../K-DP-pack-arXiv_NEW.tex
\newcommand{\FMemb}{\textsc{$\cal F$-Packing with $\ell$-Membership}\xspace}
%elements d'encodeur
\renewcommand{\Epi}[2]{\ensuremath{\mc{#1 E}_{\!\mc{F}\!\sc{P}\!\ell\!\sc{M}}^{#2}}\xspace}
\renewcommand{\Ebar}{{\Epi{\bar}{}}\xspace}
\renewcommand{\fbar}{\ensuremath{\bar f^{\Epi{}{}}_g}\xspace}

\section{Application to \FMemb} \label{sec: FMemb}

Now we consider a generalization of the \FPack problem that allows models to be {\sl close} to each other (conversely to \rFPack, which asks for scattered models). That is, we consider the version for minors of the \textsc{$\mc{F}$-Subgraph-Packing with $\ell$-Membership} defined in~\cite{FernauLR15}. Let $\mc{F}$ be a finite set of graphs. For every integer $\ell \geq 1$, we define the \FMemb  problem as follows.
% problem which have been introduced by Fernau \emph{et al.}  \cite{FernauLR15} \cite{RomeroL14-WALCOM}

\vspace{.4cm}
\begin{boxedminipage}{.9\textwidth}
\textsc{\FMemb}
\vspace{.1cm}

\begin{tabular}{ r l }
\textbf{Instance:}  & A graph $G$ and a non-negative integer $k$. \\
\textbf{Parameter:} & The integer $k$.\\
\textbf{Question:}  & Does $G$ have $k$ subgraphs $ G_1,\dots,G_k$ such that\\
                    & ~~~each subgraph contains some graph from $\cal F$ as a minor,\\
                    & ~~~and each vertex of $G$ belongs to at most $\ell$ subgraphs?\\
\end{tabular}
\end{boxedminipage}
\vspace{.4cm}

 We again consider the version of the problem where all the graphs in $\mc{F}$ are connected and at least one is planar, called \textsc{Connected-Planar-\FMemb} ($\textsc{cp}\mc{F}\textsc{P}\ell\textsc{M}$).

We obtain a linear kernel for \textsc{Connected-Planar-\FMemb} on the family of graphs excluding a fixed graph $H$ as a minor. We use again the notions of model, packing of models, and rooted packing.

\smallskip

Now, for an arbitrary graph, a certificate for \FMemb is a \emph{packing of models with $\ell$-membership}, defined as follows.

\begin{definition} \label{defi: packing models}
Given a set $\mc{F}$ of minors and a graph $G$, a \emph{packing of models with $\ell$-membership} $\mc{S}$ is a set of models such that each vertex of $G$ belongs to at most $\ell$ models, that is, to at most $\ell$ subgraphs $\Phi(F)$ for $\Phi \in \mc{S}, F \in  \mc{F}$.
\end{definition}

Note that the above definition is equivalent to saying that each vertex of $G$ belongs to at most $\ell$ vertex-models, since vertex-models of a model are vertex-disjoint.

\subsection{A protrusion decomposition for an instance of \FMemb}\label{ssec: FMemb prot decompo}

In order to find a linear protrusion decomposition, we use again the \emph{Erd\H{o}s-P\'{o}sa property}, as we did in Subsection \ref{ssec: FPack prot decompo}.
The construction of a linear protrusion decomposition becomes straightforward from the fact that a packing of models is in particular a packing of models with $\ell$-membership for every integer $\ell \geq 1$.

\begin{lemma} \label{lem: FMemb prot decompo}
Let $\mc{F}$ be a finite set of graphs containing at least one $r$-vertex planar graph $F$, let $H$ be an $h$-vertex graph, and let \G be the class of $H$-minor-free graphs.
Let $(G,k)$ be an instance of $\textsc{cp}\mc{F}\textsc{P}\ell\textsc{M}_\G$. If $(G,k)  \notin \textsc{cp}\mc{F}\textsc{P}\ell\textsc{M}_\G$, then we can construct in polynomial time a linear protrusion decomposition of $G$.
\end{lemma}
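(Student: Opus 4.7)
The plan is to reduce to the \FPack case treated in Lemma~\ref{lem: FPack prot decompo} by exploiting the fact that any vertex-disjoint packing of models is automatically a packing of models with $\ell$-membership for every $\ell \geq 1$. Thus, if $\mc{S}$ is a solution of size $k$ for \textsc{Connected-Planar-$\mc{F}$-Packing} on $G$, then the same family $\mc{S}$ certifies a \YES-answer for \FMemb with parameter $k$. Contrapositively, $(G,k) \notin \textsc{cp}\mc{F}\textsc{P}\ell\textsc{M}_\G$ implies $(G,k) \notin \textsc{cp}\mc{F}\textsc{P}_\G$.

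Given this implication, the argument mirrors the one in the proof of Lemma~\ref{lem: FPack prot decompo}. First I would invoke Theorem~\ref{theo: Erdos Posa} (Erd\H{o}s-P\'{o}sa) on the input $(G,k)$: since $(G,k) \notin \textsc{cp}\mc{F}\textsc{P}_\G$ we obtain that $(G, k') \in \textsc{cp}\mc{F}\textsc{D}_\G$ with $k' = c \cdot r \cdot 2^{15h + 8h\log h}\cdot k$. Then I would run the polynomial-time algorithm of Corollary~\ref{coro: tw modul FDel} with input $(G,k')$. Because $(G,k')$ is a \YES-instance of \textsc{Connected-Planar-$\mc{F}$-Deletion}, the algorithm succeeds and returns a set $X \subseteq V(G)$ with $|X| \leq k'$ and $\tw(G - X) = O(r\sqrt{r}\cdot \tau_H^3 \cdot f_m(h)^3)$, i.e., a treewidth-modulator of linear size in $k$.

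Feeding $X$ into Theorem~\ref{theo:prot dec}, we obtain in linear time an $((\alpha_H \cdot t)\cdot k', 2t+h)$-protrusion decomposition of $G$, where $t = O(r\sqrt{r}\cdot \tau_H^3 \cdot f_m(h)^3)$ and $\alpha_H = O(h^2 2^{O(h\log h)})$. Since $k' = O(r \cdot 2^{O(h\log h)}\cdot k)$, this is an $\left(O(h^2 2^{O(h\log h)} \cdot r^{5/2} \cdot \tau_H^3 \cdot f_m(h)^3)\cdot k,\, O(r\sqrt{r}\cdot \tau_H^3 \cdot f_m(h)^3)\right)$-protrusion decomposition, which is linear in $k$ as required. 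The only non-routine point is the initial implication between the two problems, and this is immediate from the definitions of packing and packing with $\ell$-membership; no property of $\ell$ is used beyond $\ell \geq 1$, which is exactly what allows the Erd\H{o}s-P\'{o}sa machinery for ordinary $\mc{F}$-packing to be reused verbatim.
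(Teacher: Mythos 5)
Your proof is correct and takes the same route as the paper: observe that a vertex-disjoint packing of models is in particular a packing of models with $\ell$-membership, so a \NO-instance of \textsc{Connected-Planar}-\FMemb is a \NO-instance of \textsc{Connected-Planar}-\textsc{$\mc{F}$-Packing}, and then invoke Lemma~\ref{lem: FPack prot decompo}. The only difference is stylistic: the paper just cites that lemma, while you unfold its steps (Erd\H{o}s--P\'{o}sa, Corollary~\ref{coro: tw modul FDel}, Theorem~\ref{theo:prot dec}) explicitly.
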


\begin{proof}
It suffices to note that if $\mc{S}$ is a packing of models of size $k$, then it is in particular a packing of models with $\ell$-membership for every integer $\ell \geq 1$. Hence, if $(G,k)  \notin \textsc{cp}\mc{F}\textsc{P}r\textsc{M}_\G$ then $(G,k)  \notin \textsc{cp}\mc{F}\textsc{P}_\G$ and we can apply Lemma \ref{lem: FPack prot decompo}.
\end{proof}

\subsection{An encoder for \FMemb} \label{ssec: FMemb encod}

Our encoder \E for \FMemb uses again the notion of rooted packing, but now we allow the rooted packings to intersect.

\smallskip
\noindent\textbf{The encodings generator \C.}
Let $G \in \B$ with boundary $\partial (G)$ labeled with $\Lambda(G)$. The function \C maps $\Lambda(G)$ to a set  $\C(\Lambda(G))$ of encodings.
Each $R\in \C(\Lambda(G))$ is a set of at most $\ell \cdot|\Lambda(G)|$ rooted packings$\{(\mc{A}_i, S_{F_i}^*,S_{F_i} ,\phi_i, \chi_i) \mid F_i \in\mc{F} \}$, where each such rooted packing encodes a potential model of a minor $F_i \in \mc{F}$ (multiple models of the same graph are allowed).

\smallskip
\noindent\textbf{The language \L.}
For a packing of models with $\ell$-membership $\mc{S}$, we say that $(G,\mc{S},R)$ belongs to the language $\L$ (or that $\mc{S}$ is a packing of models with $\ell$-membership \emph{satisfying} $R$) if there is a packing of potential models with $\ell$-membership matching with the rooted packings of $R$ in $G \setminus \{u : u \in \Phi_1(F_1), \dots, u \in \Phi_\ell(F_\ell); \Phi_i \in \mc{S}, F_i \in \mc{F}  \}$, that is, such that each vertex belongs to at most $\ell$ models or potential models.
%Note that we allow the entirely realized models of $\mc{S}$ to intersect $\partial (G) $ arbitrarily, but they must not intersect potential models imposed by $R$.

\smallskip
\noindent\textbf{The function \fbar.}
Similarly to \FPack, we need the relevant version of the function \fbar.
The function  \fbar is defined exactly as the one for \FPack in Section \ref{sec: FPack} (in particular, the encoding induced by a partial solution is also the set of rooted packings defined by the intersection of the partial solution and the separator).

\smallskip
\noindent\textbf{The size of \E.}
Note that the encoder contains at most $\ell t$ rooted packings  on a boundary of size $t$.
Hence, if we let $r := \max_{F \in \mc{F}} |V(F)|$, and $J$ be any set such that $\sum_{j\in J } j\leq \ell  t$ and $\forall j\in J, j \leq t$,  by definition of \E it holds that % of size at most $\ell' t$,
\begin{equation*}%\label{eq:sizeEncoderFSub}
s_{\E}(t) % \ \leq \ (\sum_{j\in J} 2^{j\log j} \cdot r^j \cdot 2^{r^2})
            \ \leq \  \ell t \cdot 2^{t\log t} \cdot r^t \cdot 2^{r^2}.
\end{equation*}
It just remains to prove that the relation \eqsG\E is DP-friendly. Note that in the encoder, the only difference with respect to \FPack is that rooted packings are now allowed to intersect. Namely, the constraint on the intersection is that each vertex belongs to at most $\ell$ models. This constraint can easily be verify locally, so no information has to be transmitted through the separator. Hence, the proof of the following lemma is exactly the same as the proof of Lemma \ref{lem: Pack DPfriend}, and we omit it.

\begin{lemma} \label{lem: Memb DPfriend}
The encoder \E is a $g$-confined $\textsc{c}\mc{F}\textsc{P}\ell\textsc{M}$-encoder for $g(t)=t$. Furthermore, if \G is an arbitrary class of graphs, then the equivalence relation \eqsG\E is DP-friendly.
\end{lemma}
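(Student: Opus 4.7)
The plan is to follow the proof of Lemma~\ref{lem: Pack DPfriend} almost verbatim, since the only structural change introduced by \FMemb is that rooted packings in an encoding may now overlap and models may share vertices with multiplicity up to $\ell$, rather than being vertex-disjoint. The crucial observation that legitimates reusing the template is that $\ell$-membership is a \emph{local} constraint, verifiable per vertex; hence no new information needs to be transmitted across a separator beyond the intersections already recorded by the rooted packings of an encoding.

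To show that $\E$ is a $\textsc{c}\mc{F}\textsc{P}\ell\textsc{M}$-encoder, I take a $0$-boundaried graph $G$; since $\C(\emptyset)=\{R_\emptyset\}$, triples $(G,\mc{S},R_\emptyset)\in\L$ are exactly packings of models with $\ell$-membership, so $\f(G,R_\emptyset)=f^\Pi(G)$. The first truncation case of $\fbar$ cannot fire (the maximum is taken over a singleton); the irrelevance case is ruled out by the same recursive contradiction as in Lemma~\ref{lem: Pack DPfriend}: if some induced $R_B$ at a separator $B$ had $\f(G_B,R_B)+t<\max_{R}\f(G_B,R)$, then replacing the $G_B$-side of an optimal certificate $\mc{S}$ by a witness $M_0$ of this maximum and keeping the $H$-interior part $M_H$ would produce a strictly larger packing in $G$. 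The resulting $M_0\cup M_H$ still respects $\ell$-membership because $M_0$ and $M_H$ occupy disjoint subsets of $V(G)$ and each respects $\ell$-membership internally. Confinement with $g(t)=t$ is then immediate from the definition of $\fbar$.

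For DP-friendliness, Fact~\ref{fait:equiv} lets me reduce to proving $G\eqs\E G'$. I decompose an optimal $\mc{S}$ satisfying $R_A$ as $M\cup M_B\cup M_H$ as in Lemma~\ref{lem: Pack DPfriend}, form the induced encoding $R_B$ at the separator $B$, pick a maximum packing $M'$ in $G_B'$ satisfying $R_B$ (so $|M'|=\fbar(G_B,R_B)+\D\E(G_B,G_B')$), and glue the potential models of $R_B$ with their $H$-extensions to form the collections $M_B'$ and $\mc{P}'$, exactly as in the proof of Lemma~\ref{lem: Pack DPfriend}; Fact~\ref{fait: rp} still applies unchanged. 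The cardinality calculation then yields $|\mc{S}'|=\fbar(G,R_A)+\D\E(G_B,G_B')$.

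The one place where the argument genuinely needs attention is checking that the glued family $\mc{S}'\cup\mc{P}' = M_H\cup M'\cup M_B'\cup\mc{P}'$ respects $\ell$-membership in $G'$. I verify this per vertex: memberships for $v\in V(H)\setminus B$ are inherited unchanged from $\mc{S}\cup\mc{P}$ in $G$; memberships for $v\in V(G_B')\setminus B$ come only from $M'$ and from the potential models matching $R_B$ in $G_B'$, and are therefore at most $\ell$ by hypothesis on $M'$ in $G_B'$. For $v\in B$, the rooted packings of $R_B$ containing $v$ are, by construction of the induced encoding, in bijection with the models of $M_B'$ containing $v$ together with the $G_B'$-sides of the potential models of $R_A$ containing $v$; hence the total membership count at $v$ in $G'$ equals the number of $M'$-models containing $v$ plus the number of $R_B$-potential-models containing $v$, which is bounded by $\ell$ because $M'$ satisfies $R_B$ in $G_B'$. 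The case $\fbar(G,R_A)=-\infty$ is handled symmetrically.
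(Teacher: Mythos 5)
Your proof correctly follows the route the paper itself takes: the paper simply remarks that the argument is ``exactly the same'' as for Lemma~\ref{lem: Pack DPfriend} because $\ell$-membership is a vertex-local constraint, and you fill in precisely the per-vertex verification of that constraint in the glued family $M_H\cup M'\cup M_B'\cup\mc{P}'$ that the paper omits; that verification is sound, and the cardinality bookkeeping for DP-friendliness carries over unchanged.

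There is, however, a quantitative gap in the $\Pi$-encoder check, and it is already latent in the statement $g(t)=t$. To rule out irrelevance of $R_\emptyset$ you argue, as in Lemma~\ref{lem: Pack DPfriend}, that replacing the $G_B$-side of an optimal $\mc{S}$ with a witness $M_0$ of $\max_R\f(G_B,R)$ while keeping $M_H$ yields a strictly larger packing, i.e.\ $|M_0|+|M_H|>|\mc{S}|=|M|+|M_B|+|M_H|$, equivalently $|M_0|>|M|+|M_B|$. The truncation hypothesis gives $|M_0|>|M|+t$, so this needs $|M_B|\leq t$. In the vertex-disjoint setting that holds because each vertex of $B$ lies in at most one model of $\mc{S}$; here each vertex of $B$ may lie in up to $\ell$ models, so the best available bound is $|M_B|\leq\ell t$. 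Consequently for $\ell\geq 2$ the optimality of $\mc{S}$ only gives $\max_R\f(G_B,R)\leq\f(G_B,R_B)+\ell t$, which does not contradict $\f(G_B,R_B)+t<\max_R\f(G_B,R)$, and the claimed ``strictly larger packing'' need not exist. Defining $\fbar$ with threshold $g(t)=\ell t$ repairs the argument and is the natural choice, consistent with an encoding for \FMemb carrying up to $\ell\cdot|\Lambda(G)|$ rooted packings rather than $|\Lambda(G)|$. The conclusion of Theorem~\ref{theo: Kernel FMemb} is unaffected, since its constant already depends on $\ell$, but with $g(t)=t$ as stated your plan, like the paper's remark, does not close.
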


\subsection{A linear kernel for \FMemb} \label{ssec: FPack kernel}

We are now ready to provide a linear kernel for \textsc{Connected-Planar} \FMemb.

\begin{theorem} \label{theo: Kernel FMemb}
Let $\mc{F}$ be a finite family of connected graphs containing at least one planar graph on $r$ vertices, let $H$ be an $h$-vertex graph, and let \G be the class of $H$-minor-free graphs. Then $\textsc{cp}\mc{F}\textsc{P}\ell\textsc{M}$ admits a constructive linear kernel of size at most $f(r,h,\ell)\cdot k$, where $f$ is an explicit function depending only on $r$, $h$, and $\ell$. %, defined in Equation \eqref{eq: kernel FPack}.
\end{theorem}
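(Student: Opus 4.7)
My plan is to mirror verbatim the structure of the proof of Theorem~\ref{theo: Kernel FPack} for the \FPack problem, since all the ingredients specific to \FMemb have already been assembled in this section. Given an instance $(G,k)$ of $\textsc{cp}\mc{F}\textsc{P}\ell\textsc{M}_\G$, the first step is to invoke Lemma~\ref{lem: FMemb prot decompo}: either we immediately conclude that $(G,k)$ is a \YES-instance, or we obtain in polynomial time a linear protrusion decomposition of $G$. Because the lemma reduces to Lemma~\ref{lem: FPack prot decompo} via the observation that any packing of models is, in particular, a packing of models with $\ell$-membership, the protrusion decomposition inherits exactly the parameters of the \FPack case, namely an $\left(O(h^2 2^{O(h\log h)}\cdot r^{5/2}\cdot \tau_H^3\cdot f_m(h)^3)\cdot k,\ O(r\sqrt{r}\cdot \tau_H^3\cdot f_m(h)^3)\right)$-protrusion decomposition; in particular, this step does not introduce any dependence on $\ell$.

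The second step is to instantiate the encoder \E of Subsection~\ref{ssec: FMemb encod}. By Lemma~\ref{lem: Memb DPfriend}, this \E is a $g$-confined $\textsc{cp}\mc{F}\textsc{P}\ell\textsc{M}_\G$-encoder with $g(t)=t$, and the equivalence relation \eqsG\E is DP-friendly. The only change with respect to the \FPack encoder is the size bound, which, as established in Subsection~\ref{ssec: FMemb encod}, now reads
\[
s_{\E}(t) \ \leq \ \ell t \cdot 2^{t\log t}\cdot r^t\cdot 2^{r^2},
\]
the extra factor $\ell$ accounting for the fact that each boundary vertex may be shared by up to $\ell$ rooted packings simultaneously. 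Everything else, including the definition of the relevant function $\fbar$ and the notion of induced encoding on a separator, is identical to the \FPack encoder.

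The third and final step is to plug the protrusion decomposition and the encoder \E into Corollary~\ref{coro:main}, which yields a linear kernel of size at most
\[
(\alpha_H\cdot t)\cdot (b(\E,g,t,\G)+1)\cdot k',
\]
where $\alpha_H$, $t$ and $k'$ are as in the proof of Theorem~\ref{theo: Kernel FPack} and $b(\E,g,t,\G)$ is the bound on the size of a smallest progressive representative given by Lemma~\ref{lem:progres size}. Composing the explicit bounds for $s_{\E}(t)$, $r_{\G,t}$, $\alpha_H$, $t$ and $k'$ then produces the desired explicit function $f(r,h,\ell)$.

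I do not anticipate any genuine obstacle: each of the three ingredients (protrusion decomposition, DP-friendly $g$-confined $\Pi$-encoder, explicit protrusion replacer) has already been proved for \FMemb or imported unchanged from the \FPack section. The only mildly delicate point is bookkeeping of constants, in particular verifying that the factor $\ell$ entering $s_{\E}(t)$ propagates through $b(\E,g,t,\G)$ without affecting DP-friendliness; this is precisely what Lemma~\ref{lem: Memb DPfriend} guarantees, since the ``at most $\ell$ models per vertex'' constraint is local and therefore does not need to be transmitted across a separator.
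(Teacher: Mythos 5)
Your proposal is correct and matches the paper's approach exactly: the paper itself simply states that the proof is identical to that of Theorem~\ref{theo: Kernel FPack}, with the only difference being the size $s_{\E}(t)$ of the encoder (and hence the value of $b(\E,g,t,\G)$). You have merely spelled out the same three steps (Lemma~\ref{lem: FMemb prot decompo} for the protrusion decomposition, Lemma~\ref{lem: Memb DPfriend} for the encoder's confinement and DP-friendliness, Corollary~\ref{coro:main} for the replacement) more explicitly than the paper does.
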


The proof of the above theorem is exactly the same as the one of Theorem \ref{theo: Kernel FPack}, the only difference being in the size $s_{\E}(t)$ of the encoder, and hence in the value of $b\left(\E, g,t ,\mathcal{G}\right)$.

\newcommand{\FSub}{\ensuremath{\textsc{$\mc{F}$-Subgraph-Packing}}\xspace}
\newcommand{\lFSub}{\ensuremath{\textsc{$\ell$-$\mc{F}$-Subgraph-Packing}}\xspace}
\newcommand{\FSubM}{\ensuremath{\textsc{$\mc{F}$-Subgraph-Packing with $\ell$-Membership}}\xspace}
\renewcommand{\Epi}[2]{\ensuremath{\mc{#1 E}_{\!\mc{F}\!\sc{S\!P}}^{#2}}\xspace}
\newcommand{\El}      {\ensuremath{\mc{E}_{\!\ell\mc{F}\!\sc{S\!P}}}\xspace}
\newcommand{\Em}      {\ensuremath{\mc{E}_{\!\mc{F}\!\sc{S\!P}\ell\sc{M}}}\xspace}
\renewcommand{\Ebar}{{\Epi{\bar}{}}\xspace}
\renewcommand{\fbar}{\ensuremath{\bar f^{\Epi{}{}}_g}\xspace}

\newcommand{\A}{\ensuremath{A}\xspace}

\section{Application to \FSub} \label{sec: FSub}

In this section we apply our framework to problems where to objective is to pack subgraphs. The \FSub problem consists in finding vertex-disjoint subgraphs (instead of minors) isomorphic to graphs in a given finite family $\cal F$. Similarly to \textsc{$\mc{F}$-(Minor)-Packing}, we study two more generalizations of the problem, namely  the \lFSub, asking for subgraphs at distance $\ell$ from each other, and the \FSubM problem \cite{FernauLR15} that allows vertices to belong to at most $\ell$ subgraphs. %The last problem have been introduced by Fernau \emph{et al.} \cite{FLR14}.
Let $\mc{F}$ be a finite set of graphs and let $\ell \geq 1$ be an integer. The \FSub, the \lFSub, and the \FSubM problems are defined as follows.

\vspace{.4cm}
\begin{boxedminipage}{.93\textwidth}
\textsc{\FSub}
\vspace{.1cm}

\begin{tabular}{ r l }
\textbf{Instance:}  & A graph $G$ and a non-negative integer $k$. \\
\textbf{Parameter:} & The integer $k$. \\
\textbf{Question:}  & Does $G$ have $k$ vertex-disjoint subgraphs \\ & ~~~$G_1,\ldots,G_k$, each isomorphic to a graph in $\mc{F}$? \\
\end{tabular}
\end{boxedminipage}
\vspace{.1cm}

\vspace{.4cm}
\begin{boxedminipage}{.93\textwidth}
\textsc{\lFSub}
\vspace{.1cm}

\begin{tabular}{ r l }
\textbf{Instance:}  & A graph $G$ and two non-negative integers $k$ and $\ell$. \\
\textbf{Parameter:} & The integer $k$. \\
\textbf{Question:}  & Does $G$ have $k$ subgraphs $G_1,\ldots,G_k$ pairwise at distance\\
&  ~~~at least $\ell$ and  each isomorphic to a graph in $\mc{F}$? \\
\end{tabular}
\end{boxedminipage}
\vspace{.1cm}

\vspace{.4cm}
\begin{boxedminipage}{.93\textwidth}
\textsc{\FSubM}
\vspace{.1cm}

\begin{tabular}{ r l }
\textbf{Instance:}  & A graph $G$ and two non-negative integers $k$ and $\ell$. \\
\textbf{Parameter:} & The integer $k$. \\
\textbf{Question:}  & Does $G$ have $k$ subgraphs $G_1,\ldots,G_k$,  each isomorphic to \\ 
& ~~~in $\mc{F}$, and a graph such that each vertex of $G$ belongs  \\
& ~~~to at most $\ell$ subgraphs? \\
\end{tabular}
\end{boxedminipage}
\vspace{.4cm}

Again, for technical reasons, we consider the versions of the above problems where all the graphs in $\mc{F}$ are connected, called \textsc{Connected} \FSub ($\textsc{c}\mc{F}\textsc{SP}$), \textsc{Connected}\! \lFSub ($\textsc{c}\ell\mc{F}\textsc{SP}$),\! and \textsc{Connected}
\textsc{$\mc{F}$-Subgraph-Pack\-ing} \textsc{with $\ell$-Membership}
($\textsc{c}\mc{F}\textsc{SP$\ell$M}$), respectively. As in Section  \ref{sec: FPack}, connectivity is necessary to use the equivalent notion of rooted packings. Furthermore, in this section we also need connectivity to build the protrusion decomposition, whereas the presence of a planar graph in $\mc{F}$ is not mandatory anymore.

Similarly to \FPack, we establish a relation between instances of ${\cal F}\mbox{\sc -Subgraph}$
$\mbox{\sc-Packing}$
% \FSub 
(and its variants) and instances of \textsc{$d$-Dominating Set} for an appropriate value of $d$. Therefore we also define this problem. Note that here we do not use any Erd\H{o}s-P\'osa property to establish this relation.

\vspace{.4cm}
\begin{boxedminipage}{.93\textwidth}
\textsc{\textsc{$d$-Dominating Set}}
\vspace{.1cm}

\begin{tabular}{ r l }
\textbf{Instance:}  & A graph $G$ and two non-negative integers $k$ and $d$. \\
\textbf{Parameter:} & The integer $k$. \\
\textbf{Question:}  & Is there a set $D$ of vertices in $G$ with size at most $k$,\\
                    & ~~~such that for every vertex $v \in V(G)$, $N_d[v] \cap D \neq \emptyset$?
\end{tabular}
\end{boxedminipage}
\vspace{.4cm}

\noindent In this section we obtain a linear kernel for  \textsc{Connected} \FSub,  \textsc{Connected} \lFSub, and \FSubM on the families of graphs excluding respectively a fixed graph, a fixed apex graph, and a fixed graph, as a minor.
\newline

For these three problems, the structure of a solution will be respectively a \emph{packing of subgraph models}, a \emph{packing of subgraph models}, and a \emph{packing of subgraph models with $\ell$-membership}. In order to define a packing of subgraph models, we need the definition of a \emph{subgraph model} of $F$ in $G$, which is basically an isomorphism from a graph $F$ to a subgraph of $G$.

\begin{definition} \label{defi: subG model}
A \emph{subgraph model} of a graph $F$ in a graph $G$ is a mapping $\Phi$, that assigns
to every vertex $v \in V(F)$ a vertex $\Phi(v) \in v(G)$, such that
\begin{itemize}
       \item[$\bullet$] the vertices $\Phi(v)$ for $v \in V(F)$ are distinct; and
       \item[$\bullet$] if $\{u,v\} \in E(F)$, then $\{\Phi(u),\Phi(v)\} \in E(G)$.
\end{itemize}

We denote by $\Phi(F)$ the subgraph of $G$ with vertex set $\{ \Phi(v): v \in V(F) \}$ and edge set $\{ \{\Phi(u),\Phi(v)\}: \{u,v \} \in E(F) \}$, which is obviously isomorphic to $F$.
\end{definition}

\begin{definition} \label{defi: packing models}
Let $\mc{F}$ be a set  of subgraphs and let $G$ be a graph.
A \emph{packing of subgraph models} $\mc{S}$ is a set of vertex-disjoint subgraph models, that is, the graphs $\Phi(F)$ for $\Phi \in \mc{S}, F \in  \mc{F}$ are vertex-disjoint.
A \emph{packing of subgraph models with $\ell$-membership} $\mc{S}$ is a set of subgraph models such that every vertex $v \in V(G)$ is the image of at most $\ell$ mappings $\Phi \in \mc{S}$.
\end{definition}

\subsection{A protrusion decomposition for an instance of \FSub}\label{ssec: Fsub prot decompo}

In order to find a linear protrusion decomposition, we first need a preprocessing reduction rule. This rule, which has also been used in previous work~\cite{FLST10,BFL+09}, enables us to establish a relation between instances of \FSub (and its variants) and \textsc{$d$-Dominating Set}. Then we will be able to apply Theorem \ref{theo: tw mod} on \textsc{$d$-Dominating Set} to find a linear treewidth-modulator that allows to construct the decomposition.

%The additional rule simply consist in removing all vertices that are not part of a subgraph model. Such a rule can be applied in time $O(n^r)$ where $n$ is the size of the graph and $r$ is the maximal size of a researched subgraph (in $\cal F$).

\begin{rgl} \label{rgl: prelim}
Let $v$ be a vertex of $G$ that does not belong to any subgraph of $G$ isomorphic to a graph in $\cal F$. Then remove $v$ from $G$.
\end{rgl}

Note that Rule~\ref{rgl: prelim}  can be applied in time $O(n^r)$, where $n$ is the size of $G$ and $r$ is the maximum size of a graph in $\cal F$. We call a graph\emph{ reduced} under Rule~\ref{rgl: prelim} if the rule cannot be applied anymore on $G$.

The next proposition states a relation between an instance of \lFSub and \textsc{$d$-Dominating Set}. The relation with the two other problems are straightforward, as explained below.
\begin{proposition} \label{prop: }
Let $G$ be a graph  reduced under Rule \ref{rgl: prelim}. If $(G,k)$ is a \NO-instance of \textsc{Connected} \lFSub, then $(G,k)$ is a \YES-instance of \textsc{$(2d+\ell)$-Dominating Set}, where $d$ is the largest diameter of the graphs in $\cal F$.
\end{proposition}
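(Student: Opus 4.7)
The plan is to build the dominating set directly from a maximum $\ell$-scattered packing of $\mathcal{F}$-subgraphs. Let $\mc{S} = \{G_1,\ldots,G_s\}$ be a maximum collection of subgraphs of $G$, each isomorphic to some member of $\mc{F}$, pairwise at distance at least $\ell$ in $G$. Since $(G,k)$ is a \NO-instance of \textsc{Connected}-\lFSub, we have $s \leq k-1$. Pick an arbitrary vertex $u_i$ in each $G_i$ and set $D = \{u_1,\ldots,u_s\}$, so that $|D| \leq k-1 \leq k$, which gives the size requirement for \textsc{$(2d+\ell)$-Dominating Set}.

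The core step is to show that every $v \in V(G)$ lies at distance at most $2d+\ell$ from $D$. Since $G$ is reduced under Rule \ref{rgl: prelim}, there exists a subgraph $H_v$ of $G$ containing $v$ and isomorphic to some $F \in \mc{F}$. Here the connectedness of the graphs in $\mc{F}$ is used, so that distances inside $H_v$ (and hence in $G$) between any two of its vertices are at most the diameter of $F$, which is at most $d$. By the maximality of $\mc{S}$, the family $\mc{S}\cup\{H_v\}$ cannot be an $\ell$-scattered packing; therefore there is some $G_i \in \mc{S}$ with $d_G(V(H_v),V(G_i)) \leq \ell-1$ (the case where $H_v$ itself is $G_i$, or shares a vertex with $G_i$, is covered by this by taking distance $0$).

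Pick a vertex $w \in V(H_v)$ and a vertex $w' \in V(G_i)$ realising $d_G(w,w') \leq \ell - 1$. Using that $H_v$ and $G_i$ are both connected of diameter at most $d$, the triangle inequality gives
\[
d_G(v,u_i) \;\leq\; d_G(v,w) + d_G(w,w') + d_G(w',u_i) \;\leq\; d + (\ell - 1) + d \;=\; 2d + \ell - 1,
\]
so $v \in N_{2d+\ell}[u_i]$. This holds for every $v \in V(G)$, showing that $D$ is a $(2d+\ell)$-dominating set of size at most $k$, as required. The only mild subtlety is handling the degenerate cases (empty $G$ or $s=0$), but the reduction rule ensures that a non-empty $G$ contains at least one $\mc{F}$-subgraph and hence $\mc{S}$ is non-empty whenever $V(G)\neq\emptyset$, so the argument is uniform.
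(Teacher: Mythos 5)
Your proof is correct, and it is essentially the same idea as the paper's: build a small $\ell$-scattered family of $\mathcal{F}$-subgraphs and observe that, by the reduction rule, every vertex must lie within distance $d + (\ell-1) + d$ of some member of it. The only difference is packaging. The paper produces its scattered family via an explicit greedy deletion loop (repeatedly pick a vertex, remove its $(d+\ell)$-ball, stop when no $\mathcal{F}$-subgraph remains), whereas you start from a maximum $\ell$-scattered packing and invoke maximality in one stroke. Your version is slightly cleaner, sidesteps some loose wording in the paper (where ``the set of removed vertices'' should read ``the set of chosen centers'', and a \textsc{$\mathcal{F}$-Subgraph-Packing} should read \textsc{$\ell$-$\mathcal{F}$-Subgraph-Packing}), and in fact yields the marginally sharper radius $2d+\ell-1$, though the stated bound $2d+\ell$ is all that is used downstream. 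The one subtlety you correctly flag — that $H_v$ might coincide with, or share a vertex with, some $G_i$, so that adding $H_v$ does not strictly enlarge the set — is harmless precisely because $\ell\geq 1$ makes distance $0$ a valid witness. Both arguments are purely existential, which is all that is needed, since the algorithmic content of the application comes from the dominating-set side.
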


\begin{proof}
Let $(G,k)$ be a \NO-instance of \lFSub and let $d$ be the largest diameter of a graph in $\cal F$. Let us choose any vertex $v \in V(G)$ and remove $N_{d+\ell}(v)$ from $G$. We repeat this operation until there is no subgraph model of  $\cal F$ in $G$. We call $D$ the set of removed vertices. As $(G,k)$ is a \NO-instance of \FSub, $|D| \leq k$ and as  $G$ is reduced under Rule \ref{rgl: prelim} all vertices in $V(G) \setminus N_{d+\ell}(D)$ belong to a (connected) subgraph model (which intersects $N_{d+\ell}(D)$), hence all vertices in $V(G) \setminus N_{d+\ell}(D)$ are at distance at most $2d+\ell$ from $D$. Therefore $(G,k)$ is a \YES-instance of \textsc{$(2d+\ell)$-Dominating Set}.
\end{proof}

Note that if $(G,k)$ is a \NO-instance of \FSubM then it is a \NO-instance of \FSub (that is, of \textsc{$1$-$\mc{F}$-Subgraph-Packing}) and then it is a \NO-instance of \lFSub for every integer $\ell \geq 1$. According to Proposition \ref{prop: }, it follows that $(G,k)$ is a \YES-instance of \textsc{$(2d+1)$-Dominating Set}.
\newline
We now apply Theorem \ref{theo: tw mod} in order to find a treewidth-modulator for a \YES-instance of \textsc{$(2d+1)$-Dominating Set}. We now use the following corollary of Theorem~\ref{theo: tw mod}.

\begin{corollary} \label{coro: tw modul Dom}%[Fomin, Lokshtanov, Saurabh, Thilikos \cite{FLST10}]
Let $\mc{F}$ be a finite set of connected graphs, let $H$ be an $h$-vertex apex graph, and let \G be the class of $H$-minor-free graphs.
If $(G,k) \in \textsc{$d$-DS}_\G$, then there exists a set $X\subseteq V(G)$
  such that $|X|= k$                                          % $|X|=r \cdot f_m(h) \cdot k$
  and       $\tw(G-X)= O (d \sqrt{d} \cdot \tau_H^3 \cdot f_c(h)^3 )$.           % $tw(G-X)=r \cdot f_m(h)$.
Moreover, given an instance  $(G,k)$ with $|V(G)|=n$, there is an algorithm running in time $O(n^{3})$ that either finds such a set $X$ or correctly reports that $(G,k) \notin \textsc{$d$-DS}_\G$.
\end{corollary}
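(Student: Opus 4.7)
The plan is to apply Theorem~\ref{theo: tw mod} to the \textsc{$d$-Dominating Set} problem with $\epsilon = 1$, in complete analogy with Corollary~\ref{coro: tw modul rFpack}. The first step is to verify the two properties required to invoke the theorem. For \emph{contraction-bidimensionality}, I would observe that for the graph $\Gamma_r$ the ball $N_d[v]$ around any vertex $v$ has size $O(d^2)$, so every $d$-dominating set of $\Gamma_r$ has size at least $\Omega(r^2/d^2)$; moreover, contracting an edge of any graph $G$ can only decrease the $d$-domination number (a $d$-dominating set of $G$ projects to a $d$-dominating set of the contracted graph of no larger size). For \emph{linear-separability}, given a separator $S$ of $G$ into parts $A \cup S$ and $B \cup S$, a $d$-dominating set of $G$ can be obtained from $d$-dominating sets of $G[A\cup N_d[S]]$ and $G[B\cup N_d[S]]$ with an additive error of $O(|S|)$, since at most $|S|$ extra vertices are needed to absorb boundary mismatches.

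The second step is to instantiate Theorem~\ref{theo: tw mod}. With $\epsilon = 1$, it produces a treewidth-$t$-modulator $X$ of size at most $f^\Pi(G) = k$, where $k$ is the size of an optimal $d$-dominating set of $G$. To derive the explicit value of $t$ claimed in the statement, I would trace through the proof of Theorem~\ref{theo: tw mod} given in~\cite{FLST10}: the construction builds a tree decomposition of $G-X$ using a $\tau_H$-approximation of treewidth, repeatedly extracts balanced separators via the grid-like contractions guaranteed by Proposition~\ref{prop:tw-contraction} (contributing $f_c(h)$ at each step), and bootstraps from a $2$-approximate initial solution produced by the EPTAS on minor-free graphs~\cite{FLRS10}. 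The cubic powers $\tau_H^3$ and $f_c(h)^3$ arise from the three levels at which these approximations are composed in the separator-finding recursion, exactly as in Corollary~\ref{coro: tw modul rFpack}. The factor $d\sqrt{d}$ is the reciprocal of the bidimensionality constant of $d$-\textsc{DS} raised to the power $3/2$: since this constant is of order $1/d^2$ (from the lower bound $\Omega(r^2/d^2)$ on $\Gamma_r$ computed above), its $3/2$-power contribution to the treewidth bound is precisely $d\sqrt{d}$.

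For the algorithmic part, the construction in the proof of Theorem~\ref{theo: tw mod} is effective and runs in time polynomial in $n$; in our setting, the dominant cost is the $\tau_H$-approximation of treewidth and the iterative separator extraction, which can be implemented in time $O(n^3)$. Given an input $(G,k)$, we run this algorithm: either it returns $X \subseteq V(G)$ with $|X| \leq k$ and $\tw(G-X)$ within the claimed bound, or it detects during the initial EPTAS call that the minimum $d$-dominating set of $G$ has size strictly larger than $k$ (using the factor-$2$ approximation of the solution and the contraction-bidimensionality bound), in which case we correctly report $(G,k) \notin \textsc{$d$-DS}_\G$.

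The main obstacle will be pinning down the exact bidimensionality constant of $d$-\textsc{Dominating Set} and verifying that the exponents in Theorem~\ref{theo: tw mod} combine to give precisely $d\sqrt{d} \cdot \tau_H^3 \cdot f_c(h)^3$; this amounts to carefully redoing the bookkeeping of~\cite{FLST10} with the bidimensionality constant $\delta = \Theta(1/d^2)$ substituted in. All other steps are direct adaptations of the arguments used for Corollary~\ref{coro: tw modul rFpack}.
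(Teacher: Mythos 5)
Your approach is the same one the paper takes: apply Theorem~\ref{theo: tw mod} to $d$-\textsc{Dominating Set} with $\epsilon=1$ and read off the explicit treewidth bound from the bookkeeping in the proof of that theorem in~\cite{FLST10}. The paper states this corollary essentially without proof beyond that citation, so your verification that $d$-\textsc{Dominating Set} is contraction-bidimensional (every $d$-dominating set of $\Gamma_r$ has size $\Omega(r^2/d^2)$, and contractions only shorten distances so the $d$-domination number is contraction-closed) and separable is a reasonable elaboration of what the paper leaves implicit.

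However, your explanation of where the factor $d\sqrt{d}$ comes from is arithmetically inconsistent. You state that the bidimensionality constant $\delta$ is of order $1/d^2$ and that the treewidth bound carries the reciprocal of $\delta$ raised to the power $3/2$, concluding this yields $d\sqrt{d}$. But $(1/\delta)^{3/2} = (d^2)^{3/2} = d^3$, not $d^{3/2}$. Comparing against the paper's parallel corollaries makes the correct dependence clear: for \textsc{Planar-$\mc{F}$-Deletion} (Corollary~\ref{coro: tw modul FDel}) the bidimensionality constant is $\Theta(1/r^2)$ and the bound carries $r\sqrt{r}=r^{3/2}$; for \rFPack (Corollary~\ref{coro: tw modul rFpack}) it is $\Theta(1/(r+\ell)^2)$ and the bound carries $(2r+\ell)^{3/2}$. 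In all three cases the exponent on $1/\delta$ is $3/4$, not $3/2$ (equivalently, the bound carries $\bigl(\sqrt{1/\delta}\,\bigr)^{3/2}$). This matters because your stated plan for the ``main obstacle'' is to redo the bookkeeping of~\cite{FLST10} with $\delta=\Theta(1/d^2)$ and exponent $3/2$; carried out as written, that computation would produce $d^3\cdot\tau_H^3\cdot f_c(h)^3$ rather than the claimed $d\sqrt{d}\cdot\tau_H^3\cdot f_c(h)^3$, so the exponent in your proposal needs to be corrected to $3/4$ before the derivation goes through.
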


We are now able to construct a linear protrusion decomposition.

\begin{lemma} \label{lem: FSub prot decompo}
Let $\mc{F}$ be a finite set of connected graphs, let $H$ be an $h$-vertex apex graph, and let \G be the class of $H$-minor-free graphs.
%Let $(G,k)$ be an instance of  $\textsc{cp}\mc{F}\textsc{P}_\G$  \todo[color=green!40]{???} . If $(G,k)$ is not a \YES-instance, then we can construct a linear protrusion decomposition.
Let $(G,k)$ be an instance of \textsc{Connected}-\lFSub (or of \textsc{Connected}-\FSub, or of \textsc{Connected}-\FSubM). If $(G,k)  \notin \textsc{c}\mc{F}\textsc{SP}_\G$, then we can construct in polynomial time a linear protrusion decomposition of $G$.
\end{lemma}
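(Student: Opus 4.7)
The plan is to follow the template established in the proof of Lemma~\ref{lem: FPack prot decompo}, replacing the Erd\H{o}s--P\'osa route with the \textsc{Dominating Set} route developed earlier in Subsection~\ref{ssec: Fsub prot decompo}. Concretely, I would first apply Rule~\ref{rgl: prelim} exhaustively to $G$ in time $O(n^{r+1})$, where $r = \max_{F\in\mc{F}}|V(F)|$. This preserves the \YES/\NO status of $(G,k)$ for all three packing problems (a vertex deleted by Rule~\ref{rgl: prelim} cannot belong to any solution), leaves the graph $H$-minor-free, and makes it eligible for Proposition~\ref{prop: }.

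Next, I would feed $(G,k)$ to the algorithm of Corollary~\ref{coro: tw modul Dom} on \textsc{$d'$-Dominating Set}, choosing $d'=2d+\ell$ when the input is an instance of \textsc{Connected}-\lFSub and $d'=2d+1$ when the input is an instance of \textsc{Connected}-\FSub or \textsc{Connected}-\FSubM, where $d$ is the largest diameter of a graph in $\mc{F}$. By Proposition~\ref{prop: }, together with the observation recorded before Corollary~\ref{coro: tw modul Dom} that a \NO-instance of \textsc{Connected}-\FSubM is also a \NO-instance of \textsc{Connected}-\FSub, the reduced graph is a \YES-instance of $d'$-\textsc{Dominating Set}. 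Hence Corollary~\ref{coro: tw modul Dom} is guaranteed to return in time $O(n^3)$ a set $X \subseteq V(G)$ with $|X|\leq k$ and $\tw(G-X) \leq t := O(d'\sqrt{d'}\cdot \tau_H^3\cdot f_c(h)^3)$. Plugging $X$ into Theorem~\ref{theo:prot dec} with $c=1$ then produces, in linear time, an $\big((\alpha_H \cdot t)\cdot k,\, 2t+h\big)$-protrusion decomposition of $G$, where $\alpha_H = O(h^2\, 2^{O(h\log h)})$. This is the desired linear protrusion decomposition.

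The bulk of the work is routine bookkeeping: checking that Rule~\ref{rgl: prelim} preserves $H$-minor-freeness and \YES/\NO status for each of the three problems (both immediate), and matching the right $d'$ to each variant. The only mildly delicate point is the \FSubM case, which must be routed through the chain \textsc{Connected}-\FSub $\subseteq$ \textsc{Connected}-\FSubM before Proposition~\ref{prop: } can be applied; this forces $d'=2d+1$ rather than any $\ell$-dependent value, which is actually favorable for the hidden constants in the eventual kernel bound. No new technical ingredient is required beyond what is already assembled in the preceding subsection.
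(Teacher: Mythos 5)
Your proposal follows the same skeleton as the paper's proof of Lemma~\ref{lem: FSub prot decompo}: reduce to a \textsc{$d'$-Dominating Set} instance via Proposition~\ref{prop: }, extract a treewidth-modulator with Corollary~\ref{coro: tw modul Dom}, and invoke Theorem~\ref{theo:prot dec}. Your refinement of choosing $d'=2d+1$ for the \FSub and \FSubM cases (rather than the paper's uniform $2d+\ell$) matches the remark the paper makes just before Corollary~\ref{coro: tw modul Dom} and gives a mildly tighter $t$, which is a legitimate small improvement.

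However, there is a genuine gap in the justification you give for the preprocessing step. You assert that applying Rule~\ref{rgl: prelim} ``preserves the \YES/\NO status of $(G,k)$ for all three packing problems,'' with the reason that a deleted vertex cannot belong to any solution. That reasoning is sound for \FSub and \FSubM (where the only constraint on the packing is disjointness or bounded membership, both of which are intrinsic to the models themselves), but it is \emph{not} sound for \lFSub when $\ell\geq 2$. Rule~\ref{rgl: prelim} can delete a vertex that, while not lying in any subgraph model, lies on a short path \emph{between} two models; deleting it only increases distances, and so may promote a \NO-instance to a \YES-instance. Concretely, take $\mc{F}=\{K_3\}$, $\ell=3$, $k=2$, and let $G$ consist of two disjoint triangles $T_1,T_2$ joined by a path $u_1{-}v{-}u_2$ with $u_1\in T_1$, $u_2\in T_2$: then $d_G(T_1,T_2)=2<\ell$ so $(G,2)$ is a \NO-instance, yet $v$ lies in no triangle, Rule~\ref{rgl: prelim} deletes it, and in $G-v$ the two triangles are infinitely far apart, making $(G-v,2)$ a \YES-instance. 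So the claim as stated is false; Rule~\ref{rgl: prelim} is only a safe kernel reduction for \FSub and \FSubM. (Relatedly, after applying Rule~\ref{rgl: prelim} the decomposition you produce is of $G-D$, not of $G$, so one must either argue equivalence of the instances or amend the lemma's conclusion.) The paper's own proof is terse on this point — it invokes Proposition~\ref{prop: }, which presupposes a reduced graph, without explicitly addressing how the reduction interacts with the \lFSub variant — so the issue is arguably inherited from the source, but your proposal makes the false safety claim explicit, and that is the step that would need to be repaired (e.g., by restricting the safety claim to the two variants where it holds, and treating the reduced graph as the instance only for those, or by handling \lFSub without deleting vertices).
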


\begin{proof}
Given an instance $(G,k)$ of $\textsc{c}\mc{F}\textsc{SP}_\G$, we run the algorithm given by Corollary \ref{coro: tw modul Dom} for the \textsc{Connected $(2d+\ell)$-Dominating Set} problem, where $d$ is the largest diameter of the graphs in $\cal F$.
If the algorithm is not able to find a treewidth-modulator $X$ of size $|X|= k$, then by Proposition \ref{prop: } we can conclude that $(G,k) \in \textsc{c}\ell\mc{F}\textsc{SP}_\G$ (resp. $(G,k) \in \textsc{c}\mc{F}\textsc{SP}_\G$ and $(G,k) \in \textsc{c}\mc{F}\textsc{SP$\ell$M}_\G$).
Otherwise, we use the set $X$ as input to the algorithm given by Theorem \ref{theo:prot dec}, which outputs in linear time an
$((\alpha_{H} \cdot t)\cdot k, 2t + h)$-protrusion decomposition of $G$, where
\begin{itemize}
\item[$\bullet$] $t =O ( (2d+\ell)^{3/2} \cdot \tau_H^3 \cdot f_c(h)^3 )$ is provided by Corollary \ref{coro: tw modul Dom}; and
\item[$\bullet$] $\alpha_H = O( h^2 2 ^{ O(h \log h) })$ is  the constant provided by Theorem \ref{theo:prot dec}.
\end{itemize}

This is an $\left(h^2 2 ^{ O(h \log h) } \cdot  (2d+\ell)^{3/2} \cdot \tau_H^3 \cdot f_c(h)^3 \cdot k~,~ O ( (2d+\ell)^{3/2} \cdot \tau_H^3 \cdot f_m(h)^3 )\right)$-pro\-trusion decomposition of $G$.
\end{proof}

\subsection{An encoder for \FSub} \label{ssec: FSub encod}

Our encoder \E for \FSub uses a simplified version of rooted packings.

\begin{definition}
Let $F$ be a connected graph and let $G$ be a boundaried graph with boundary $B$.
A \emph{rooted set} of $B$ is a quadruple $(\A,S_F^*,S_F,\psi)$, where
\begin{itemize}
\item[$\bullet$] $S_F \subseteq S_F^*$ are both subsets of $V(F)$;
\item[$\bullet$] $\A$ is a non-empty subset of $B$; and
\item[$\bullet$]  $\psi: \A \to S_F$ is a bijective mapping assigning vertices of $S_F$ to the vertices in $\A$.
\end{itemize}

We also define a \emph{potential subgraph model} of $F$ in $G$ \emph{matching} with $(\A,S_F^*,S_F,\psi)$ as a partial mapping $\Phi$, that assigns
to every vertex $v \in S_F$ a vertex $\Phi(v) \in \A$ such that $\psi(\Phi(v))=v$, and
to every vertex $v \in S_F^*$ a vertex $\Phi(v) \in V( G)$ such that for all $u,v \in S_F^*$ if $\{u,v\} \in E(F)$ then, $\{\Phi(u), \Phi(v)\} \in E(G)$. Moreover, for every $v \in S_F^* \setminus S_F$, it holds that $\Phi(v) \in V(G) \setminus B$.
\end{definition}

Intuitively, the rooted set is a simplification of the rooted packing defined in Section \ref{sec: FPack}. The collection $\cal A$ of subsets of $B$ is replaced with a subset $A$ of $B$ (since now the image of a vertex $v \in V(F)$ is a vertex of $G$). The sets $ S_F^*,S_F$ still describe the subgraph of $F$ which is realized in $G$ and its vertices that lie in $B$. The function $\psi$ plays the same role as in rooted packings: it can be viewed as the inverse of the potential subgraph model $\Phi$ restricted to $B$. Note that we do not need the function $\chi$ anymore because the edges cannot appear later (because now the image of a vertex $v \in V(F)$ is a vertex, and we are dealing with a tree decomposition).

The number of distinct rooted sets at a separator $B$ is upper-bounded by $f(t,F):= 2^t \cdot r^t \cdot 2^{2r}$, where $t \geq |B|$ and $r= |V(F)|$.

%%%%%%%%%%%%%%%%%%%%%%%%%%%%%%%%%%%%%%%%%%%%%%%%%%%%%%%%%%%%%%%%%%%%%%%%%%%%%%%%%%%%%

Here, we only describe the encoder for \FSub. Similarly to Section~\ref{sec: rFPack}, the encoder for \lFSub is obtained by a combination of the encoder for \FSub and the one for $\ell$-\textsc{Scattered Set}. As in Section~\ref{sec: FMemb}, the encoder for \FSubM is obtained by allowing intersections in the rooted set.

\smallskip
\noindent\textbf{The encodings generator \C.}
Let $G \in \B$ with boundary $\partial (G)$ labeled with $\Lambda(G)$. The function \C maps $\Lambda(G)$ to a set  $\C(\Lambda(G))$ of encodings.
Each $R\in \C(\Lambda(G))$ is a set of at most $|\Lambda(G)|$ rooted sets $\{(A_i,S_{F_i}^*,S_{F_i} ,\psi_i) : F_i \in\mc{F} \}$, where each such rooted set encodes a potential subgraph model of $F_i \in \mc{F}$ (multiple subgraphs models of the same graph are allowed), and where the sets $A_i$ are pairwise disjoint.

\smallskip
\noindent\textbf{The language \L.}
For a packing of subgraph models $\mc{S}$, we say that $(G,\mc{S},R)$ belongs to the language $\L$ (or that $\mc{S}$ is a packing of models \emph{satisfying} $R$) if there is a packing of vertex-disjoint potential subgraph models matching with the rooted sets of $R$ in $G \setminus \bigcup_{\Phi \in \mc{S}} \Phi(F)$.

Note that we allow the entirely realized subgraph models of $\mc{S}$ to intersect $\partial (G) $ arbitrarily, but they must not intersect potential subgraph models imposed by $R$.
\newline

As in the previous sections, we need to use the relevant function \fbar. To this aim, we need to remark that, given a separator $B$ and a subgraph $G_B$, a (partial) solution naturally induces an encoding $R_B \in \C(\Lambda(G_B))$ where the rooted sets correspond to the intersection of models with $B$.

Formally, let $G$ be a $t$-boundaried graph with boundary $A$ and let $\mc{S}$ be a partial solution satisfying some $R_A \in \C(\Lambda(G))$. Let also $\mc{P}$ be the set of potential subgraph models matching with the rooted set in $R_A$. Given a separator $B$ in $G$, we define the induced encoding $R_B = \{(A_i,S_{F_i}^*,S_{F_i} ,\psi_i) : \Phi_i \in \mc{S}\cup\mc{P}\} \in \C(\Lambda(G_B))$ such that
for each (potential) subgraph model $\Phi_i \in \mc{S} \cup \mc{P}$ of $F_i \in \mc{F}$ intersecting $B$,
\begin{itemize}
\item[$\bullet$] $A_i$ contains vertices of $\Phi_i(F_i)$ in $B$;
\item[$\bullet$] $\psi_i$ maps each vertex of $A_i$ to its corresponding vertex in $F_i$; and
\item[$\bullet$] $S_{F_i}^*$ and $S_{F_i}$ correspond to the vertices of $F_i$ whose images by $\Phi$ belong to $G_B$ and $B$, respectively.
\end{itemize}
Clearly, the set of models of $\mc{S}$ entirely realized in $G_B$ is a partial solution satisfying $R_B$.

The definition of an irrelevant encoding is the same as in Section~\ref{sec:applications}.

\smallskip
\noindent\textbf{The function \fbar.}
Let $G \in \B$ with boundary $A$. We define the function
$\fbar$ as %    : \C(\rmq{t}) \rightarrow \mathbb{N} \cup \{+\infty\}$:
\begin{equation*} \label{eq: relevant f}
\fbar(G,R_A) =\
\left\{\begin{array}{lll}
  & -\infty,  & \text{if } \f(G,R_A) + t < \\
  &           & ~~~\max \{\f(G,R): R \in \C(\Lambda(G)) \} \\
  &           & ~~~\text{or if } R_A \text{ is irrelevant for } \fbar. \\
%  &           & \text{or if \rmq{there exists} some $\mc{S}$ such that $(G,\mc{S},R_A)\in \L$,} \\
%  &           & \text{some separator $B$ of $G$ and some subgraph $G_B$} \\
%  &           & \text{such that $\mc{S}$ induce $R_B$ and }
%                \f(G_B,R_B) = -\infty  \\
  & \f(G,R),  & \mbox{otherwise}.\\
\end{array}\right.
\end{equation*}
%  &       & \text{or if \rmq{there exists} some } \mc{S} \text{ such that } (G,\mc{S},R_A) \in \L \\
%  &       & \text{ some separator } B \text{ of } G \text{ and some subgraph } G_B \text{ such that } \text{ such that } \mc{S} \text{ induce } R_B \\
%  &       & \text{and } \f(G_B,R_B) = +\infty \text{ where } \\\

In the above equation, \f is the natural maximization function, that is $\f(G,R)$ is the maximal number of (entirely realized) subgraph models in $G$ which do not intersect potential subgraph models imposed by $R$. Formally,
\begin{equation*}\label{eq:fEmin}
\f(G,R) \ = \ \max \{k  :  \exists \mc{S}, |\mc{S}| \geq k, (G,\mc{S},R) \in \L\}.
\end{equation*}

\smallskip
\noindent\textbf{The size of \E.}
Recall that $f(t,F):= 2^t \cdot r^t \cdot 2^{2r}$ is the number of rooted sets for a subgraph $F$ of size $r$ on a boundary of size $t$. Our encoder contains at most $t$ vertex-disjoint rooted sets, for  subgraphs of size at most  $r := \max_{F \in \mc{F}} |V(F)|$ and such that the sum of their boundary size is at most $t$. Hence we can bound the size of the encoder as
\begin{equation*}%\label{eq:sizeEncoderFSub}
s_{\E}(t) \ \leq \ \big(\sum_{j\in J} 2^{j} \cdot r^j \cdot 2^{2r}\big)
            \ \leq \  t \cdot 2^{t} \cdot r^t \cdot 2^{2r} .
\end{equation*}

Note that the encoder for \lFSub generates couples of encodings for \FSub and $\ell$-\Scat, and therefore the size of the encoder can be bounded as
\begin{equation*}
s_{\El}(t) \ \leq \ s_{\E}(t) \cdot (\ell+2) ^ {t^2}.
\end{equation*}

Finally, note that the encoder for \FSubM contains at most $\ell t$ rooted sets on a boundary of size $t$, and thus the size of the encoder can be bounded as
\begin{equation*}%\label{eq:sizeEncoderFSub}
s_{\Em}(t) % \ \leq \ (\sum_{j\in J} 2^{j} \cdot r^j \cdot 2^{r^2})
            \ \leq \  \ell t \cdot 2^{t} \cdot r^t \cdot 2^{2r}.
\end{equation*}

Similarly to Fact \ref{fait: rp}, the following fact claims that rooted sets allow us to glue and unglue boundaried graphs, preserving the existence of subgraphs. We omit the proof as it is very similar to the one of  Fact \ref{fait: rp}.

\begin{fact} \label{fait: rs}
Let $G \in \B$ with boundary $A$, let $\Phi$ be a subgraph model (resp. a potential subgraph model matching with a rooted set defined on $A$) of a graph $F$ in $G$, let $B$ be a separator of $G$, and let $G_B \in \B$ be as in Definition \ref{defi:DP-friend}. Let $(A, S_{F}^*,S_{F} ,\psi)$ be the rooted set induced by $\Phi$ (as defined above). Let $G_B' \in \B$ with boundary $B$ and let $G'$ be the graph obtained by replacing $G_B$ with $G_B'$.
If $G_B'$ has a potential subgraph model $\Phi_B'$ matching with $(A, S_{F}^*,S_{F} ,\psi)$, then $G'$ has a subgraph model (resp. a potential subgraph model) of $F$.
\end{fact}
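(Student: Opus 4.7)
The plan is to mirror the construction used in Fact~\ref{fait: rp}, adapted to the simpler setting of subgraph models: since each vertex of $F$ is now mapped to a single vertex of the host graph rather than to a non-empty connected subgraph, no connectivity needs to be re-established across the separator, and there is no function $\chi$ tracking partially realized edges. I will explicitly splice $\Phi$ (outside $G_B$) with $\Phi_B'$ (inside $G_B'$) to produce the desired (potential) subgraph model $\Phi'$ of $F$ in $G'$.

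Concretely, using the rooted set $(A, S_F^*, S_F, \psi)$ induced by $\Phi$ at $B$, I will define $\Phi'$ case-by-case on $V(F)$. For $v \in V(F) \setminus S_F^*$, set $\Phi'(v) := \Phi(v)$, which by the definition of $S_F^*$ satisfies $\Phi(v) \in V(G) \setminus V(G_B)$ and therefore survives in $G'$. For $v \in S_F^* \setminus S_F$, set $\Phi'(v) := \Phi_B'(v) \in V(G_B') \setminus B$. For $v \in S_F$ the two prescriptions coincide, since the matching condition for the rooted set forces $\Phi_B'(v) = \psi^{-1}(v) = \Phi(v) \in B$, so $\Phi'(v) := \Phi(v) = \Phi_B'(v)$ is well-defined.

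Verification amounts to checking the two defining properties of Definition~\ref{defi: subG model}. Injectivity follows because the three regions $V(H) \setminus B$, $B$, and $V(G_B') \setminus B$ are pairwise disjoint in $G'$, and both $\Phi$ and $\Phi_B'$ are injective on their respective domains. For edge preservation, given $\{u,v\} \in E(F)$, the key observation is that $B$ separates $V(H) \setminus B$ from $V(G_B) \setminus B$ in $G$; hence, since $\{\Phi(u), \Phi(v)\} \in E(G)$, it is impossible to have one endpoint mapped to $V(H) \setminus B$ and the other to $V(G_B) \setminus B$. Thus only three cases arise: both endpoints in $V(F) \setminus S_F^*$ (edge inherited from $H \subseteq G'$), both in $S_F^*$ (edge supplied by $\Phi_B'$ in $G_B'$), or one in $V(F) \setminus S_F^*$ and one in $S_F$ (edge lives in $H$, with $\Phi'$ agreeing with $\Phi$ at both endpoints).

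For the parenthetical (potential subgraph model) variant, the same construction works verbatim, and I additionally have to verify that $\Phi'$ still matches the original rooted set defined on the outer boundary $A$; this is immediate from the separator condition $A \cap V(G_B) \subseteq B$ of Definition~\ref{defi:DP-friend}, together with the fact that $\Phi'$ coincides with $\Phi$ on both $V(H) \setminus B$ and $B$. I do not anticipate any real obstacle: this statement is a strict simplification of Fact~\ref{fait: rp}, as the absence of the ``connecting subgraphs across $B$'' step removes the only delicate part of the earlier argument.
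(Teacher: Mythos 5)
Your proof is correct and follows exactly the approach the paper intends: the authors omit the proof of Fact~\ref{fait: rs}, explicitly stating that it is ``very similar to the one of Fact~\ref{fait: rp},'' and your argument is precisely that adaptation, correctly simplified since vertex-images are single vertices (no connectivity to reattach, no $\chi$ to track unrealized edges). The case analysis on edges, including ruling out the mixed case via the separator, and the observation that $\Phi$ and $\Phi_B'$ must agree on $S_F$ because both equal $\psi^{-1}$, are all accurate.
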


We now have to prove that the encoders \E, \El, \Em are confined and DP-friendly. The proofs are very similar to the proof of Lemma \ref{lem: Pack DPfriend}; the proofs for \El and \Em have to be adapted following Sections \ref{sec: rFPack} and \ref{sec: FMemb}, respectively. This seems natural as the encoder \E is defined with rooted sets, which are simplifications of rooted packings.

\begin{lemma} \label{lem: Sub DPfriend}
The encoders \E, \El, and \Em are $g$-confined for $g(t)=t$, $g(t)=2t$, and $g(t)=t$, respectively. They are respectively a $\textsc{c}\mc{F}\textsc{SP}$-encoder, a $\textsc{c}\ell\mc{F}\textsc{SP}$-encoder, and a $\textsc{c}\mc{F}\textsc{SP}\ell\sc{M}$-encoder. Furthermore, if \G is an arbitrary class of graphs, then the equivalence relations \eqsG\E, \eqsG\El, and \eqsG\Em are DP-friendly.
\end{lemma}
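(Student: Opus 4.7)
The plan is to mirror closely the three template proofs already established (Lemma \ref{lem: Pack DPfriend}, Lemma \ref{lem: rPack DPfriend}, and Lemma \ref{lem: Memb DPfriend}), substituting rooted packings by rooted sets and invoking Fact \ref{fait: rs} in place of Fact \ref{fait: rp}. The three encoders \E, \El, \Em mimic structurally the ones for \FPack, \rFPack, \FMemb respectively, so each of the three claims can be treated in parallel, with small extra arguments for the distance constraint of \El and the membership constraint of \Em.

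First, I would verify that each encoder is a valid $\Pi$-encoder for its corresponding problem. On a $0$-boundaried graph $G$, the set $\C(\emptyset)$ is a singleton $\{R_\emptyset\}$ and by the definition of \L, any $\mc{S}$ with $(G,\mc{S},R_\emptyset)\in \L$ is exactly a packing of the appropriate kind (vertex-disjoint, pairwise at distance $\geq \ell$, or with $\ell$-membership). It remains to exclude $\fbar(G,R_\emptyset)=-\infty$. Following the argument of Lemma \ref{lem: Pack DPfriend}, one proceeds recursively: if an optimal packing $\mc{S}$ of size $f^\Pi(G)$ induced an encoding $R_B$ with $\fbar(G_B,R_B)=-\infty$ at some separator $B$ of size at most $t$, then either (i) $R_B$ itself is truncated by the slack condition, which can be ruled out by swapping the at most $t$ (respectively $2t$ for \El) subgraphs crossing $B$ for a more efficient collection inside $G_B$, contradicting the optimality of $\mc{S}$; or (ii) some deeper separator forces irrelevance, to which the same argument applies. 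The slack $g(t)=t$ is dictated by the at most $t$ subgraphs that intersect $B$; for \El the slack becomes $2t$ to additionally absorb the subgraphs lying inside the distance-$\ell/2$ neighborhood of $B$, exactly as in Lemma \ref{lem: rPack DPfriend}. The $g$-confinement is then immediate from the definition of $\fbar$.

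Second, for DP-friendliness, by Fact \ref{fait:equiv} it suffices to prove $G \eqs\E G'$ whenever $G = H \oplus G_B$, $G' = H \oplus G_B'$, and $G_B \eqs\E G_B'$. Fix an encoding $R_A$ on the boundary $A$ and assume $\fbar(G,R_A)\neq -\infty$. Take an optimum packing $\mc{S}= M\cup M_B\cup M_H$ satisfying $R_A$, where $M$ lies inside $G_B$, $M_H$ inside $H\setminus B$, and $M_B$ intersects both; let $\mc{P}$ be the potential subgraph models matching $R_A$, and let $R_B$ be the encoding induced at $B$ by $\mc{S}\cup \mc{P}$. By definition of \fbar, $\fbar(G_B,R_B)\neq -\infty$, and by optimality $|M|=\fbar(G_B,R_B)$. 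Let $M'$ realize $\fbar(G_B',R_B)=\fbar(G_B,R_B)+\D\E(G_B,G_B')$. Using Fact \ref{fait: rs}, I would glue the traces in $H$ of the crossing subgraph models in $M_B$ with subgraph models in $G_B'$ matching $R_B$ to produce a new set $M_B'$ of subgraph models, and similarly glue the traces in $H$ of $\mc{P}$ with potential subgraph models in $G_B'$ matching $R_B$ to produce $\mc{P}'$. The set $\mc{S}'=M_H\cup M'\cup M_B'$ is then a packing of subgraph models satisfying $R_A$ of size exactly $\fbar(G,R_A)+\D\E(G_B,G_B')$; vertex-disjointness follows from the partition, from $M'$ satisfying $R_B$, and from Fact \ref{fait: rs}. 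The symmetric argument yields the reverse inequality, giving $\D\E(G,G')=\D\E(G_B,G_B')$. If $\fbar(G,R_A)=-\infty$, then $\fbar(G',R_A)=-\infty$ follows by symmetry.

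Third, to upgrade the argument to \El and \Em, I would add the problem-specific check. For \El, I mimic the last paragraph of Lemma \ref{lem: rPack DPfriend}: any shortest path between two (potential) subgraph models in $\mc{S}'\cup \mc{P}'$ decomposes into maximal subpaths in $H$ (unaltered) and maximal subpaths in $G_B'$, whose lengths dominate the boundary-to-boundary or boundary-to-solution distances recorded in the distance part of $R_B$; thus distances of at least $\ell$ are preserved under the replacement. For \Em, the $\ell$-membership constraint is local at each vertex and therefore automatically preserved: vertices inside $H\setminus B$ inherit their multiplicity from $\mc{S}$, vertices inside $G_B'\setminus B$ are handled by $M'$ satisfying $R_B$ which encodes the local multiplicities on $B$, and vertices of $B$ match by construction since the encoder records for each boundary vertex the number of potential/realized subgraphs passing through it.

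The principal obstacle, as in Lemma \ref{lem: Pack DPfriend}, is bookkeeping around the induced encoding: one must check that the relevance/irrelevance of $R_A$ propagates correctly through the replacement, i.e.\ that no deeper separator in $G'$ certifies irrelevance of the encoding induced by $\mc{S}'\cup\mc{P}'$. This is handled by the fact that $G_B\eqs\E G_B'$ is in particular an equivalence under $\D\E$, so irrelevance of any induced encoding in $G'$ would pull back to irrelevance of an induced encoding in $G$, contradicting $\fbar(G,R_A)\neq -\infty$; the same recursive argument used to discard (ii) above applies here.
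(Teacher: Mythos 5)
Your proposal is correct and follows essentially the same approach as the paper, which omits the proof of this lemma and simply instructs the reader to repeat the arguments of Lemma~\ref{lem: Pack DPfriend} (adapted as in Sections~\ref{sec: rFPack} and~\ref{sec: FMemb} for \El and \Em) with rooted sets replacing rooted packings and Fact~\ref{fait: rs} replacing Fact~\ref{fait: rp}. The details you supply — the $\Pi$-encoder check on $0$-boundaried graphs, the slack argument giving $g$-confinement, the $M\cup M_B\cup M_H$ decomposition with gluing via Fact~\ref{fait: rs}, the path-decomposition argument for the distance constraint in \El, and the locality observation for \Em — are precisely the adaptations the paper intends.
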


\subsection{A linear kernel for \FSub} \label{ssec: FSub kernel}

We are now ready to provide a linear kernel for \textsc{Connected} \FSub, \textsc{Connected} \lFSub, and \textsc{Connected} ${\cal F}${\sc -Subgraph-Packing with} 
{\sc $\ell$-membership}.

%
%\FSubM.

\begin{theorem} \label{theo: Kernel FSub}

Let $\mc{F}$ be a finite family of connected graphs with diameter at most $d$, let $H$ be an $h$-vertex graph, and let \G be the class of $H$-minor-free graphs. Then \textsc{Connected} \FSub, \textsc{Connected} \lFSub, and \textsc{Connected $\mc{F}$-Subgraph Packing with $\ell$-Membership} admit constructive kernels of size $O(k)$, where the constant hidden in the ``$O$'' notation depends on $h$, $d$, and $\ell$.
%$f(d,h)\cdot k$\todo{depend de $\ell$}, where $f$ is an explicit function depending only on $r$ and $h$, defined in Equation \eqref{eq: kernel FPack}.
\end{theorem}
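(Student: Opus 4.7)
The plan is to mirror the structure of the proofs of Theorems 5 and 6, by combining the linear protrusion decomposition of Lemma 14 with the confinement and DP-friendliness established in Lemma 15, and finally invoking Corollary 1. I will treat the three problems uniformly, pointing out where they differ only in the size of the encoder.

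First I would run, on input $(G,k)$, the algorithm of Lemma 14. Either it certifies that $(G,k)$ is a \YES-instance of the problem under consideration (in which case we output a trivial kernel), or it produces in polynomial time an $(\alpha k, t_0)$-protrusion decomposition of $G$ where, for some explicit constants, $\alpha = h^{2}\cdot 2^{O(h\log h)}\cdot (2d+\ell)^{3/2}\cdot \tau_H^{3}\cdot f_c(h)^{3}$ and $t_0 = 2t+h$, with $t = O((2d+\ell)^{3/2}\cdot \tau_H^{3}\cdot f_c(h)^{3})$. Note that the same decomposition serves all three variants, since a \NO-instance of any of them reduces (through Rule 1 and Proposition applied in Section 7.1) to a \YES-instance of \textsc{$(2d+\ell)$-Dominating Set} for the appropriate value of $\ell$ (in particular, $\ell=1$ for \FSub and \FSubM).

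Next I would invoke Lemma 15 for the relevant encoder among \E, \El, \Em: in each case the encoder is a $g$-confined encoder for the corresponding problem (with $g(t)=t$, $g(t)=2t$, $g(t)=t$ respectively), and the equivalence relation \eqsG{\cdot} is DP-friendly on the class of $H$-minor-free graphs. The sizes satisfy $s_{\E}(t) \leq t\cdot 2^{t}\cdot r^{t}\cdot 2^{2r}$, $s_{\El}(t)\leq s_{\E}(t)\cdot(\ell+2)^{t^{2}}$, and $s_{\Em}(t)\leq \ell t\cdot 2^{t}\cdot r^{t}\cdot 2^{2r}$, as computed in Section 7.2. Feeding the appropriate encoder and $g$, together with the protrusion decomposition above, into Corollary 1 yields a linear kernel of size at most
\begin{equation*}
(\alpha_{H}\cdot t)\cdot (b(\E_{\bullet},g,t,\mathcal{G})+1)\cdot k,
\end{equation*}
where $b(\E_{\bullet},g,t,\mathcal{G})$ is the bound provided by Lemma 4, which depends only on $h$, $d$, $r$, and $\ell$ (and is an explicit double-exponential function of the parameters of the encoder). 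Since $r\leq d+1$ for any connected graph in $\mathcal{F}$, the kernel size can be written as $O(k)$ with the hidden constant depending only on $h$, $d$, and $\ell$, as claimed.

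The main technical point to check, and the only place where the three problems genuinely diverge, is Lemma 15 itself: the argument for \E proceeds verbatim as in Lemma 6 (with Fact 3 replaced by Fact 4, which certifies that rooted sets glue and unglue correctly); the argument for \El requires additionally verifying, as in the proof of Lemma 8 for \rFPack, that the distance constraints are preserved by the gluing (using that the pairwise distances between boundary vertices and between boundary vertices and partial solutions are recorded in $R_S$); and the argument for \Em follows the proof of Lemma 11, noting that the $\ell$-membership constraint is a local constraint that does not have to be transmitted through the separator. I would therefore expect no essentially new difficulty beyond bookkeeping: the hardest item is assembling the explicit constants going into $b$ through the tower of Lemmas 1, 2, 3, and 4, but this is a mechanical computation once the encoder sizes above are in place.
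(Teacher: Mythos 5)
Your proof follows essentially the same route as the paper's (very short) proof: obtain a linear protrusion decomposition from Lemma~\ref{lem: FSub prot decompo}, appeal to Lemma~\ref{lem: Sub DPfriend} for the encoders' confinement and DP-friendliness, and conclude via Corollary~\ref{coro:main} using the encoder-size bounds from Section~\ref{ssec: FSub encod}.

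The one concrete slip is the claim that $r \leq d+1$ for any connected graph: this is false (a star $K_{1,n}$ has diameter $2$ but $n+1$ vertices, and $K_n$ has diameter $1$). The encoder sizes you correctly quote involve $r = \max_{F \in \mc{F}} |V(F)|$, so the hidden constant in the kernel bound genuinely depends on $r$ as well as on $h$, $d$, and $\ell$; only the treewidth bound $t$ from Corollary~\ref{coro: tw modul Dom} is controlled by $d$ alone (through the diameter argument of Proposition~\ref{prop: }). This does not undermine the $O(k)$ conclusion, since $\mc{F}$ is a fixed finite family and $r$ is therefore a constant, but you cannot justify the theorem's formulation ``depends on $h$, $d$, and $\ell$'' by absorbing $r$ into $d$; the honest statement is that the constant depends on $h$, $\ell$, $d$, and $r$ (equivalently, on $h$, $\ell$, and $\mc{F}$).
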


The proof is similar to the ones in the previous sections. Using the protrusion decomposition given by Lemma \ref{lem: FSub prot decompo} and the encoders described in Section \ref{ssec: FSub encod}, we have all the material to apply Corollary \ref{coro:main}. The size of the kernel differs from the previous sections due to the size of the encoders and due to the bound on the treewidth of protrusions given by Lemma \ref{lem: FSub prot decompo}.

\smallskip

To conclude, we would like to mention that Romero and L{\'{o}}pez{-}Ortiz~\cite{RomeroL14-WALCOM} introduced another problem allowing intersection of subgraph models, called \textsc{$\mc{F}$-(Subgraph)-Packing with $\ell$-Overlap}. In this problem, also studied in~\cite{FernauLR15,RomeroL14-CSR}, a subgraph model can intersect any number of other models, but they are allowed to pairwise intersect on at most $\ell$ vertices. It is easier to perform dynamic programming on the membership version than on the overlap version, since the intersection constraint is local for the first one (just on vertices) but global for the second one (on pairs of models). However, we think that it is possible to define an encoder (with all the required properties) for \textsc{$\mc{F}$-(Subgraph)-Packing with $\ell$-Overlaps} using rooted sets and vectors of integers counting the overlaps (similarly to \lFSub). This would imply the existence of a linear kernel for the \textsc{$\mc{F}$-(Subgraph)-Packing with $\ell$-Overlap} problem on sparse graphs. We leave it for further research.

%\vspace{-.1cm}
\section{Conclusions and further research}
\label{sec:conclusions}
%\vspace{-.05cm}

In this article we generalized the framework introduced in~\cite{KviaDP} to deal with packing-certifiable problems. Our main result can be seen as a  {\sl meta-theorem}, in the sense that as far a particular problem satisfies the generic conditions stated in Corollary~\ref{coro:main}, an explicit linear kernel on the corresponding graph class follows. Nevertheless, in order to verify these generic conditions and, in particular, to verify that the equivalence relation associated with an encoder is DP-friendly, the proofs are usually quite technical
%(as demonstrated in the appendices of this article)
and one first needs to get familiar with several definitions. We think that it may be possible to simplify the general methodology, thus improving its applicability.

Concerning the explicit bounds derived from our results, one natural direction is to reduce them as much as possible. These bounds depend on a number of intermediate results that we use along the way and improving any of them would result in an improvement on the overall kernel sizes. It is worth insisting here that some of the bounds involve the (currently) {\sl non-explicit} function $f_c$ defined in Proposition~\ref{prop:tw-contraction}, which depends exclusively on the considered graph class (and not on each particular problem).  In order to find explicit bounds for this function $f_c$, we leave as future work using the linear-time deterministic protrusion replacer recently introduced by Fomin \emph{et al}.~\cite{FLM+15}, partially inspired from~\cite{KviaDP}.

%It seems plausible to push further the machinery that we developed for dealing with even more general problems, whose certificates concern objects more general than packings. Finding interesting problems of this shape looks like an interesting avenue for further research.

\vspace{.3cm}
\noindent\textbf{Acknowledgement.} We would like to thank Archontia C. Giannopoulou  for insightful discussions about the Erd\H{o}s-P\'{o}sa property for scattered planar minors.

 %and the anonymous referees of an earlier version of this paper for helpful remarks that improved the presentation of the manuscript.}

%\newpage
%\vspace{-.15cm}

\bibliographystyle{abbrv}
\bibliography{linearkernels}

\newpage

\begin{appendix}
% !TEX root = ../K-DP-pack-arXiv_NEW.tex
\section{Deferred proofs in Section~\ref{sec:generic}}
\label{ap:framework}

\subsection{Proof of Lemma~\ref{lem:nb class}}
\label{ap:lem:nb class}

Let us first show that the equivalence relation $\eqs\E$ has finite index. Let $I \subseteq \{1,\ldots,t\}$. Since we assume that  \E is $g$-confined, we have that for any $G \in \B$ with $\Lambda(G)=I$, the function $\f(G,\ \cdot\ )$ can take at most $g(t)+2$ distinct values ($g(t)+1$ finite values and possibly the value $-\infty$).
Therefore, it follows that the number of equivalence classes of \eqs\E containing all graphs $G \in \B$ with $\Lambda(G)=I$ is at most  ${(g(t)+2)^{|\C(I)|}}$. As the number of subsets of  $\{1,\ldots,t\}$ is $2^t$,
%\todo{eviter ca en remplacant $I$ par $t$},
we deduce that the overall number of equivalence classes of \eqs\E is at most ${(g(t)+2)^{s_{\E}(t)}} \cdot 2^t$. Finally, since the equivalence relation \eqsG\E is the Cartesian product of the equivalence relations \eqs\E and \eq\G, the result follows from the fact that \G can be expressed in MSO logic.

\subsection{Proof of Fact~\ref{fait:equiv}}
\label{ap:fait:equiv}

Let $G = G^- \oplus G_B$  and let $G' = G^- \oplus G_B'$.
Assume that $G \eqs\E G'$. In order to deduce that $G \eqsG\E G'$, it suffices to prove that $G \eq\G G'$. Let $H\in \B$. We need to show that $G \oplus H \in \G$ if and only if $G' \oplus H \in \G$.
We have that $G \oplus H = (G_B \oplus G^-) \oplus H = G_B \oplus (G^- \oplus H)$, and similarly for $G'$. Since $G_B \eq\G G_B'$, it follows that
$G \oplus H = G_B \oplus (G^- \oplus H) \in \G$ if and only if $G_B \oplus (G^- \oplus H) = G \oplus H \in \G$.

\subsection{Proof of Lemma~\ref{lem:refine eq}}
\label{ap:lem:refine eq}

Let $\E = (\C,\L,\f)$ be a $\Pi$-encoder and let $G_1,G_2 \in \B$ such that $G_1 \eq\E G_2$.
We need to prove that for any $H\in \B$ and any integer $k$, $(G_1 \oplus H, k ) \in \Pi$ if and only if $(G_2 \oplus H, k + \D\E(G_1,G_2)) \in \Pi$.

Suppose that  $(G_1 \oplus H, k ) \in \Pi$ (by symmetry the same arguments apply starting with $G_2$). Since $G_1 \oplus H$ is a $0$-boundaried graph and \E is a $\Pi$-encoder, we have that
\begin{equation}\label{eq:refine eq}
\f (G_1 \oplus H,R_{\emptyset}) = f^{\Pi}(G_1 \oplus H) \geq k.
\end{equation}
As \eqsG\E is DP-friendly and $G_1 \eqsG\E G_2$, it follows that  $(G_1 \oplus H) \eqsG\E (G_2 \oplus H)$ and that
 $\D\E(G_1 \oplus H,G_2 \oplus H) = \D\E(G_1,G_2)$. Since $G_2 \oplus H$ is also a $0$-boundaried graph, the latter property and Equation \eqref{eq:refine eq} imply that
 \begin{equation}\label{eq:refine eq 2}
 \f(G_2 \oplus H,R_\emptyset) = \f(G_1 \oplus H,R_\emptyset) + \D\E(G_1,G_2) \geq k + \D\E(G_1,G_2).
 \end{equation}
Since \E is a $\Pi$-encoder, $f^\Pi(G_2 \oplus H) = \f(G_2 \oplus H,R_\emptyset)$, and from Equation \eqref{eq:refine eq 2} it follows that $(G_2 \oplus H, k + \D\E(G_1,G_2)) \in \Pi$.

\subsection{Proof of Lemma~\ref{lem:progres size}}
\label{ap:lem:progres size}

Let $\mathfrak{C}$ be an arbitrary equivalence class of \eqG\E, and let $G_1,G_2 \in \mathfrak{C}$. 
Let us first argue that $\mathfrak{C}$ contains some progressive representative. Since $\D\E(G_1,G_2) = \f(G_1,R) -\f(G_2,R)$ for every encoding $R$ such that $\f(G_1,R),\f(G_2,R) \neq - \infty$, $G \in \mathfrak{C}$ is progressive if $\f(G,R)$ is minimal in $\f(\mathfrak{C},R)= \{f(G,R) : G \in \mathfrak{C} \}$ for every encoding $R$ (including those for which the value is $- \infty$). Since $\f(\mathfrak{C},R)$ is a subset of $\mathbb{N} \cup \{ - \infty\} $, it necessarily has a minimal element, hence there is a progressive representative in $\mathfrak{C}$ (in other words, the order defined by $G_1
\preccurlyeq G_2$ if $\D\E(G_1,G_2)\leq 0$ is well-founded).

Now let $G \in \mc{G}$ be a progressive representative of $\mathfrak{C}$ with minimum number of vertices. We claim that $G$ has size at most $2^{r(\mathcal{E},g,t,\mathcal{G})+1} \cdot t$ (we would like to stress that at this stage we only need to care about the {\sl existence} of such representative $G$, and not about how to {\sl compute} it). Let $(T,\mc{X})$ be a boundaried nice tree decomposition of $G$ of width at most $t-1$ such that $\partial(G)$ is contained in the root-bag (such a nice tree decomposition exists by \cite{Klo94}).

We first claim that for any node $x$ of $T$, the graph $G_x$ is a progressive representative of its equivalence class with respect to \eqG\E, namely $\mathfrak{C'}$. Indeed, assume for contradiction that $G_x$ is not progressive, and therefore we know that there exists $G_x' \in \mathfrak{C'}$ such that $\D\E(G_x',G_x) < 0$. Let $G'$ be the graph obtained from $G$ by replacing $G_x$ with $G_x'$. Since \eqsG\E is DP-friendly, it follows that  $G \eqG\E G'$ and that $\D\E(G',G) = \D\E(G_x',G_x) < 0$, contradicting the fact that $G$ is a progressive representative of the equivalence class $\mathfrak{C}$.

We now claim that for any two nodes $x,y \in V(T)$ lying on a path from the root to a leaf of $T$, it holds that $G_x \neqG\E G_y$. Indeed, assume for contradiction that there are two nodes $x,y \in V(T)$ lying on a path from the root to a leaf of $T$ such that $G_x \eqG\E G_y$. Let $\mathfrak{C'}$ be the equivalence class of $G_x$ and $G_y$ with respect to \eqG\E. By the previous claim, it follows that both $G_x$ and $G_y$ are progressive representatives of $\mathfrak{C'}$, and therefore it holds that $\D\E(G_y,G_x) = 0$. Suppose without loss of generality %\todo{a quoi sert ce wlog ? est ce que xet y doivent etre distinct ?}
that $G_y \subsetneq G_x$ (that is, $G_y$ is a strict subgraph of $G_x$), and let $G'$ be the graph obtained from $G$  by replacing $G_x$ with $G_y$. Again, since \eqsG\E is DP-friendly, it follows that  $G \eqG\E G'$ and that $\D\E(G',G) = \D\E(G_y,G_x) = 0$. Therefore, $G'$ is a progressive representative of $\mathfrak{C}$ with $|V(G')| < |V(G)|$, contradicting the minimality of $|V(G)|$.

Finally, since for any two nodes $x,y \in V(T)$ lying on a path from the root to a leaf of $T$ we have that $G_x \neqG\E G_y$, it follows that the height of $T$ is at most the number of equivalence classes of \eqG\E, which is at most $r( \E,g,t,\G)$ by Lemma \ref{lem:nb class}. Since $T$ is a binary tree, we have that $|V(T)| \leq 2^{r(\mathcal{E},g,t,\mathcal{G})+1} - 1$. Finally, since $|V(G)| \leq |V(T)| \cdot t$, it follows that $|V(G)| \leq 2^{r(\mathcal{E},g,t,\mathcal{G})+1} \cdot t$, as we wanted to prove.

\subsection{Proof of Lemma~\ref{lem:comput repres}}
\label{ap:lem:comput repres}

Let $\E = (\C,\L,\f)$ be the given encoder. We start by generating a repository $\mathfrak{R}$ containing all the graphs in \F with at most $b+1$ vertices. Such a set of graphs, as well as a boundaried nice tree decomposition of width at most $t-1$ of each of them, can be clearly generated in time depending only on $b$ and $t$. By assumption, the size of a smallest progressive representative of any equivalence class of \eqG\E is at most $b$, so  $\mathfrak{R}$ contains a progressive representative of any equivalence class of \eqG\E with at most $b$ vertices.
We now partition the graphs in $\mathfrak{R}$ into equivalence classes of \eqG\E as follows: for each graph $G \in \mathfrak{R}$ and each encoding $R \in \C(\Lambda(G))$, as \L and \f are computable, we can compute the value $\f(G,R)$ in time depending only on $\E,g,t,$ and $b$. Therefore, for any two graphs $G_1,G_2 \in \mathfrak{R}$, we can decide in time depending only on $\E,g,t,b$, and \G whether $G_1 \eqG\E G_2$, and if this is the case, we can compute the transposition constant $\D\E(G_1,G_2)$ within the same running time.

Given a $t$-protrusion $Y$ on $n$ vertices with boundary $\partial(Y)$, we first compute a boundaried nice tree decomposition $(T,\mc{X},r)$ of $Y$ in time $f(t) \cdot n$, by using the linear-time algorithm of Bodlaender~\cite{Bod96,Klo94}.
Such a $t$-protrusion $Y$ equipped with a tree decomposition can be naturally seen as a $t$-boundaried graph by assigning distinct labels from $\{1,\ldots,t\}$ to the vertices in the root-bag.
We can assume that $\Lambda(Y)=\{1,\ldots,t\}$. Note that the labels can be transferred to the vertices in all the bags of $(T,\mc{X},r)$, by performing a standard shifting procedure when a vertex is introduced or removed from the nice tree decomposition \cite{BFL+09}. Therefore, each node $x \in V(T)$ defines in a natural way a $t$-protrusion  $Y_x \subseteq Y$ with its associated boundaried nice tree decomposition, with all the boundary vertices contained in the root bag. Let us now proceed to the description of the replacement algorithm.

We process the bags of $(T,\mc{X})$ in a bottom-up way until we encounter the first node $x$ in $V(T)$ such that $|V(Y_x)|=b+1$ (note that as $(T,\mc{X})$ is a nice tree decomposition, when processing the bags in a bottom-up way, at most one new vertex is introduced at every step, and recall that $b \geq t$, hence such an $x$ exists). We compute the equivalence class $\mathfrak{C}$ of $Y_x$ according to \eqG\E; this corresponds to computing the set of encodings $\C(\Lambda(Y_x)) $ and the associated values of $\f(Y_x,\cdot)$ that, by definition of an encoder, can be calculated since $\f$ is a computable function.
 %in time $h(|\Lambda(Y_x)|) \cdot b^{O(1)}$, where $h$ is some computable function. 
 As $|V(Y_x)|=b+1$, the graph $Y_x$ is contained in the repository $\mathfrak{R}$, so in constant time we can find in  $\mathfrak{R}$ a progressive representative $Y_x'$ of $\mathfrak{C}$ with at most $b$ vertices and the corresponding transposition constant $\D\E(Y_x',Y_x) \leq 0$, (the inequality holds because $Y_x'$ is progressive). Let $Z$ be the graph obtained from $Y$ by replacing $Y_x$ with $Y_x'$, so we have that $|V(Y)| < |V(Z)|$ (note that this replacement operation directly yields a boundaried nice tree decomposition of width at most $t-1$ of $Z$). Since \eqsG\E is DP-friendly, it follows that  $Y \eqG\E Z$ and that $\D\E(Z,Y) = \D\E(Y_x',Y_x) \leq 0$.

We recursively apply this replacement procedure on the resulting graph until we eventually obtain a $t$-protrusion $Y'$ with at most $b$ vertices such that $Y \eqG\E Y'$. The corresponding transposition constant $\D\E(Y',Y)$ can be easily computed by summing up all the transposition constants given by each of the performed replacements. Since each of these replacements introduces a progressive representative, we have that $\D\E(Y',Y) \leq 0$. As we can assume that the total number of nodes in a nice tree decomposition of $Y$ is $O(n)$ \cite[Lemma 13.1.2]{Klo94}, the overall running time of the algorithm is $O(n)$ (the constant hidden in the ``$O$'' notation depends indeed exclusively on $\E,g,b,\G$, and $t$).

%\subsection{Proof of Theorem~\ref{theo:main}}
%\label{ap:theo:main}

%By Lemma \ref{lem:nb class}, the number of equivalence classes of the equivalence relation \eqG\E is finite, and by Lemma \ref{lem:progres size} the size of a smallest progressive representative of any equivalence class of \eqG\E is at most $b(\E,g,t,\G)$.
%Therefore, we can apply Lemma \ref{lem:comput repres} and deduce that, in time $O(|Y|)$, we can find a $t$-protrusion $Y'$ of size at most $b(\E,g,t,\G)$ such that $Y \eqG\E Y'$, and the corresponding transposition constant $\D\E(Y',Y)$ with $\D\E(Y',Y) \leq 0$.
%Since \E is a  $\Pi$-encoder and \eqsG\E is DP-friendly, it follows from Lemma \ref{lem:refine eq} that $Y \equiv_{\Pi} Y'$ and that $\D\Pi(Y',Y) = \D\E(Y',Y) \leq 0$.
%Therefore, if we set $k' := k +\D\Pi(Y',Y)$, it follows that $(G,k)$ and $((G - (Y-\partial(Y)))\oplus Y',k')$ are indeed equivalent instances of $\Pi$ with $k' \leq k$ and $|Y'| \leq b(\E,g,t,\G)$.

%\subsection{Proof of Corollary~\ref{coro:main}}
%\label{ap:coro:main}

%For $1 \leq i \leq \ell$, we apply the polynomial-time algorithm given by Theorem \ref{theo:main} to replace each $t$-protrusion $Y_i$ with a graph %\todo{protrusion ou boundaried ?}
%$Y_i'$ of size at most $b(\E,g,t,\G)$, and to update the parameter accordingly. In this way we obtain an equivalent instance $(G',k')$ such that $G' \in  \G$, $k' \leq k$, and $|V(G')| \leq |Y_0| + \ell \cdot b(\E,g,t,\G) \leq (1+b(\E,g,t,\G))\alpha \cdot k$ .
%That is, we have obtained a kernel of the desired size.

\subsection{Proof of Theorem~\ref{theo:main}}
By Lemma \ref{lem:nb class}, the number of equivalence classes of the equivalence relation \eqG\E is finite and by Lemma \ref{lem:progres size} the size of a smallest progressive representative of any equivalence class of \eqG\E is at most $b(\E,g,t,\G)$.
Therefore, we can apply Lemma \ref{lem:comput repres} and deduce that, in time $O(|Y|)$, we can find a $t$-protrusion $Y'$ of size at most $b(\E,g,t,\G)$ such that $Y \eqG\E Y'$ and the corresponding transposition constant $\D\E(Y',Y)$ with $\D\E(Y',Y) \leq 0$.
Since \E is a  $\Pi$-encoder and \eqsG\E is DP-friendly, it follows from Lemma \ref{lem:refine eq} that $Y \equiv_{\Pi} Y'$ and that $\D\Pi(Y',Y) = \D\E(Y',Y) \leq 0$.
Therefore, if we set $k' := k +\D\Pi(Y',Y)$, it follows that $(G,k)$ and $((G - (Y-\partial(Y)))\oplus Y',k')$ are indeed equivalent instances of $\Pi$ with $k' \leq k$ and $|Y'| \leq b(\E,g,t,\G)$.

\subsection{Proof of Corollary~\ref{coro:main}}
For $1 \leq i \leq \ell$, where $\ell$ is the number of protrusions in the decomposition, we apply the polynomial-time algorithm given by Theorem \ref{theo:main} to replace each $t$-protrusion $Y_i$ with a graph %\todo{protrusion ou boundaried ?}
$Y_i'$ of size at most $b(\E,g,t,\G)$ and to update the parameter accordingly. In this way we obtain an equivalent instance $(G',k')$ such that $G' \in  \G$, $k' \leq k$ and $|V(G')| \leq |Y_0| + \ell \cdot b(\E,g,t,\G) \leq (1+b(\E,g,t,\G))\alpha \cdot k$ .

\end{appendix}

\end{document}